\newtheorem{theorem}{Theorem}
\crefname{theorem}{Theorem}{Theorems}
\newcommand{\newtheoremalias}[3]{%
  \newaliascnt{#1}{theorem}
  \newtheorem{#1}[#1]{#2}
  \aliascntresetthe{#1}
  \crefname{#1}{#2}{#3}
}
\pgfplotsset{compat=1.16} 
\crefname{algocf}{Algorithm}{Algorithms}
\date{}
\title{Algorithms for Optimizing Acyclic Queries}
\author{
  Zheng Luo \\ UCLA \\ {\small\texttt{luo@cs.ucla.edu}}
  \and
  Wim Van den Broeck \\ University of Bergen \\ {\small\texttt{wim.broeck@uib.no}}
  \and
  Guy Van den Broeck \\ UCLA \\ {\small\texttt{guyvdb@cs.ucla.edu}}
  \and
  Yisu Remy Wang \\ UCLA \\ {\small\texttt{remywang@cs.ucla.edu}}
}
\newcommand{\LCA}{\mathsf{LCA}}
\newcommand{\LA}{\mathsf{LA}}
\newcommand{\LCAE}{\lambda}
\newcommand{\defeq}{:=}  
\newcommand{\parent}{p}
\newcommand{\children}{c}
\newcommand{\siblings}{s}
\newcommand{\setof}[2]{\left\{#1 \mid #2\right\}}
\DeclareMathOperator*{\argmax}{arg\,max}
\newcommand{\buildEG}{buildEG}
\newcommand{\cut}[1]{}
\newcommand{\hyp}{H} 
\newcommand{\lin}{L} 
\newcommand{\EG}{G^\equiv} 
\newcommand{\vars}{\chi} 
\newcommand{\nodes}{\rho} 
\newcommand{\nodesEG}{\rho^\equiv} 
\newcommand{\slide}[1]{\leq_{#1}} 
\newcommand{\w}{\omega} 
\newcommand{\T}{T} 
\newcommand{\G}{G} 
\newcommand{\JTof}[1]{\mathcal{T}(#1)}
\newcommand{\depth}{\mathsf{d}}
\newcommand{\dist}{\mathsf{dist}} 
\newcommand{\MWJT}{\hat{T}}  
\newcommand{\orphan}[1]{\mathring{#1}}
\newcommand{\prefix}[1]{\mathsf{pre}({#1})}
\newcommand{\x}{x}   
\newcommand{\X}{X}   
\newcommand{\rel}{r} 
\newcommand{\R}{R}   
\newcommand{\e}{e}   
\newcommand{\E}{E}   
\newcommand{\Power}{\mathcal{P}} 
\newcommand{\bigO}{\mathcal{O}}
\newcommand{\func}{\mathsf{f}} 
\newcommand{\spath}{\mathsf{P}}   
\newcommand{\id}{\texttt{id}}   
\newcommand{\review}[1]{} 
\newcommand{\revA}[1]{{#1}} 
\newcommand{\revB}[1]{{#1}} 
\newcommand{\revC}[1]{{#1}} 
\begin{document}

\maketitle

\begin{abstract}
Most research on query optimization has centered on binary join algorithms
like hash join and sort-merge join. However, recent years have seen growing
interest in theoretically optimal algorithms, notably Yannakakis' algorithm.
These algorithms rely on {\em join trees}, which differ from the operator
trees for binary joins and require new optimization techniques. We
propose three approaches to constructing join trees for acyclic queries.
First, we give an algorithm to enumerate all join trees of an
$\alpha$-acyclic query {\em by edits}
in linear time with amortized constant delay, which
forms the basis of a cost-based optimizer for acyclic joins.
Second, we show the Maximum Cardinality Search algorithm by Tarjan and Yannakakis 
 constructs the unique {\em shallowest} join tree
 for any Berge-acyclic query,
 thus enabling parallel execution of large join queries.
Finally, we prove that a simple algorithm by Hu et al.\ 
converts any connected left-deep linear plan of a $\gamma$-acyclic 
query into a join tree, allowing reuse of
optimizers developed for binary joins.
\end{abstract}

\section{Introduction}

The query optimizer sits at the heart of a database system.
It takes a query as input and generates a plan for efficient execution,
allowing the user to program declaratively without worrying about performance.
Among the many relational algebra operators,
 join has received significant attention in optimization research.
Its {\em compositional} nature allows for 
 combining information from multiple relations,
 constructing complex queries from simple ones,
 and producing an output asymptotically larger than the inputs.
The primary challenge is the {\em join ordering problem}
 to find the best arrangement of many join operations.
Most existing research has focused on binary
join algorithms such as hash join and sort-merge join, but these can
produce unnecessarily large intermediates.
Recent work has revived
interest in optimal join algorithms, notably Yannakakis’
instance-optimal algorithm~\cite{10.5555/1286831.1286840}
for acyclic queries, which runs in linear time in the input and output size,
$\bigO(|\textsf{IN}| + |\textsf{OUT}|)$. 
Its execution is guided by
{\em join trees} whose nodes are relations,
different from traditional binary join plans with 
relations at the leaves and join operators at the internal nodes.
Although the algorithm is optimal regardless of the join
tree, the choice of plan can affect practical performance.
In this paper, we study the optimization problem in the context of Yannakakis-style algorithms.

A query optimizer typically has two parts: a {\em plan generator} and 
a {\em cost model} to assess each plan. This paper
focuses on plan generation and presents three approaches:
\begin{enumerate}
    \item \revB{Given the line graph $\lin$ (\cref{def:line-graph}) of an $\alpha$-acyclic query $\hyp$,
    \revB{\cref{algo:buildEG}} 
    enumerates all join trees {\em by edits}\footnote{
    To avoid redundant work, 
    enumeration {\em by edits}~\cite{Kapoor1995} 
    outputs the difference between consecutive elements.}
    with amortized constant delay, 
    i.e., in $\bigO(|\lin| + k)$ where $k$ is the number of trees 
    generated \revB{(\cref{thm:alpha_JTE_enum})}.
    If the query is $\gamma$-acyclic, 
    \revB{\cref{algo:buildEG4gamma}} further reduces
    the total time complexity
    to $\bigO(|\hyp| + k)$ \revB{(\cref{thm:gamma_JTE_enum})}.}
    \item \revB{Given a Berge-acyclic query,
    we prove that the classic Maximum Cardinality Search algorithm
    by Tarjan and Yannakakis~\cite{tarjan1984simple}
    yields a unique {\em shallowest} join tree \revB{(\cref{thm:canonical_join_tree})}, 
    enabling parallel execution of very large queries.}
    \item \revB{Given a connected left-deep linear join plan
    for any $\gamma$-acyclic query,
    we prove that a simple algorithm by Hu et al.~\cite{hu2024treetracker}
    always converts the plan into a valid join tree \revB{(\cref{cor:bin2tree})},
    allowing reuse of existing optimizers.  }
\end{enumerate}




The rest of the paper is organized as follows:
\cref{sec:related} discusses related work;
\cref{section:preliminaries} introduces relevant concepts and notations;
\cref{section:enum} presents the join tree enumeration algorithms;
\cref{section:cjt} introduces the unique shallowest join tree, namely the canonical join tree and its construction;
\cref{section:frombinplan} discusses the conversion from binary plans to join trees;
\cref{section:conclusion} concludes the paper by pointing to avenues for future work.
Additional technical details, proofs and a notation table (\cref{tab:notation-prelims-enum}) 
can be found in the appendix.
\section{Related Work}\label{sec:related}
Join order optimization is well studied, with algorithms based on
dynamic programming (DP) from the bottom up
\cite{10.5555/1182635.1164207,ref24_selinger1979access,ref23_neumann2009query},
cost-based rewriting from the top down
\cite{ref6_dehaan2007optimal,ref9_fender2013counter}, greedy heuristics
\cite{ref2_bruno2010polynomial,ref8_fegaras1998heuristic,ref27_swami1989optimization},
and randomized search \cite{ref26_steinbrunn1997heuristic}. Since the
plan space is exponential, most methods prune it: some restrict to
left-deep plans~\cite{DBLP:journals/tods/IbarakiK84,DBLP:conf/vldb/KrishnamurthyBZ86},
while others avoid Cartesian products
\cite{10.5555/1182635.1164207,DBLP:conf/sigmod/MoerkotteN08}.
Our algorithms restrict the query plans
 to those running in linear time for acyclic queries.
In particular, while avoiding Cartesian products requires each subplan
 to form a {\em spanning tree} of the corresponding subquery's join graph,%
\footnote{The join graph of a query has a vertex for each relation
 and an edge for each pair of relations that join with each other.
 We later define this as the line graph of the query hypergraph in \cref{def:line-graph}.}
 our algorithms find {\em maximum spanning trees} of the {\em weighted} join graph.
Several algorithms for ordering binary joins are based on dynamic programming
 and tabulate shared structures among different plans~\cite{
    10.5555/1182635.1164207,DBLP:conf/sigmod/MoerkotteN08}.
This is desirable because subplans are grouped into equivalence classes,
 and the optimal plan can be constructed in a bottom-up manner.
Future work may explore constructing compact representations of join trees,
 suitable for dynamic programming.

Several recent papers have proposed practical implementations of
 Yannakakis' algorithm for acyclic queries~\cite{DBLP:journals/pvldb/BekkersNVW25,hu2024treetracker,DBLP:journals/pacmmod/WangCDYLL25,zhao2025debunking}.
For example, Zhao et al.~\cite{zhao2025debunking} find that
 different query plans perform similarly,
 thanks to the optimality of Yannakakis' algorithm.
They adopt a simple heuristic to construct the join tree
 by picking the largest input relation as the root,
 and then greedily attaching the remaining relations into the tree.
Inspired by this algorithm, we prove in~\cref{section:cjt} that for Berge-acyclic queries
 there is a unique {\em shallowest} join tree for any given root where 
 the depth of each node is minimized.
Furthermore, this tree can be constructed in linear time by 
Tarjan and Yannakakis'
 Maximum Cardinality Search algorithm~\cite{tarjan1984simple}.
Shallow trees are desirable for parallel execution, 
 where the depth of the tree determines the number of sequential steps.
Other practical implementations of Yannakakis' algorithm
 leverage existing optimizers for binary joins
 and convert a binary plan into a join tree~\cite{DBLP:journals/pvldb/BekkersNVW25,hu2024treetracker}.
In particular, Hu et al.~\cite{hu2024treetracker} find that every left-deep linear plan
 encountered in practice can be converted into a join tree
 by a simple algorithm.
This is not surprising, as we will prove in~\cref{section:frombinplan} that every connected left-deep linear plan
 of a $\gamma$-acyclic query must traverse some join tree from root to leaves.

On the theoretical side, attention has been focused on finding
 {\em (hyper-)tree decompositions} to improve
 the asymptotic complexity of query processing~\cite{DBLP:journals/pacmmod/SurianarayananMSL25,DBLP:journals/pvldb/HeY24,DBLP:journals/tods/GottlobLOP24}.
The general goal is to find a decomposition with small {\em width}
 which can be used to guide the execution of join algorithms.
Most algorithms find a single decomposition with minimum width
 to achieve the optimal asymptotic complexity~\cite{DBLP:journals/pacmmod/SurianarayananMSL25,DBLP:journals/pvldb/HeY24,DBLP:journals/tods/GottlobLOP24}.
Nevertheless, different decompositions with the same width
 may still lead to different performance in practice,
 and cost-based optimization remains crucial.
For this, Carmeli et al.~\cite{DBLP:conf/pods/CarmeliKK17} propose an algorithm to enumerate
 tree decompositions with polynomial delay.
In this paper, we focus on acyclic queries and their join trees,
 which are precisely the decompositions with width 1.
Our enumeration algorithm can generate all join trees by edits
 with amortized constant delay.

\section{Preliminaries}\label{section:preliminaries}

We focus on full conjunctive queries~\cite{DBLP:books/aw/AbiteboulHV95} in this paper
 and identify each query with its hypergraph,
 where each vertex represents a variable
 and each hyperedge represents a relation.
\begin{definition}[Hypergraph]
A {\em hypergraph} $\hyp = (\X, \R, \vars)$ consists of
 a set of vertices $\X$, a set of hyperedges $\R$ and an incidence function
 $\vars: \R \to \Power(\X)$.
\end{definition}
We only consider hypergraphs without isolated vertices,
empty hyperedges or duplicated hyperedges over the same set of vertices.
We assume each hyperedge $\rel$ contains a bounded number ($\bigO(1)$)
 of vertices exclusive to $\rel$.
When there is no ambiguity, we will use $\rel$ interchangeably with $\vars(\rel)$,
$\x \in \rel$ with $\x \in \vars(\rel)$, and $\hyp(\X, \R)$ with $\hyp(\X, \R, \vars)$.
We will also apply common set operations directly to hyperedges,
  e.g., $\rel_1 \cap \rel_2$ for $\vars(\rel_1) \cap \vars(\rel_2)$.
\revB{
In fact, the reader can often ignore $\vars$ and identify
 a hyperedge with  its set of vertices;
 the purpose of $\vars$ is to relate hyperedges over different
 vertex sets across hypergraphs in the correctness proofs of
 our algorithms. }
%
%
\revB{
We write $\X(\hyp)$ and $\R(\hyp)$ to refer to the vertices 
 and edges of $\hyp$ respectively.}
  The size of a hypergraph is the total size of all hyperedges 
 $|\hyp| = \sum_{\rel \in \R}|\rel|$. 
 %
 We let $\hyp|_{\x}$ denote the {\em neighborhood} of $\x \in \X$, 
 consisting of all hyperedges containing $\x$. 
 

%
%

 %
%
\begin{definition}[Multigraph]
    A multigraph $\G = (\R, \E, \nodes)$
    consists of a set of vertices $\R$,
     a set of edges $\E$, 
    and the incidence function $\nodes: \E \to \Power(\R)$ 
    such that $1 \leq |\nodes(\e)|\leq 2$ for any $\e \in \E$.
    An edge $\e$ is a {\em self-loop} if $|\nodes(\e)|=1$.
    Edges $\e_1, \e_2 \in \E$ are {\em parallel} 
    if $\nodes(\e_1) = \nodes(\e_2)$.
\end{definition}
We denote the set of vertices in $\G$ with $\R$,
 because we will soon define the {\em line graph} $\lin$
 of a hypergraph where each vertex in $\lin$ represents a hyperedge.
We write $\R(\G)$ and $\E(\G)$ to refer to the vertices 
 and edges of $\G$ respectively.

\begin{definition}[Simple Graph]\label{def:simple_graph}
    A simple graph is a multigraph with 
    no parallel edges or self-loops,
    i.e., the incidence function is injective and always returns two distinct vertices.
\end{definition}
Because each edge $\e$ in a simple graph can be identified with
 its two endpoints $\nodes(\e) = \{\rel_1, \rel_2\}$,
 we will omit $\nodes$ and write $\{\rel_1, \rel_2\}$ for $\e$ when there is no ambiguity.

\begin{definition}[Cycle, Clique and Diamond]\label{def:diamond}
\revB{
In a simple graph $\G = (\R, \E)$, a sequence of $n \geq 3$ distinct vertices
$(v_0, \ldots, v_{n-1})$ where $v_i \in \R$ is}

\begin{itemize}
    \item a {\em cycle} if there is an edge between 
    $v_i$ and $v_{(i+1) \bmod n}$ 
    for all $0 \leq i \leq n-1$;
    \item an {\em n-clique $K_n$} if the induced graph 
    $G|_{\{v_0, \dots, v_n\}}$ has an edge between each pair of vertices;
    \item a {\em diamond} if $n = 4$ and
    $G|_{\{v_0, \dots, v_3\}}$ has one fewer edge than a $K_4$.
\end{itemize}
\end{definition}

A {\em weighted graph} is a (multi-)graph where each edge is assigned a weight:
\begin{definition}[Weighted Graph]
    A weighted graph $(\G, \w)$ consists of a multigraph $\G = (\R, \E, \nodes)$ and weight function
    $\w: \E \to \mathbb{N}$ assigning a natural number to each edge in $\G$.
\end{definition}

The {\em line graph} of a hypergraph is the ``intersection graph'' of its hyperedges:
\begin{definition}[Line Graph]\label{def:line-graph}
For a hypergraph $\hyp$,
 the {\em line graph} $\lin(\hyp) = (\G, \vars)$
 consists of a simple graph $\G = (\R, \E)$
 and an {\em edge labeling function} $\vars: \E \to \Power(\X)$.
The vertices of $\G$ are the hyperedges $\R$ of $\hyp$,
 and there is an edge $\e = \{\rel_1, \rel_2\} \in \E$
 when $\rel_1 \cap \rel_2 \neq \emptyset$.
The edge labeling function maps each edge $\e$ to the intersection
 of the hyperedges represented by the endpoints of $\e$:
 $\vars(\{\rel_1, \rel_2\}) = \rel_1 \cap \rel_2$.
\end{definition}
In addition, we define a weight function mapping each edge to the size of its label:
\begin{definition}[Line Graph Edge Weight]
For each edge $\e \in \E(\lin(\hyp))$,
 the weight function $\w: \E(\lin(\hyp)) \to \mathbb{N}$ is defined as
 $\w(\e) = |\vars(\e)|$.
\end{definition}
When there is no ambiguity from the context, we simply write $\lin(\hyp)$ as $\lin$,
 and we may treat the line graph $(\G, \vars)$ as a weighted graph $(\G, \w)$.
The size of a line graph is the sum of all edge weights $|\lin| = \sum_{\e \in \E(\lin)} \w(\e)$. 
For any subgraph $\G$ of $\lin(\hyp)$,
 we let $\G|_\x$ denote the subgraph of $\G$ induced by $\hyp|_\x$.

\begin{example}
\cref{subfig:hyp_lin_jt_h6} shows an example hypergraph of size 14.
The line graph $\lin_6 = \lin(\hyp_6)$ is shown in \cref{subfig:hyp_lin_jt_l6}.
For instance, hyperedges $Y$ and $U$ share two vertices $a$ and $d$, 
so they are connected by an edge in the line graph.
The edge weight is given by $\w(\{Y, U\}) = |\{a, d\}| = 2$.
Each unannotated edge in $\lin_6$ connects two hyperedges sharing only the vertex $a$.
Otherwise, the common vertices shared by a pair of hyperedges are 
annotated next to the corresponding edge.
The size of the line graph is $|\lin_6| = 19$.
\end{example}

\begin{definition}[Join Tree]\label{def:join_tree}
    A {\em join tree} $\T$ of hypergraph $\hyp$ is a spanning tree of $\lin(\hyp)$
    such that $\T|_\x$ is a connected subtree for each $\x \in \X(\hyp)$. 
If a certain vertex is specified as the {\em root},
 $\T$ becomes a {\em rooted join tree}.
\end{definition}
We use $\R(\T)$ to denote the set of nodes and $\E(\T)$ to
denote the set of edges of tree $\T$. 
The requirement that $\T|_{\x \in \X(\hyp)}$ be a connected subtree
is also known as the {\em running
intersection property}~\cite{fagin1983degrees}.
%
We write $\JTof{\lin(\hyp)}$ or $\JTof{\hyp}$ 
 to denote the set of unrooted join trees of $\hyp$.
%
%
We use $\T_{\rel}$ to denote a join tree rooted at $\rel \in \R(\T)$. 
When there is no ambiguity, we simply write $\T$.
The depth of a node $\rel_i$ in the rooted tree $\T_{\rel}$,
 denoted $\depth(\T_{\rel}, \rel_i)$,
 is defined as its distance from the root. 

\revA{
Four common notions of hypergraph acyclicity are defined
 in increasing order of strictness~\cite{fagin1983degrees}, namely
{\em $\alpha$-acyclic $\supset$ $\beta$-acyclic $\supset$
 $\gamma$-acyclic $\supset$ Berge-acyclic.} }

\begin{definition}[Hypergraph Acyclicity]\label{def:acyclicity}
A hypergraph $\hyp$ is:
\begin{itemize}
    \item {\em $\alpha$-acyclic} if it admits a join tree as in \cref{def:join_tree};
    \item {\em $\beta$-acyclic} if every subgraph of $\hyp$ is $\alpha$-acyclic;
    \item {\em $\gamma$-acyclic} if it does not contain any {\em $\gamma$ cycle}.
        A $\gamma$ cycle is a sequence of length $k \geq 3$ of distinct hyperedges
        and distinct vertices $(\rel_0, \x_0, \dots, \rel_{k-1}, \x_{k-1})$ such that 
        every $\x_{i\in [0, k-2]}$ belongs to $\rel_i \cap \rel_{i+1}$ 
        and no other $\rel_j$ while $\x_{k-1}$ belongs to $\rel_{k-1} \cap \rel_0$ and possibly 
        other hyperedges;
    \item {\em Berge-acyclic} if it does not contain any {\em Berge cycle}.
        A Berge cycle is a sequence of length $k \geq 2$ of distinct vertices and distinct hyperedges 
        $(\rel_0, \x_0, \dots, \rel_{k-1}, \x_{k-1})$
        such that $\forall\ i \in [k]: \x_i \in \rel_i \cap \rel_{(i+1)\bmod k}$.
\end{itemize}
\end{definition}

The following classic result relates join trees of $\hyp$ to maximum spanning trees
 of $\lin(\hyp)$:
\begin{theorem}[Maier~\cite{maier1983theory}]\label{thm:join_tree_mst}
Given an $\alpha$-acyclic hypergraph $\hyp = (\X, \R)$,
 a tree with nodes in $\R$ is a join tree of $\hyp$
 if and only if it is a maximum spanning tree of $\lin(\hyp)$.
\end{theorem}

Join trees for acyclic hypergraphs can be constructed by a procedure called {\em GYO reduction}.
\begin{definition}
A {\em GYO reduction order} is a sequence of hyperedges
 $\rel_1, \rel_2, \ldots, \rel_k$ such that for each $\rel_{i < k}$,
 there is some $\rel_{p > i}$, called the {\em parent} of $\rel_i$,
 such that $\forall \rel_{j > i} : \rel_i \cap \rel_j \subseteq \rel_p$.
\end{definition}
The GYO reduction algorithm~\cite{GYO762509} finds such an order iteratively,
 and attaches each hyperedge to its parent to form a join tree.
It generates the join tree $\T^G$ as shown in \cref{subfig:hyp_lin_jt_tgyo},
where, for example, $T$ is the parent of $W$, denoted as $\parent(W) = T$ 
and the root $P$ has no parent.

\begin{figure}[tbp]
  \refstepcounter{figure} 
  \centering
  \begin{minipage}[c]{0.45\textwidth}
    \input{algos/algo_MCS.tex}
  \end{minipage}
  \hfill
  \begin{minipage}[c]{0.52\textwidth}

    \hfill
    \begin{subfigure}[b]{.45\textwidth}
    \centering
    \includegraphics[height=2.3cm,,keepaspectratio]{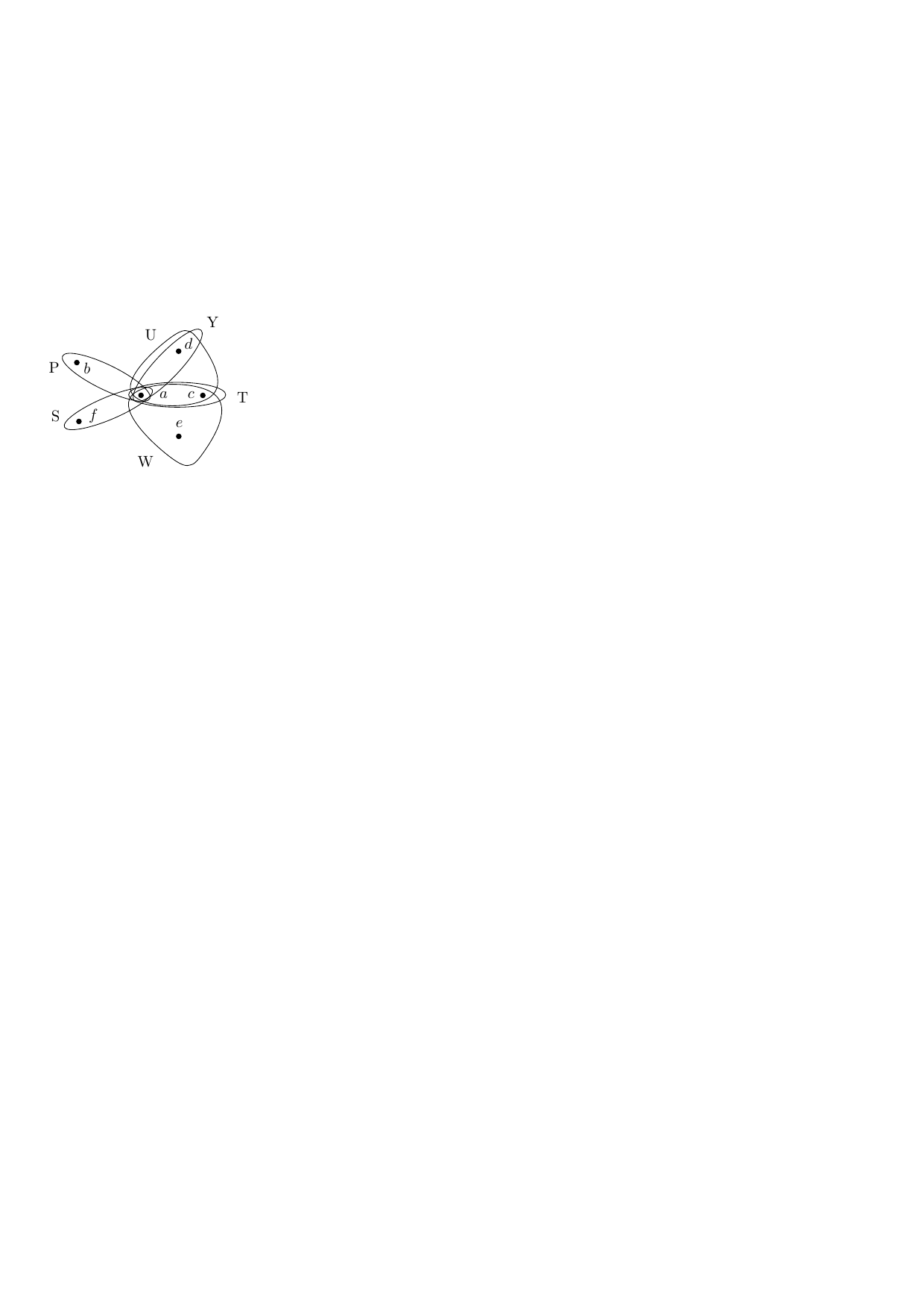}
    \caption{Hypergraph $\hyp_6$}
    \label{subfig:hyp_lin_jt_h6}
    \end{subfigure}
    \begin{subfigure}[b]{.45\textwidth}
    \centering
    \includegraphics[height=2.4cm,,keepaspectratio]{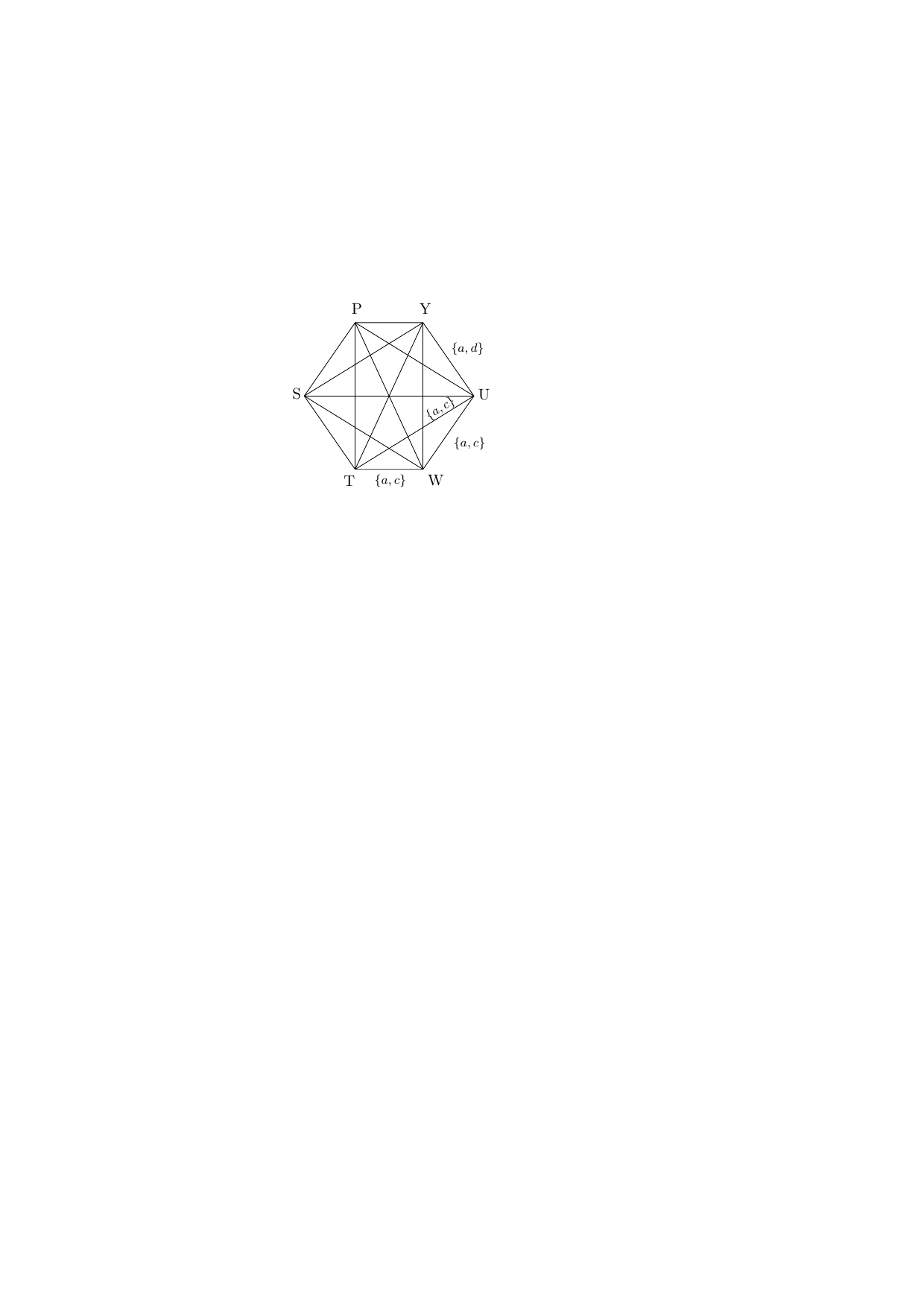}
    \caption{Line graph $\lin_6$}
    \label{subfig:hyp_lin_jt_l6}
    \end{subfigure}


    \hfill
    \begin{subfigure}[b]{.45\textwidth}
    \centering
    \includegraphics[height=2.5cm,,keepaspectratio]{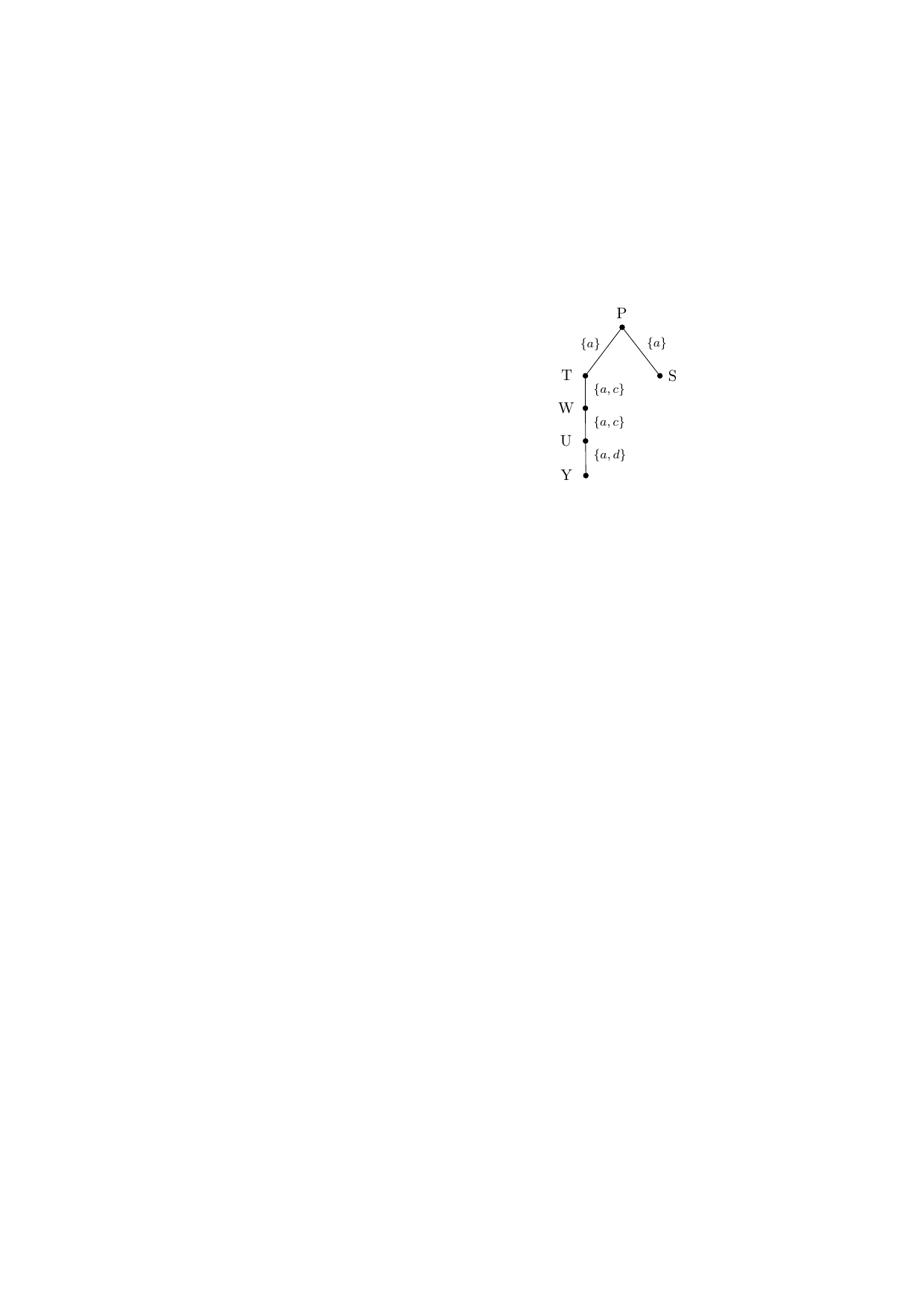}
    \caption{$\T^{G}$ by GYO}
    \label{subfig:hyp_lin_jt_tgyo}
    \end{subfigure}
    \begin{subfigure}[b]{.45\textwidth}
    \centering
    \includegraphics[height=2.5cm,,keepaspectratio]{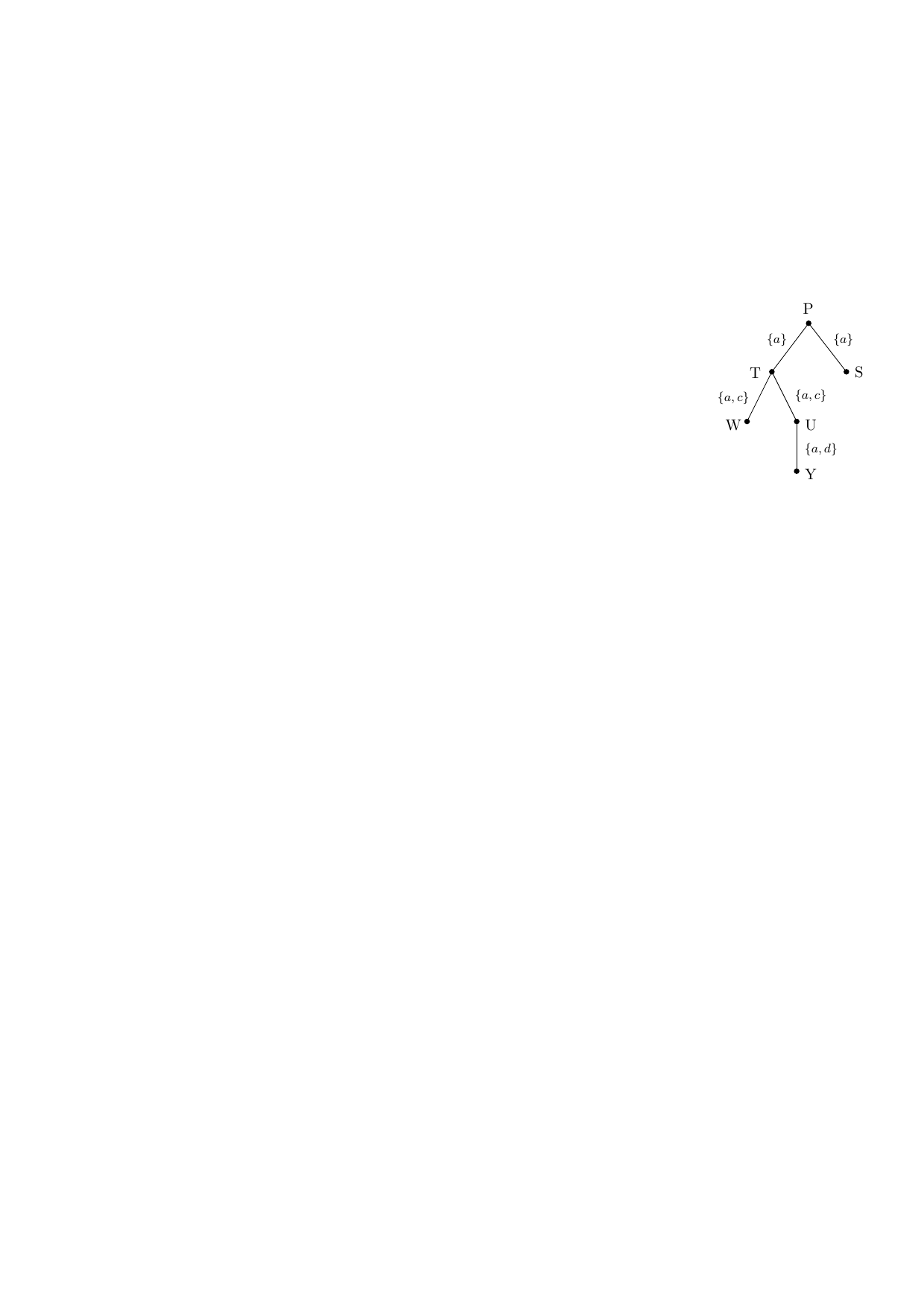}
    \caption{$\T^{M}$ by MCS}
    \label{subfig:hyp_lin_jt_tmcs}
    \end{subfigure}
    \addtocounter{figure}{-1} 
    \captionof{figure}
    {A hypergraph $\hyp_6$, its line graph $\lin_6$,
    and two join trees $\T^{G}$ and $\T^{M}$.}
  \label{fig:hyp_lin_jt}
  \end{minipage}
\end{figure}

\revB{
Another way to construct join trees 
 is via the Maximum Cardinality Search (MCS) algorithm~\cite{tarjan1984simple},
 and we present a simplified version in \cref{algo:MCS}.
The join tree produced by the algorithm is called an MCS tree.
Given an input hypergraph $\hyp$ and a hyperedge $\rel \in \R(\hyp)$
chosen as the root,
\cref{algo:MCS} constructs a rooted MCS tree 
by iteratively expanding toward hyperedges connected with
those already processed.
The algorithm maintains a set $\R$ of unprocessed hyperedges and 
a set $\X$ of unvisited vertices. 
Initially, the MCS tree contains no edges,
 and the first hyperedge to be processed is the root $\rel$.
In each iteration, the algorithm processes the current hyperedge $\rel$ by
examining all of its unvisited vertices in $\rel \cap \X$. 
Whenever such a vertex $\x$ is encountered, 
every remaining hyperedge $\rel' \in \R$ that contains $\x$
records $\rel$ as its tentative parent $\parent(\rel') \gets \rel$. 
Once all vertices of $\rel$ are marked as visited, they are removed from $\X$.
The hyperedge $\rel$ is marked as processed and removed from $\R$.
After processing each $\rel$,
the edge between $\rel$ and its recorded parent $\parent(\rel)$ 
is added to the MCS tree as $\{\rel, \parent(\rel)\}$.
The algorithm selects as the next hyperedge an
element of $\R$ that maximizes the number of already visited vertices,
namely a hyperedge $\rel$ that maximizes $|\rel \setminus \X|$, 
breaking ties arbitrarily.
This process continues until all hyperedges have been processed, when
the recorded edges $\{\rel, \parent(\rel)\}$ form an MCS tree rooted at $\rel$.}

\begin{example}\label{example:MCS}
On $\hyp_6$ in \cref{subfig:hyp_lin_jt_h6} with $r = P$,
\cref{algo:MCS} examines all vertices of $P$.
$P$ also becomes the tentative parent of five hyperedges, 
$\parent(S) = \parent(T) = \parent(U) = \parent(W) = \parent(Y) \gets P$.
\revB{It proceeds to check the next hyperedge with the most visited vertices as
in Line~\ref{algoLine:mcs_argmax}}.
At this point, each of the five has one visited vertex,
\cref{algo:MCS} breaks the tie arbitrarily, say proceeding with $S$, 
and examining all its vertices.
Its parent is finalized as $\parent(S) = P$.
Each of the remaining four again has one visited vertex, 
so \cref{algo:MCS} proceeds to check $T$.
Its parent is finalized as $\parent(T) = P$.
While checking $T$, \cref{algo:MCS} examines the vertex $c$ that is also
 in $U, W$, their parents are updated to $\parent(U) = \parent(W) \gets T$.
It continues until all hyperedges are processed.
By connecting each hyperedge with its parent, \cref{algo:MCS} constructs 
the MCS tree $\T^M$ in~\cref{subfig:hyp_lin_jt_tmcs}.
\end{example}

\cref{algo:MCS} can generate different join trees
by varying the choice of root and tie-breaking strategy.
However, it does not guarantee to generate all possible trees.
For example, it never generates the join tree $\T^G$ shown in 
\cref{subfig:hyp_lin_jt_tgyo}.

%
We extend the notions of parent, children and siblings to edges of a rooted tree. 
Let $\e \in \E(\T)$ be an edge of tree $\T$ rooted at $\rel$ (and $\e$ not incident to $\rel$).
Its parent $\parent(\e) \in \E(\T)$ is the unique tree edge
incident to $\e$ and lying closer to the root $\rel$. We let  
$\children(\e)$ denote the set of children of $\e$ in $\T$.
The siblings of $\e$ is the set of edges incident to the parent of $\e$, namely
$\siblings(\e) = \setof{\e' \in \E(\T) \setminus \{\e\}}{\parent(\e') = \parent(\e)}$.
Edges incident to the root do not have a parent and are all siblings.


\cref{lemma:MCS_rip} shows that
an MCS tree is ``somewhat monotonic''
in that every edge must contain some variable not in its parent.
If two edges share any variable not
 in their parents, they must be siblings.
\begin{lemma}\label{lemma:MCS_rip}
\revB{Let $\T$ be an MCS tree. For an edge $\e \in \T$ that has a parent, 
then}
\begin{enumerate}
    \item $\e \not \subseteq \parent(\e)$
    \item For another edge
    $
    \e' \in \T:
        (\e \setminus \parent(\e))\
        \cap\
        (\e' \setminus \parent(\e'))\
        \neq\ \emptyset 
        \quad \implies \quad 
        \parent(\e) = \parent(\e').
    $
\end{enumerate}
\end{lemma}

\subparagraph{Computation model.}
Throughout the paper we assume the Random Access Machine model of computation,
where one can allocate an array of size $n$ in $\bigO(n)$ time.
Constant-time operations include accessing and updating an array element,
adding or deleting an element in a linked list,
and the common arithmetic operations on integers.
\section{Enumerating Join Trees}\label{section:enum}

Our strategy for enumerating the join trees of a hypergraph $\hyp$
 starts from~\cref{thm:join_tree_mst}
 which allows us to reduce the problem
 to the enumeration of maximum spanning trees of 
 the line graph $\lin(\hyp)$.
The best known algorithm for MST enumeration\footnote{
 Eppstein focuses on minimum spanning trees, 
 but the same algorithm applies to maximum spanning trees 
 with flipped comparisons.} 
 is due to Eppstein~\cite{eppstein1995representing} by deriving from 
 the input graph $\G$ a so-called {\em equivalent graph} $\EG$. 
Every spanning tree of $\EG$ corresponds to an MST of
$\G$ and vice versa. 
Eppstein gives an algorithm (and proves a matching lower bound) 
 in time $\bigO(m + n \log n)$
 to construct the equivalent graph from an arbitrary weighted graph
 with $m$ edges and $n$ vertices.
Then to enumerate all $k$ MSTs of $\G$, he
 applies existing algorithms to enumerate the spanning trees of $\EG$.
Since there are optimal spanning tree enumeration algorithms
 that run in $\bigO(m + n + k)$ time~\cite{Kapoor1995,DBLP:journals/siamcomp/ShiouraTU97},
 the overall time complexity to enumerate MSTs is $\bigO(m + n \log n + k)$.
The main result of this section is an algorithm for enumerating join trees leveraging the structure
of acyclic hypergraphs and their line graphs. In particular:
\begin{itemize}
    \item Given the line graph $\lin$ of an $\alpha$-acyclic hypergraph, 
    we can construct an equivalent graph of $\lin$ in $\bigO(|\lin|)$, 
    thus enumerating the join trees in $\bigO(|\lin| + k)$ time (\cref{thm:alpha_JTE_enum}).
    \item Given any $\gamma$-acyclic hypergraph $\hyp$, 
    we can construct an equivalent graph of $\lin(\hyp)$ in $\bigO(|\hyp|)$,
    lowering the overall time complexity of enumeration to $\bigO(|\hyp| + k)$ (\cref{thm:gamma_JTE_enum}).
\end{itemize} 
Note that $|\lin(\hyp)|$ can be quadratic in $|\hyp|$
 while $|\hyp| \in \bigO(|\lin(\hyp)|)$,
 so the second item above yields a better bound.

In the rest of this section, we first define what is an equivalent graph. 
Then, we present the algorithm for enumerating join trees of 
$\alpha$-acyclic hypergraphs.
Finally, we adapt the algorithm to enumerate join trees
of $\gamma$-acyclic hypergraphs more efficiently.


\subsection{Equivalent Graph} \label{subsection:EquivalentGraph}
The key idea of Eppstein's algorithm~\cite{eppstein1995representing}
 is to construct an {\em equivalent graph} $\EG$
 whose spanning trees one-to-one correspond to the MSTs of the input graph $\G$,
 thereby reducing MST enumeration to spanning tree enumeration.
\begin{definition}[Equivalent Graph~\cite{eppstein1995representing}]
    \label{def:equivalent_graph}   
\revB{ Given a weighted graph $G = ((\R, \E, \nodes), \w)$, 
    a multigraph $\EG=((\R, \E, \nodesEG), \w)$ is an {\em equivalent graph} of $\G$
    if the spanning trees of $\EG$ one-to-one correspond 
    to the maximum spanning trees of $\G$:
    a set of edges $\E^T \subseteq \E$ induces a spanning tree of $\EG$
    if and only if $\E^T$ induces an MST of $\G$.}
\end{definition}
In what follows, we may use the notation $\EG(\lin(\hyp))$ or $\EG(\hyp)$ to refer to an equivalent graph of the line graph of hypergraph $\hyp$. 
Note that $\EG$ shares the same vertices and edges of $\G$,
 and they only differ in the incidence function
 mapping each edge to its endpoints.
Such an equivalent graph can be constructed from $\G$
 by applying a series of \emph{sliding transformations}:
 for two incident edges $\e, \e'$ with
 $\nodes(\e) = \{\rel_1, \rel_2\}$ and $\nodes(\e') = \{\rel_2, \rel_3\}$,
 we can {\em slide} $\e'$ along $\e$ by updating $\nodes(\e') = \{\rel_1, \rel_3\}$,
 if $\w(\e') < \w(\e)$.
Note that the edge $\e'$ retains its identity after sliding,
 which explains the need for the incidence function $\nodes$.
Furthermore, we will use a given rooted MST $\T$ as a guide, 
 and only slide edges towards the root. More formally:

\begin{definition}[Sliding Transformation~\cite{eppstein1995representing}]\label{def:sliding_transformation}
Let $T$ be a MST of $G$ rooted at $\rel$.
Let $\e$ be an edge such that $\nodes(\e) = \{\rel_1, \rel_2\}$ with $\rel_1$ closer to the root than $\rel_2$ in $T$.
If another edge $\e'$ shares $\rel_2$ with $\e$,
 i.e., $\nodes(\e') = \{\rel_2, \rel_3\}$,
 and $\w(\e') < \w(\e)$, 
 then {\em sliding} $\e'$ along $\e$ 
 results in a graph $\G'$ that is identical to $\G$,
 except that the incidence function maps $\e'$ to $\{\rel_1, \rel_3\}$.
\end{definition}
 
The key result by Eppstein shows that applying sliding transformations
 on $\G$ along a rooted MST $\T$ to a fixpoint
 results in an equivalent graph of $\G$.
To state this formally, we first define an ordering on graphs
 based on sliding transformations.
It is easy to verify the following is a partial order:
\begin{definition}[Sliding Partial Order]
\revB{We write $\G \slide{\T} \G'$ if $\G'$ can be obtained
 from $\G$ by applying a sequence of sliding transformations
 along a rooted MST $\T$ of $\G$.}
\end{definition}
Applying sliding transformations to a fixpoint
 therefore yields a maximal element under the sliding partial order.
\begin{theorem}[Sliding produces $\EG$ at fixpoint~\cite{eppstein1995representing}]\label{thm:eg_max_under_sliding}
    Given a weighted graph $\G$ and a rooted MST $\T$, and let $\EG$
    be a maximal element under $\slide{\T}$. Then $\EG$ is an equivalent graph of $\G$.
\end{theorem}
The choice of the initial rooted MST can affect the structure of the equivalent graph, 
but the order of sliding transformations performed has no impact~\cite{eppstein1995representing}.

\begin{figure}[tbp]
  \centering
  \begin{subfigure}[b]{0.25\textwidth} 
    \centering
    \includegraphics[height=3cm,keepaspectratio]{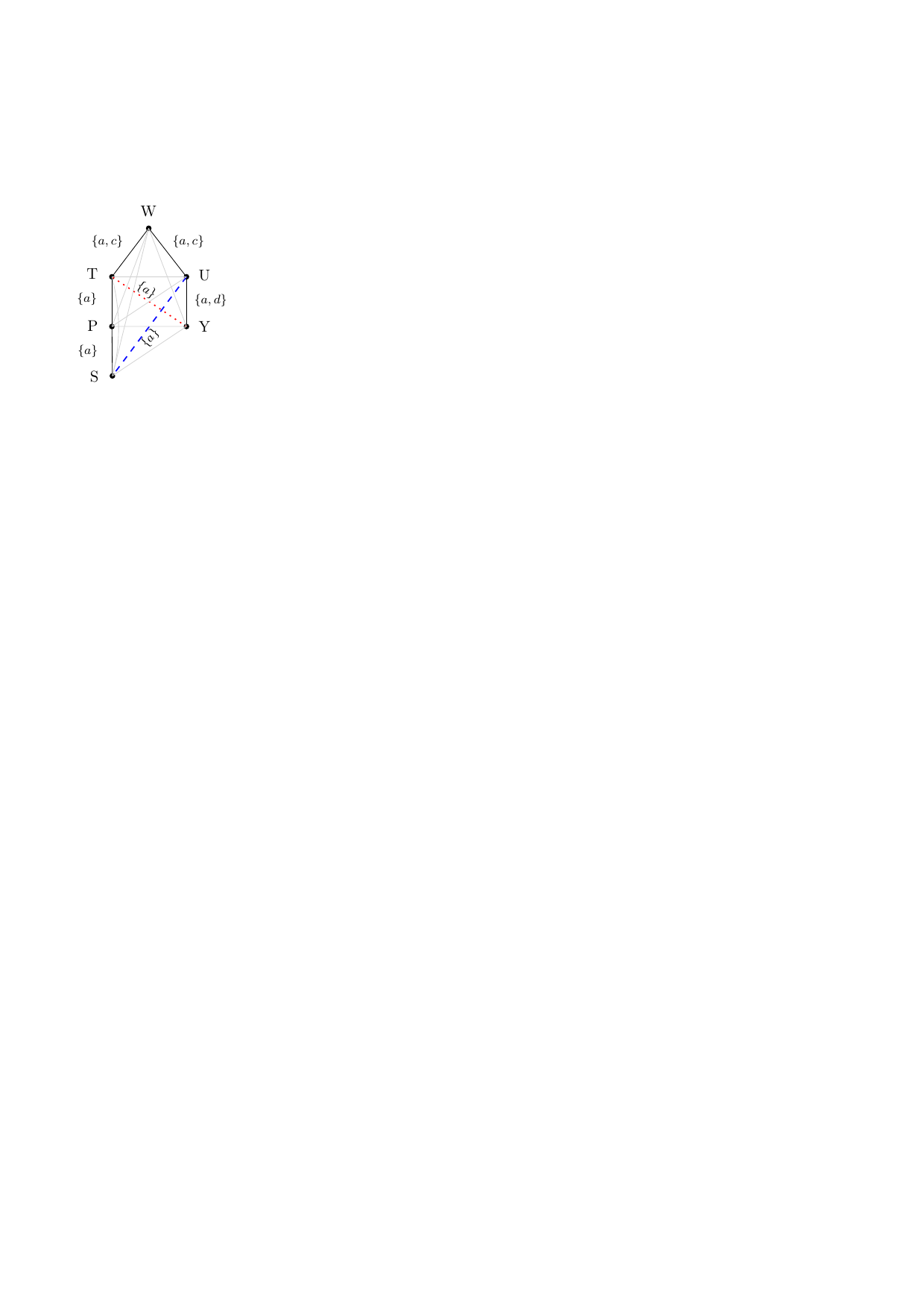}
    \caption{$\T_W$}
    \label{subfig:sliding_transformation_tw}
  \end{subfigure}\hfill
  \begin{subfigure}[b]{0.25\textwidth} 
    \centering
    \includegraphics[height=3cm,keepaspectratio]{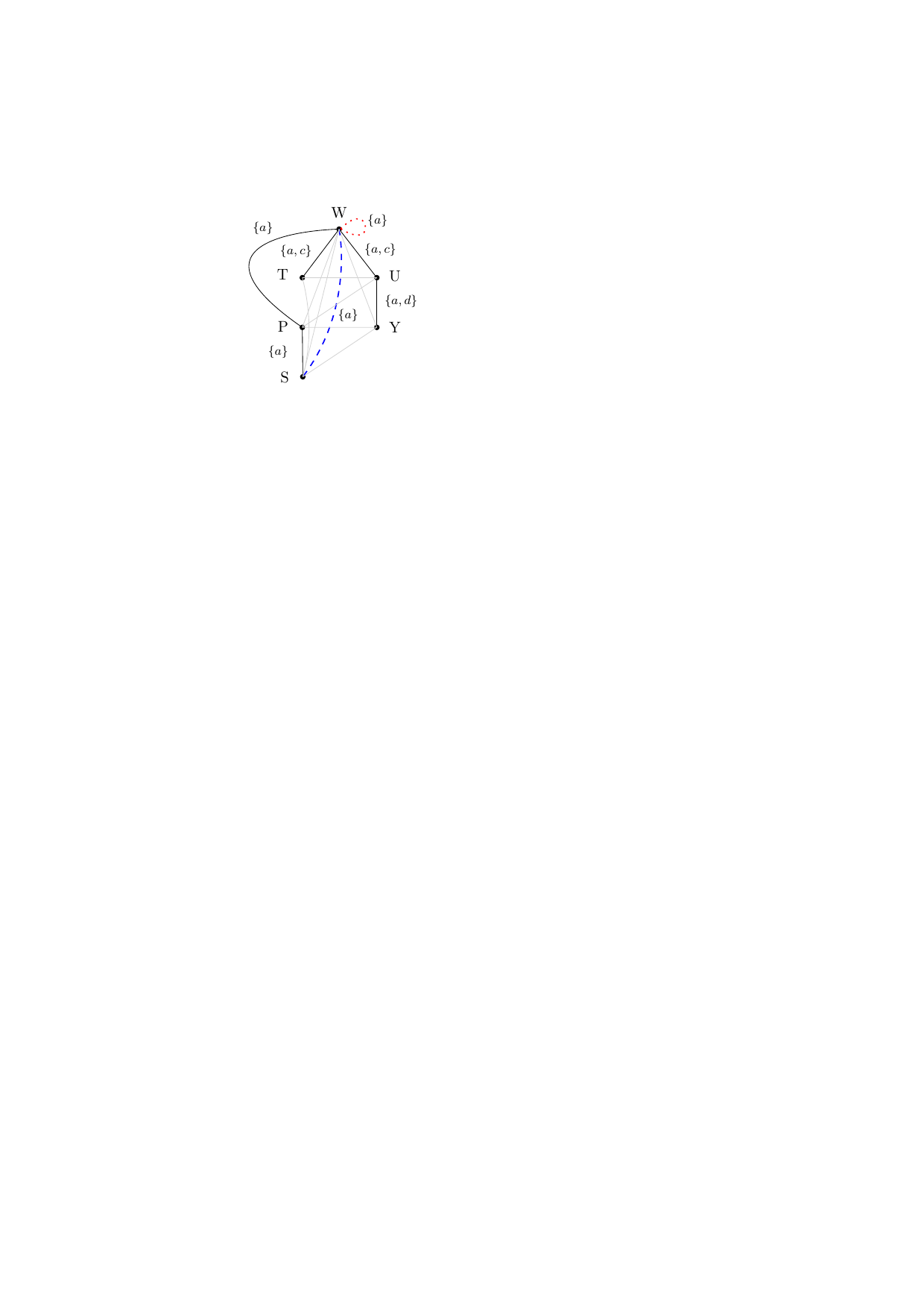}
    \caption{3 Edges Slid}
    \label{subfig:sliding_transformation_3slid}
  \end{subfigure}\hfill
  \begin{subfigure}[b]{0.3\textwidth} 
    \centering
    \includegraphics[height=3cm,keepaspectratio]{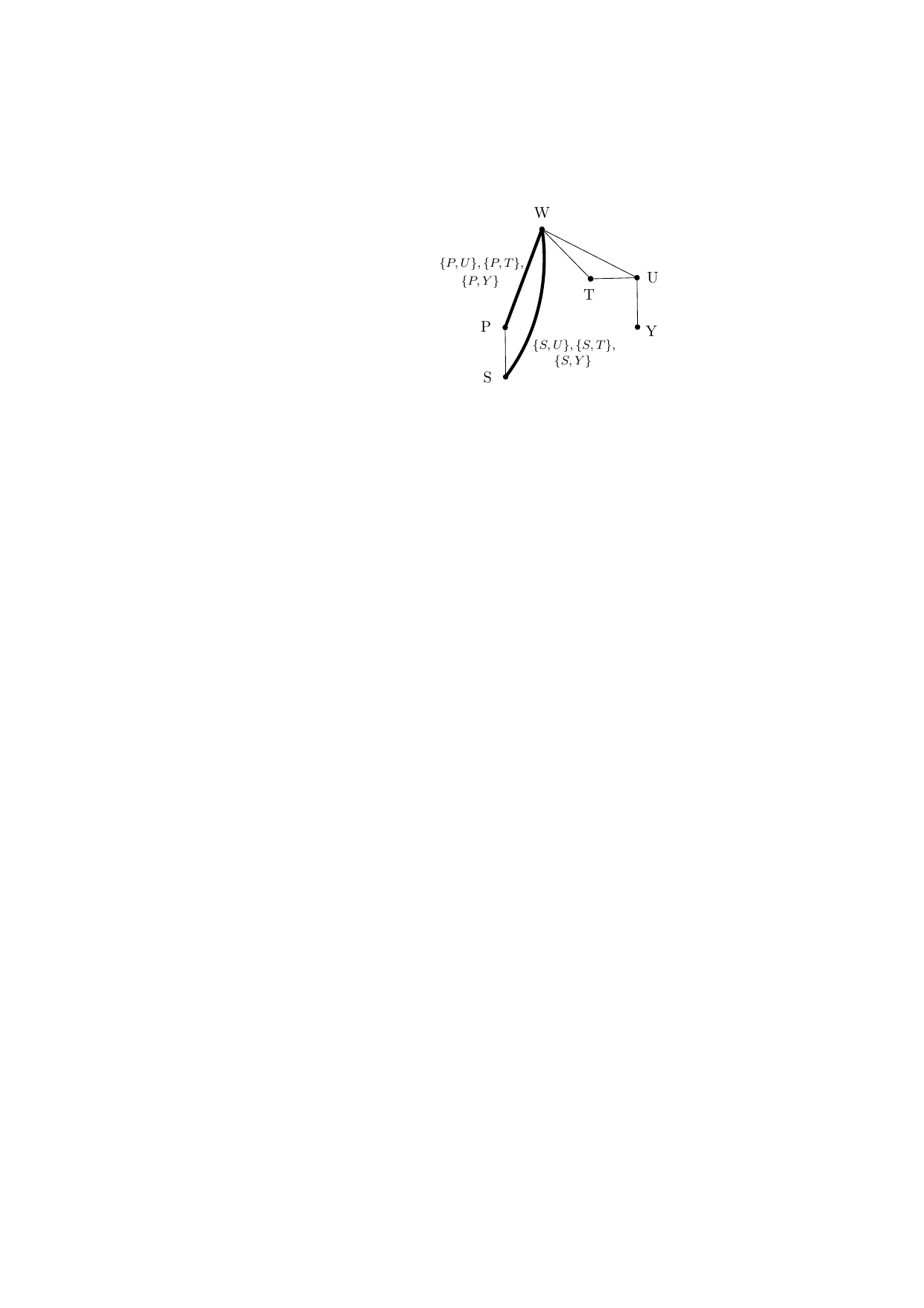}
    \caption{$\EG$}
    \label{subfig:sliding_transformation_eg}
  \end{subfigure}\hfill
  \begin{subfigure}[b]{0.2\textwidth} 
    \centering
    \includegraphics[height=3cm,keepaspectratio]{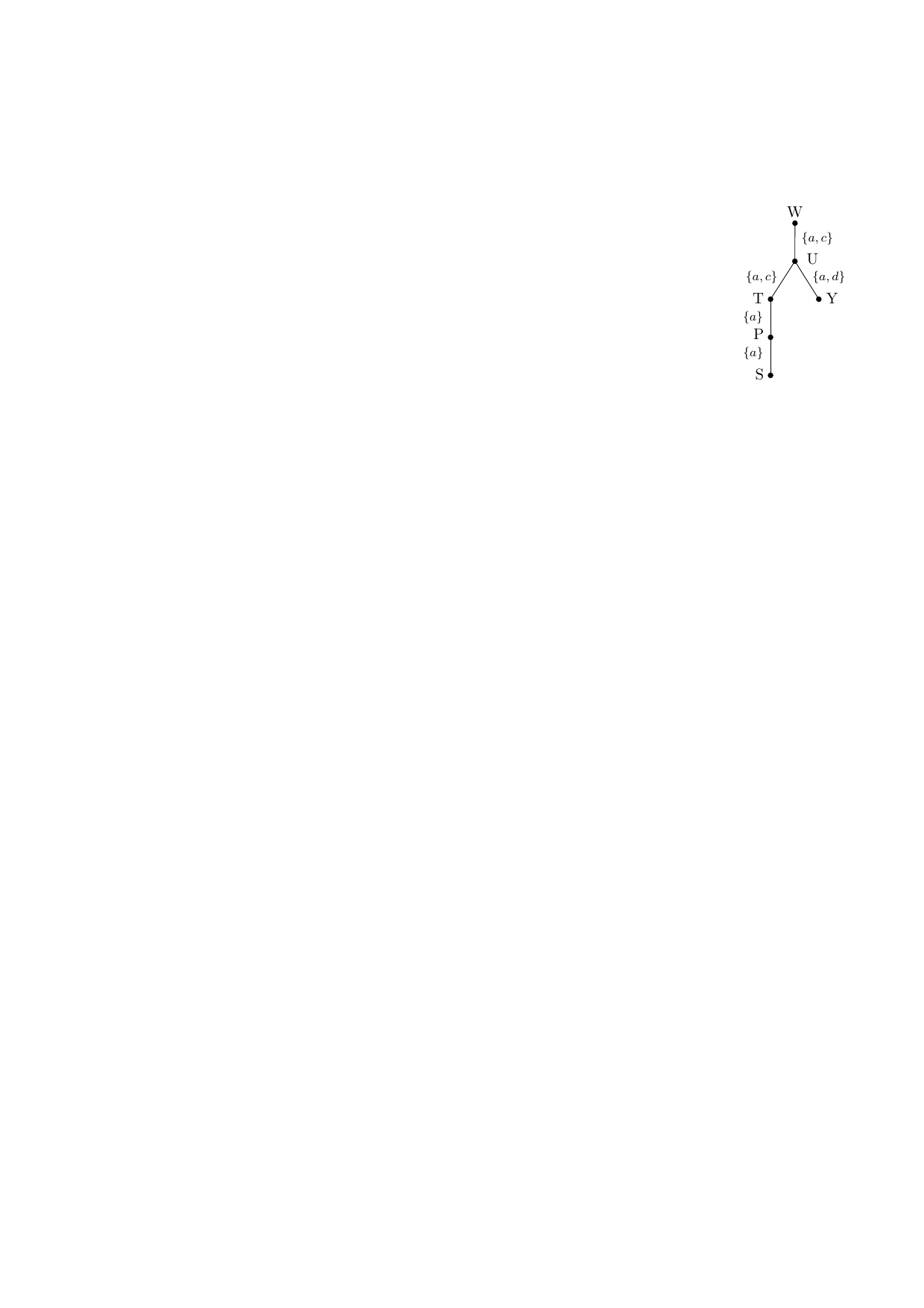}
    \caption{$\T'_W$}
    \label{subfig:sliding_transformation_twprime}
  \end{subfigure}\hfill
  \caption{
    $\T_W$ is an MST of the line graph in \cref{subfig:hyp_lin_jt_l6}
    where the black solid lines stand for the tree edges, the dashed line
    for an MST edge $(S, U)$, the dotted line for a non-MST edge $(T, Y)$
    and gray solid lines for the remaining non-tree edges. $\EG$ is the
    resulting equivalent graph where the thick lines highlight the
    parallel edges.
  }
  \label{fig:sliding_transformation}
  
\end{figure}

\begin{example}\label{example:sliding}
Given the 6-clique line graph $\lin_6$ in \cref{subfig:hyp_lin_jt_l6}, we
find a rooted MST $\T_{W}$ as shown in \cref{subfig:sliding_transformation_tw}.
Among all the tree edges shown as black solid lines, 
we can only apply the sliding transformation
    to the edge $\{T, P\}$ whose weight $\w(\{T, P\}) = |\{a\}| = 1$ is 
    lighter than its parent tree edge $\{W, T\}$ with 
$\w(\{W, T\}) = |\{a, c\}| = 2$.
    We slide along the tree edge $\{W, T\}$ to the root
    so that the edge $\{T, P\}$ becomes $\{W, P\}$ as shown by the solid curve
    in \cref{subfig:sliding_transformation_3slid}.
Non-tree edges can slide similarly. We consider two examples,
    $\{S, U\}$ illustrated with a dashed line and $\{Y, T\}$ 
 with a dotted line in
\cref{subfig:sliding_transformation_tw}.
All other non-tree edges are shown in light gray.
    We can slide $\{S, U\}$ along tree edge $\{U, W\}$ 
    to become $\{S, W\}$ as shown by the dashed curve in 
    \cref{subfig:sliding_transformation_3slid}.
    Both ends of $\{Y, T\}$ can slide along the tree edges to the root
    so that the edge becomes a self-loop
    as shown by the dotted loop in 
    \cref{subfig:sliding_transformation_3slid}.
This edge will not appear in any spanning tree of $\lin_6^\equiv$,
 and therefore not a part of any MST of $\lin_6$.
We refer to such an edge as a {\em non-MST edge}, as opposed to 
an {\em MST edge}.
Denoting the graph after sliding as $\lin_6'$, ordering
    $\lin_6 \slide{\T_{W}} \lin_6'$ holds.
By applying sliding transformations to a fixpoint,
we obtain an equivalent graph $\lin_6^\equiv$, where there are 
two sets of parallel edges highlighted by thick lines
in \cref{subfig:sliding_transformation_eg} (self-loops are omitted). 
    For example, the tree edge $\{T, P\}$ in $\T_{W}$ and
    non-tree edges $\{P, U\}, \{P, T\}, \{P, Y\}$ become parallel in 
$\lin_6^\equiv$ between $P$ and $W$. 
Therefore
$\lin_6 \slide{\T_{W}} \lin_6' \slide{\T_{W}} \lin_6^\equiv$ holds under the sliding partial order.
We can easily verify that each spanning tree of
$\lin_6^\equiv$ corresponds to an MST of $\lin_6$, such as
$\T'_{W}$ in \cref{subfig:sliding_transformation_twprime}.
\end{example}

\subsection{\texorpdfstring{Enumerating Join Trees of $\alpha$-Acyclic Hypergraphs}{Enumerating Join Trees of alpha-Acyclic Hypergraphs}} \label{subsection:JTAplhaAcyc}

 

The bottleneck of Eppstein's algorithm for constructing equivalent graphs
of arbitrary weighted graphs lies in a subroutine
that identifies where each edge will eventually slide to.
Because each edge can only slide along a heavier edge,
 it will eventually be ``blocked'' by a lighter or equally weighted edge
 along its path to the root.
The subroutine essentially performs binary search
 to find the blocking edge,
 leading to the $\log n$ factor in the overall complexity.
The key to our improvement is to show that for every acyclic hypergraph $\hyp$,
 we can construct an {\em equivalent hypergraph} $\hyp^*$
 whose join trees one-to-one correspond to those of $\hyp$,
 but one special join tree of $\hyp^*$ has 
 {\em monotonically increasing weight} from root to leaf,
 which enables constant-time identification of the blocking edge. 
\begin{definition}\label{def:mono_weight_tree}
    A {\em monotonic weight join tree} $\MWJT \in \JTof{\hyp}$
    is a rooted join tree of $\hyp$ such that for any
    $\e\in \E(\MWJT)$ that has a parent edge $\parent(\e)$, 
    $\w(\e) > \w(\parent(\e))$.
\end{definition}

We can always construct such a $\hyp^*$ given any $\alpha$-acyclic hypergraph $\hyp$:

\begin{theorem}\label{thm:MWJT}
    Given an $\alpha$-acyclic hypergraph $\hyp = (\X, \R, \vars)$, 
    there exists an {\em equivalent hypergraph} $\hyp^* = (\X^*, \R, \vars^*)$ over the 
    same hyperedge set $\R$ that
    \begin{itemize}
        \item admits a monotonic weight join tree $\MWJT \in \JTof{\hyp^*}$, and
        \item $\JTof{\hyp^*} = \JTof{\hyp}$.
    \end{itemize}
\end{theorem}

\begin{figure}[tbp]
  \centering
  \begin{subfigure}[b]{0.33\textwidth}
    \centering
    \includegraphics[height=3cm,keepaspectratio]{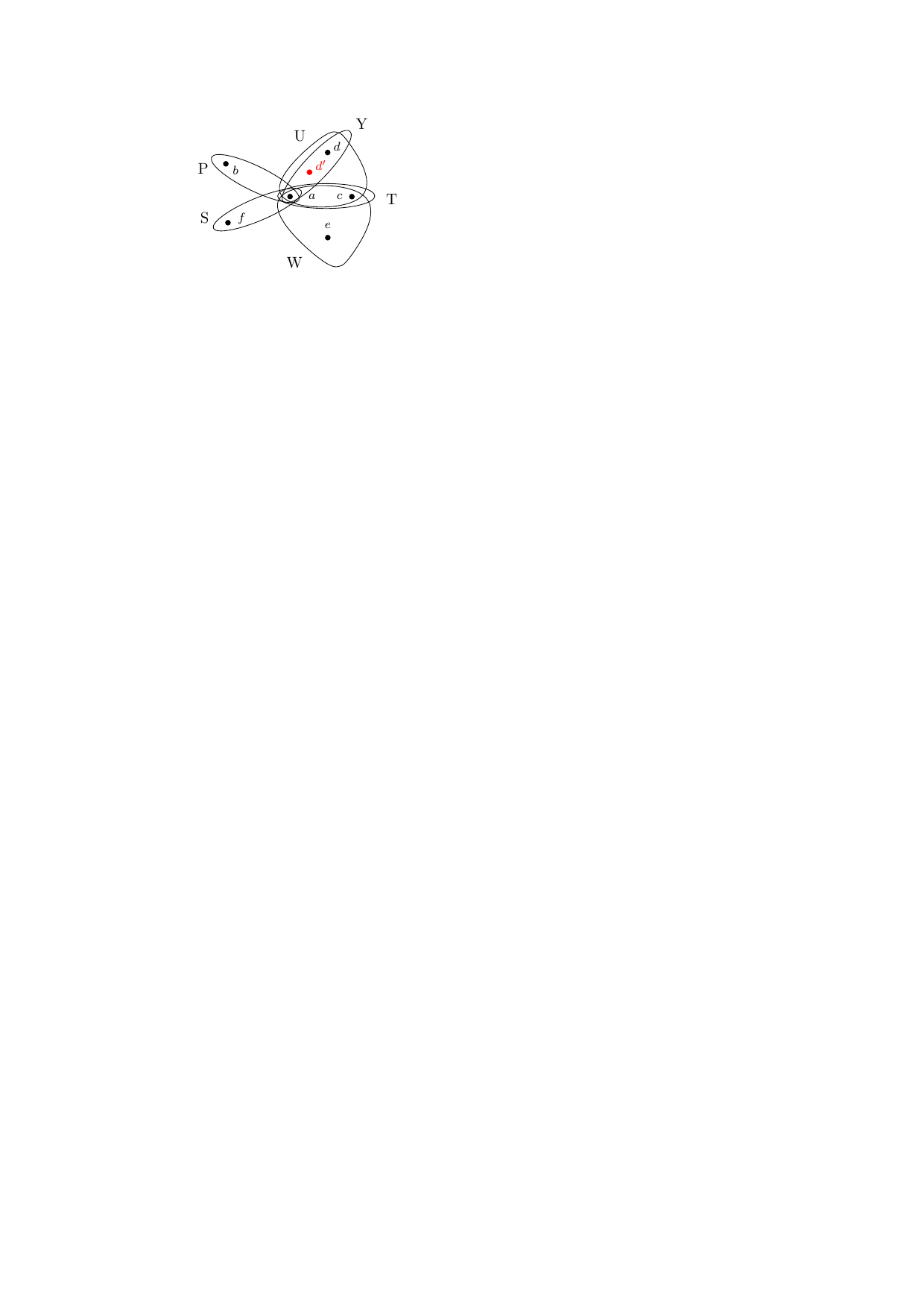}
    \caption{$\hyp^*_6$}
    \label{subfig:mwjt_hs6}
  \end{subfigure}%
  \begin{subfigure}[b]{0.33\textwidth}
    \centering
    \includegraphics[height=3cm,keepaspectratio]{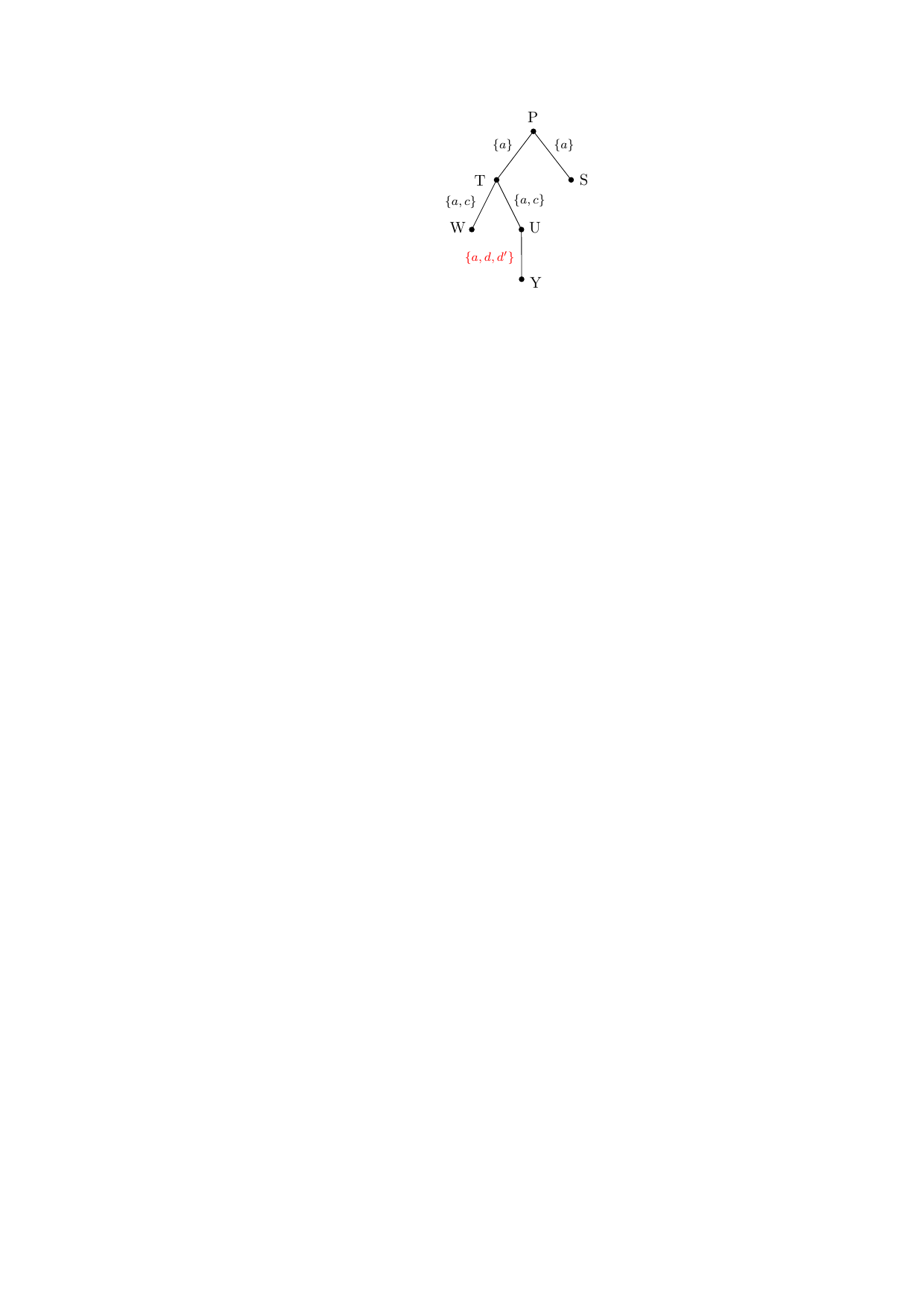}
    \caption{$\MWJT_p$}
    \label{subfig:mwjt_mwjtp}
  \end{subfigure}%
  \begin{subfigure}[b]{0.33\textwidth}
    \centering
    \includegraphics[height=3cm,keepaspectratio]{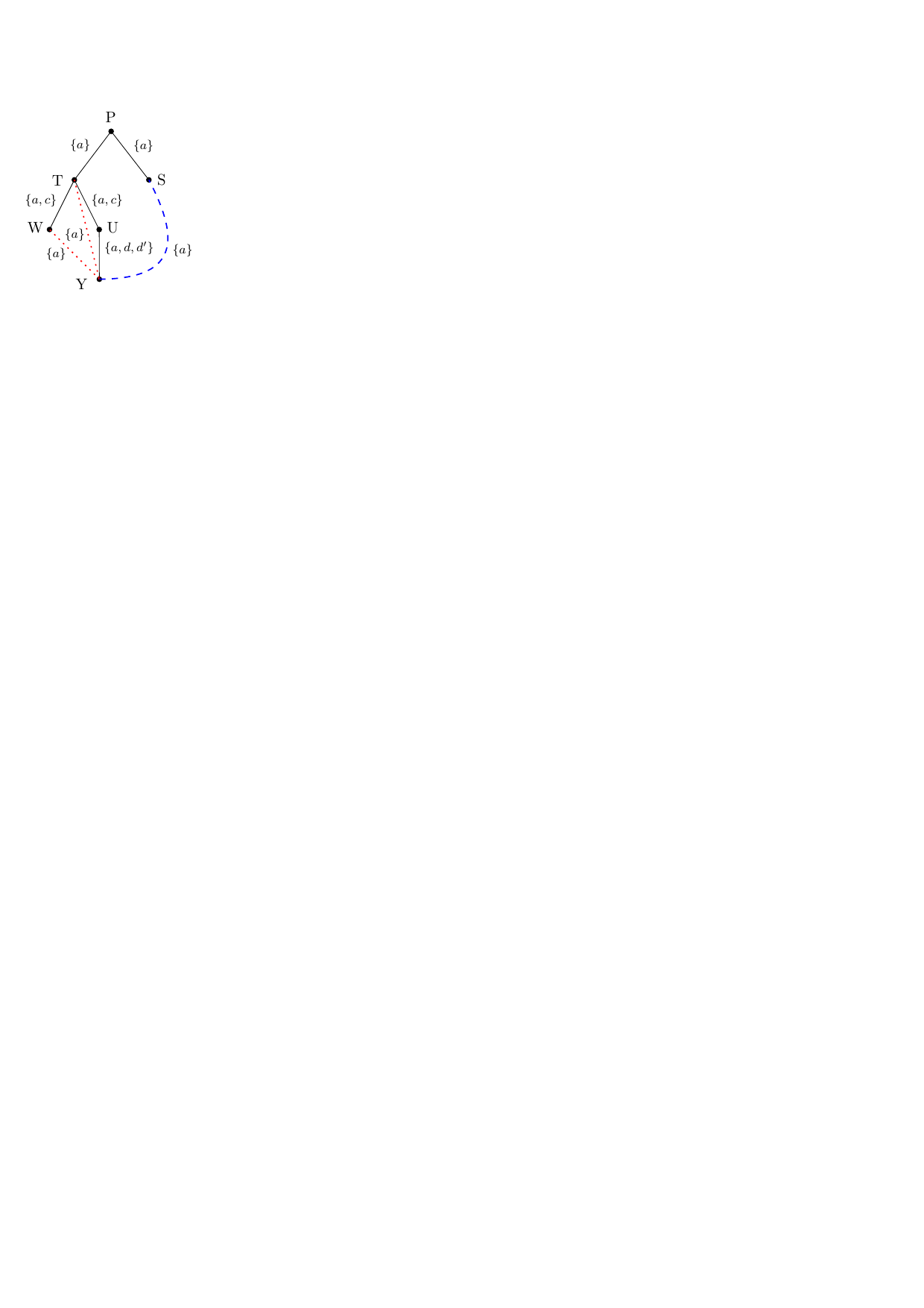}
    \caption{$\MWJT_p$ with 3 non-tree edges}
    \label{subfig:mwjt_mwjtp_non_tree}
  \end{subfigure}

  \caption{
    An equivalent hypergraph $\hyp^*_6$ and its monotonic weight join tree $\MWJT_P$
    rooted at $P$.
  }
  \label{fig:mwjt}
\end{figure}

\begin{example}
$\hyp^*_6$ in \cref{subfig:mwjt_hs6} is a hypergraph with similar structure 
 to $\hyp_6$ in \cref{subfig:hyp_lin_jt_h6}.
$\hyp^*_6$ differs from $\hyp_6$ by one vertex $d'$
and admits the same set of join trees.
Therefore, finding an equivalent graph $\EG(\lin(\hyp^*_6))$
is sufficient for enumerating the join trees of $\hyp_6$.
$\hyp^*_6$ also admits a monotonic weight join tree 
$\MWJT_P$ in \cref{subfig:mwjt_mwjtp}. 
\end{example}

\revB{A monotonic weight join tree $\MWJT$ can accelerate the construction
 of the equivalent graph in several ways.
First, all tree edges are already ``in place'',
 because they cannot slide along their lighter parent edges.
We therefore need only consider non-tree edges.
For each non-tree edge $\e = \{\rel_i, \rel_j\}$, there are two possible cases.
First, if one endpoint, say $\rel_i$, is an ancestor of the other,
 say $\rel_j$,
 then $\e$ can only slide to become parallel to the edge
 right below $\rel_i$ on the path connecting $\rel_i$ and $\rel_j$ in $\MWJT$.
This is because all edges on that path must be no lighter than $\e$
 due to the running intersection property of join trees,
 but since $\MWJT$ has monotonically increasing weight,
 at most one edge (the one right below $\rel_i$) can have equal weight to $\e$.
If this edge has weight equal to $\e$, it becomes the blocking edge,
 and $\e$ slides to become parallel to it;
 otherwise, $\e$ slides to a self-loop at $\rel_i$.
In the second case, if neither endpoint is an ancestor of the other,
 there are at most two blocking edges, 
 namely those below the lowest common ancestor of $\rel_i$ and $\rel_j$
 on the path between them.
We therefore only need to compare the weight of $\e$ with
 these two potentially blocking edges to determine the destination of $\e$.}

\begin{example}
\revB{
Consider again $\MWJT_p$ in~\cref{subfig:mwjt_mwjtp_non_tree}.
First note that every tree edge is heavier than its parent edge,
 and each non-tree edge is no heavier than any edge on the path
 connecting its endpoints in the tree.
There is one edge, $\{T, Y\}$, where one endpoint ($T$) is an ancestor of the other ($Y$).
The only potentially blocking edge is $\{T, U\}$ right below $T$.
But because $\{T, U\}$ is heavier,
 $\{T, Y\}$ slides to become a self-loop at $T$.
For the edge $\{W, Y\}$, all edges on the tree path connecting the endpoints 
 are heavier, so $\{W, Y\}$ also slides to a self-loop at their LCA $T$.
Finally, $\{Y, S\}$ is blocked by the two edges $\{P, T\}$ and $\{P, S\}$
 below their LCA $P$, as all three edges have a weight of 1.
Therefore $\{Y, S\}$ slides to $\{T, S\}$.}
\end{example}

Our algorithm for constructing the equivalent graph
 leverages the insight above to slide each edge in constant time.
In the following, we first define a few helpful notations and
 describe a preprocessing step to build helper data structures,
 before presenting the algorithm.

Given a rooted MCS tree $\T_\rel$ of a line graph $\lin$, a non-tree edge 
is $\e = \{\rel_i, \rel_j\} \in \E(\lin) \setminus \E(\T_\rel)$.
There is a path in $\T_\rel$ between $\rel_i$ and $\rel_j$ via their
lowest common ancestor $\LCA(\rel_i, \rel_j)$.
We define the {\em LCA edges} $\LCAE(\e) = \LCAE(\rel_i, \rel_j)$ 
as a set of at most two tree edges on the path and incident to 
$\LCA(\rel_i, \rel_j)$.
If $\rel_i$, $\rel_j$ are ancestor and child, then
$|\LCAE(\e)| = 1$, otherwise, $|\LCAE(\e)| = 2$.

During pre-processing, we first conduct a breadth-first 
search on $\T_\rel$ to obtain the depth table $\depth$ of each 
tree node in $\bigO(|\T_\rel|)$.
Then we build two data structures
in $\bigO(|\T_\rel|)$ to facilitate
 the constant-time query of the lowest common ancestor (LCA)~\cite{bender2000lca} 
 given two nodes,
 and the level ancestor (LA)~\cite{bender2004level}
 of a node $\rel$, 
 which is the ancestor of $\rel$ at a given depth.
\cref{algo:buildEG} assumes these data structures are prebuilt and available,
 which allows finding $\LCAE(\e)$ of any non-tree edge $\e$ in constant time
 as follows:
\begin{align*}
\LCAE(\{\rel_i, \rel_j\}) = 
\begin{cases}
        \{\{\rel_i, \LA(\rel_j, \depth(\rel_i) + 1)\}\} & \text{if } \rel_i = \LCA(\rel_i, \rel_j) \\
        \{\{\rel_j, \LA(\rel_i, \depth(\rel_j) + 1)\}\} & \text{if } \rel_j = \LCA(\rel_i, \rel_j) \\
        \{\{l, \LA(\rel_i, d)\}, \{l, \LA(\rel_j, d)\}\} & \text{otherwise, } l \defeq \LCA(\rel_i, \rel_j), d \defeq \depth(l) + 1
\end{cases}
\end{align*}

\begin{figure}
  \refstepcounter{figure} 
  \centering
  \begin{minipage}[c]{0.54\textwidth}
    \input{algos/algo_buildEG.tex}
  \end{minipage}
  \hfill
  \begin{minipage}[c]{0.44\textwidth}
    \vspace{.3cm}
    \setcounter{subfigure}{0} 
    \begin{subfigure}[b]{\textwidth}
    \centering
    \includegraphics[height=3.3cm,,keepaspectratio]{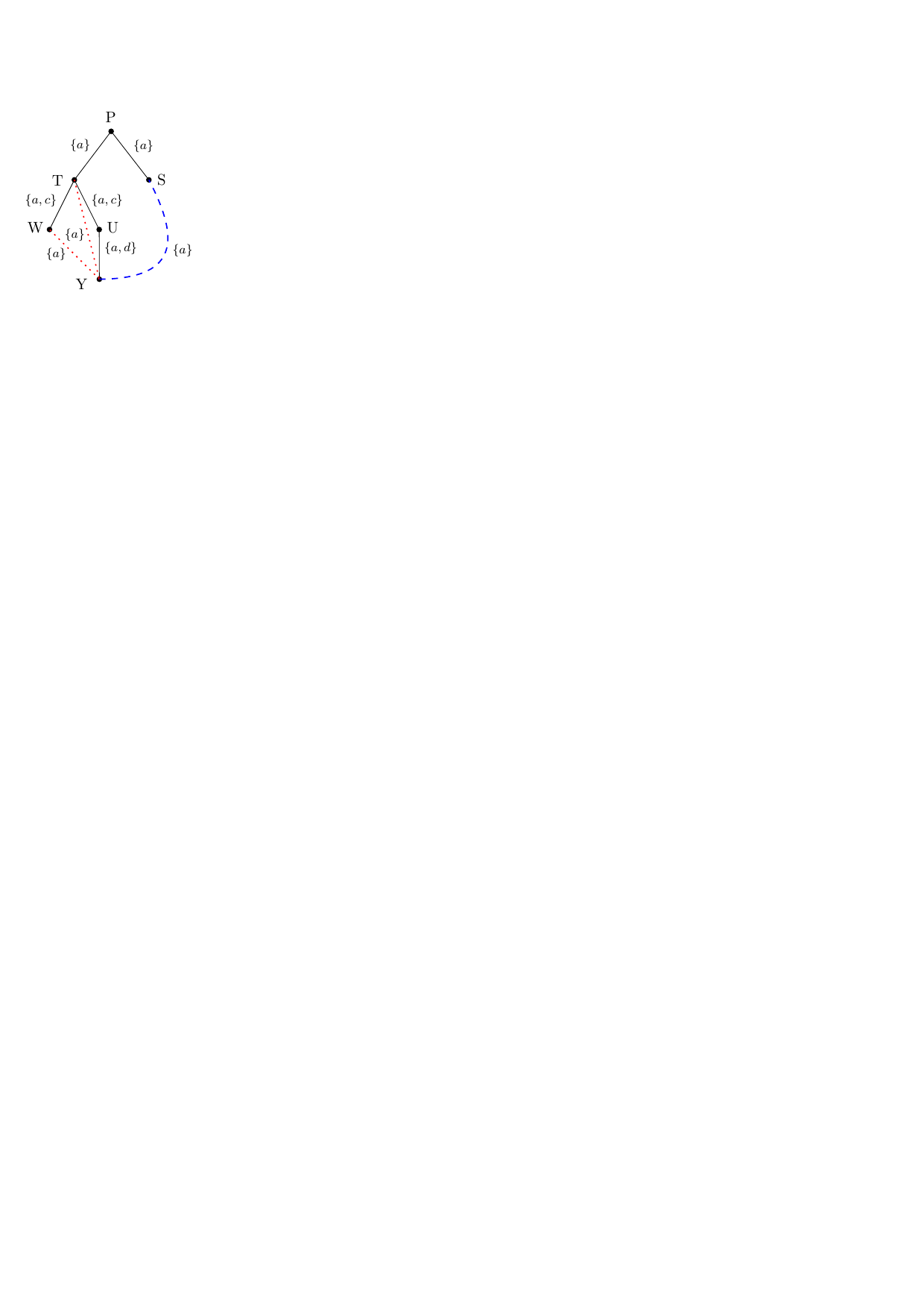}
    \caption{MCS Tree $\T_P(\hyp_6)$}
    \label{subfig:buildEG_tp}
    \end{subfigure}

    
    \begin{subfigure}[b]{\textwidth}
    \centering
    \includegraphics[height=3.3cm,,keepaspectratio]{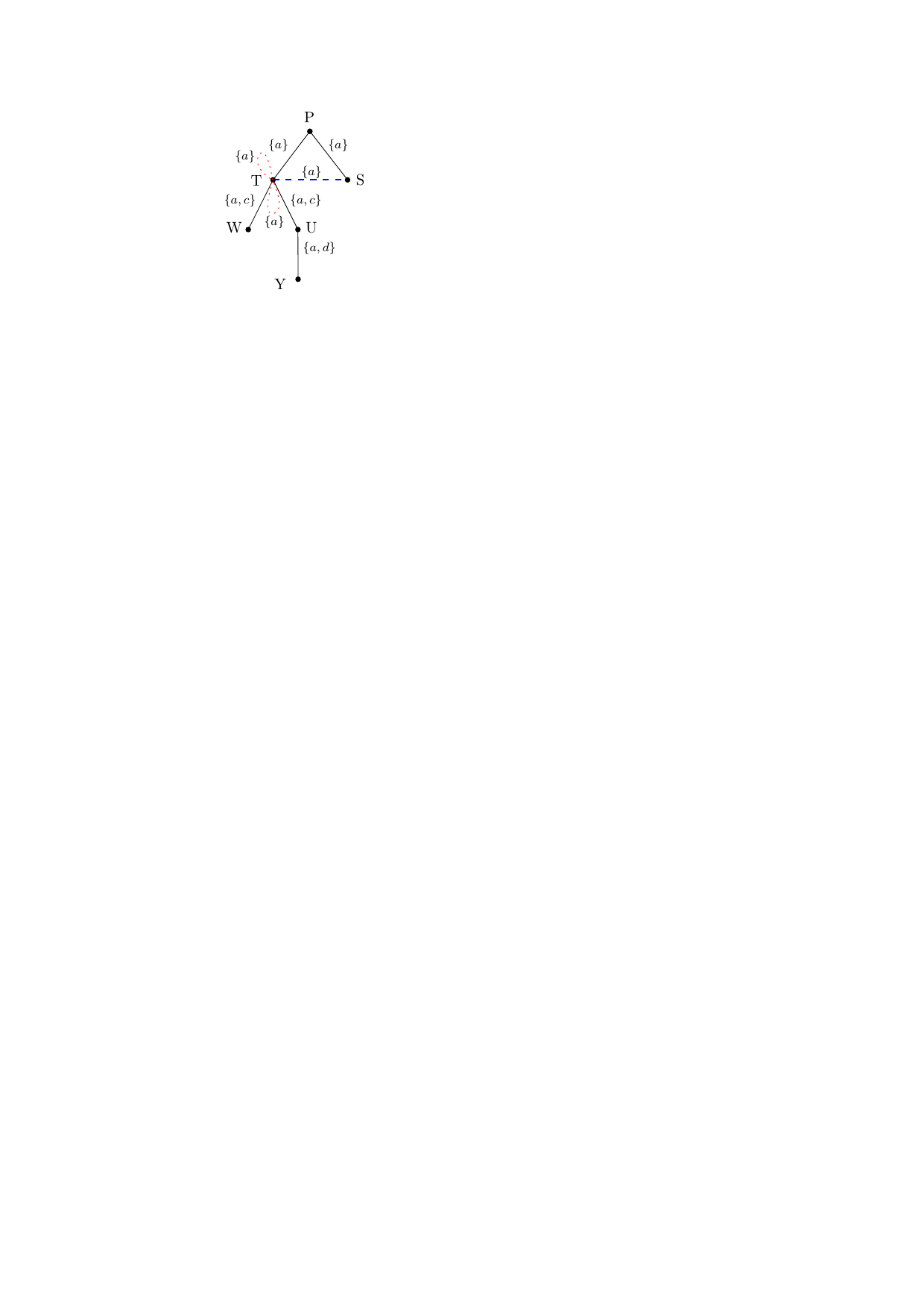}
    \caption{3 Non-tree Edges Processed}
    \label{subfig:buildEG_3slid}
    \end{subfigure}

    \hfill
    \begin{subfigure}[b]{.95\textwidth}
    \centering
    \includegraphics[height=3.3cm,,keepaspectratio]{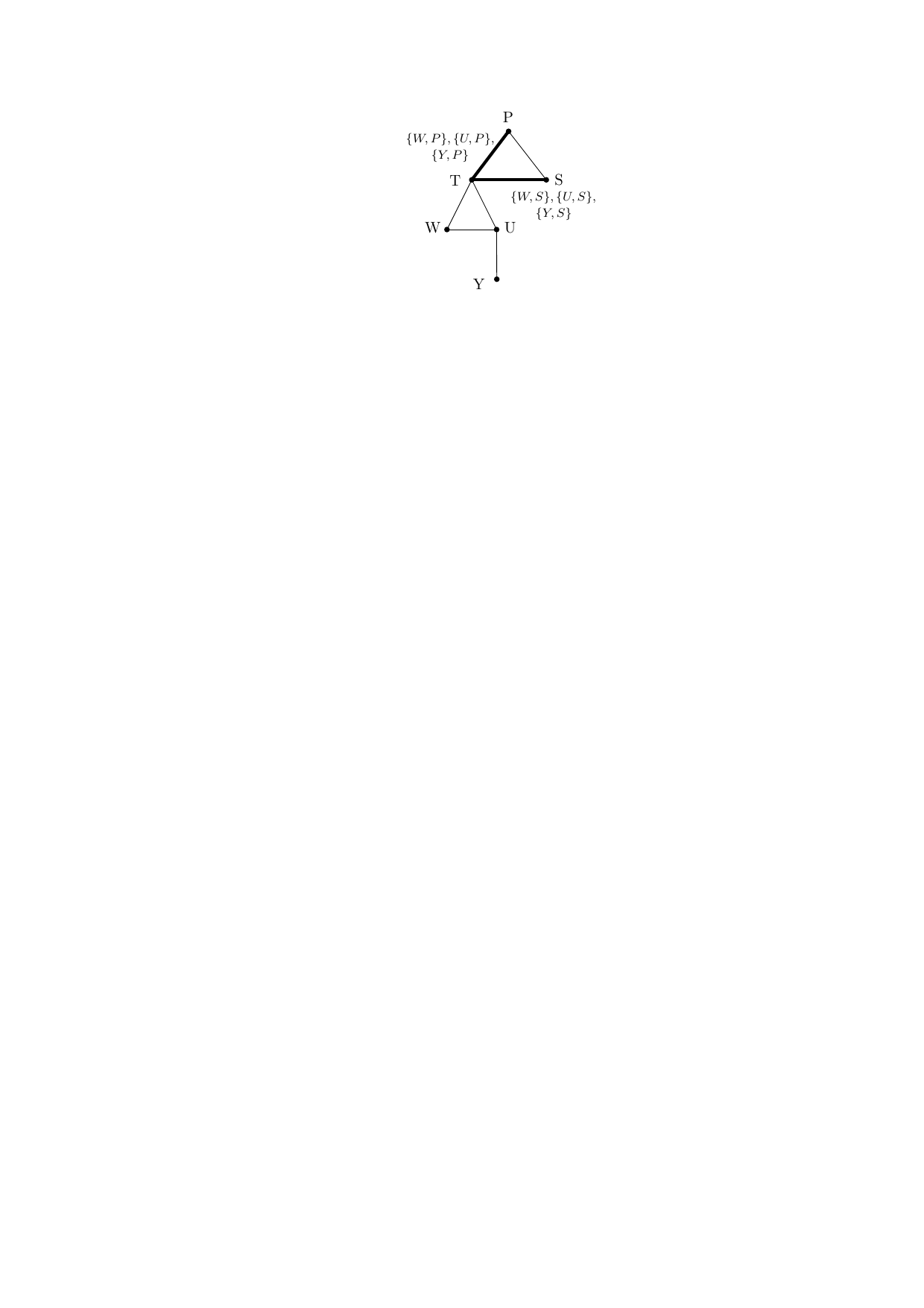}
    \caption{Equivalent Graph $\EG(\hyp_6)$}
    \label{subfig:buildEG_eg}
    \end{subfigure}
    \hfill
    \addtocounter{figure}{-1} 
    \captionof{figure}{\cref{algo:buildEG} on $\hyp_6$ of \cref{subfig:hyp_lin_jt_h6}}
    \label{fig:buildEG}
  \end{minipage}
\end{figure}

\revB{ We are now ready to present \cref{algo:buildEG} which constructs $\EG$
 by sliding each non-tree edge $\e = \{\rel_i, \rel_j\}$ with weight $\w^*$.
There are two cases.
First, if $\LCAE(\e)$ returns a single edge,
 then one endpoint is an ancestor of the other.
By the definition above $\LCAE(\e)$ always returns the ancestor node $l$ first,
 while the other node is denoted as $l'$.
The weight of $\{l, l'\}$ is then compared with $\w^*$:
 if $\w^*$ is lighter, $\e$ slides to a self-loop at $l$;
 otherwise the weights must be equal,
 and $\e$ slides to be parallel to $\{l, l'\}$.
Second, if $\LCAE(\e)$ returns two edges,
 then neither endpoint is an ancestor of the other.
In this case we compare $\w^*$ with the two edges in $\LCAE(\e)$.
If $\w^*$ is lighter than both, $\e$ slides to a self-loop at $l$;
 if $\w^*$ equals one of them, $\e$ slides to be parallel to that edge;
 otherwise, $\w^*$ equals both, and $\e$ slides to be incident to each of the highest depth endpoints of both edges.
The four cases are exhaustive, as the non-tree edge never outweighs
any of its LCA edges by the running intersection property.}

\begin{theorem}\label{thm:algo_buildEG_on_H*}
    \revB{Given a hypergraph $\hyp^* = (\X^*,\R,\vars^*)$ with
    a monotonic weight join tree $\MWJT$, \cref{algo:buildEG}
    produces an Equivalent Graph $\EG(\lin(\hyp^*))$.}
\end{theorem}

The notation $\hyp^*$ highlights later applications to equivalent hypergraphs,
although \cref{thm:algo_buildEG_on_H*} applies to any hypergraph admitting a
monotonic weight join tree.

\revA{Although~\cref{algo:buildEG} constructs an equivalent graph of $\lin(\hyp^*)$
 in constant time per edge, such an $\hyp^*$ may be asymptotically larger than $\hyp$.
Perhaps surprisingly, running~\cref{algo:buildEG}
 {\em directly on $\hyp$} produces the same equivalent graph!
This is because $\hyp^*$ is constructed in a way that preserves the
 structure of $\hyp$, keeping the relative weights of an edge and its LCA edges invariant.
Intuitively, we construct $\hyp^*$ by making ``local copies'' of the vertices
 in $\hyp$: this perturbs the weights to be monotonic along $\MWJT$,
 while ensuring the weights are adjusted consistently.
In particular, when making a new copy $\x'$ of a vertex $\x$,
 we add $\x'$ to {\em all hyperedges} that contain $\x$, 
 namely $\hyp|_\x$.
This way, $\vars(e) \subset \vars(e')$ in $\lin(\hyp)$
 if and only if $\vars(e) \subset \vars(e')$ in $\lin(\hyp^*)$.
As a result, all comparisons in~\cref{algo:buildEG} return the same result 
 when running on $\hyp$ or $\hyp^*$.
We illustrate this with an example, while deferring the detailed construction of $\hyp^*$
 to \cref{algo:buildDG} in \cref{appendix:missing_proofs_enum}.}

\begin{example}
\revA{
In~\cref{subfig:mwjt_hs6}, we made a copy $d'$ of $d$ and added it
 to all hyperedges containing $d$ in $\hyp_6$.
This maintains the relative weights between the tree edge $\{U, Y\}$
 and the non-tree edges highlighted in~\cref{subfig:mwjt_mwjtp_non_tree}:
 all three non-tree edges remain lighter than $\{U, Y\}$.
Were we to make a copy of $a$ to add to $\hyp_6|_a$ 
 (which contains every hyperedge in $\hyp_6$),
 the weights of the non-tree edges
 relative to their respective LCA edges would also remain unchanged.}
\end{example}

Finally, because $\hyp$ and $\hyp^*$ share the same join trees,
 there must be $\EG(\lin(\hyp))$ that is the same as $\EG(\lin(\hyp^*))$.
Together with~\cref{thm:algo_buildEG_on_H*}, this means running~\cref{algo:buildEG}
 directly on $\lin(\hyp)$ produces an equivalent graph $\EG(\lin(\hyp))$ of $\lin(\hyp)$:

\begin{theorem}\label{thm:algo_buildEG_on_H}
    \revB{Given the line graph $\lin$ and an MCS tree $\T_\rel$ of an
    $\alpha$-acyclic hypergraph $\hyp$,
    \cref{algo:buildEG} returns an equivalent graph $\EG(\lin)$
    in time $\bigO(|\lin|)$.}
\end{theorem}

Once we compute the equivalent graph $\EG(\lin)$, we can apply existing 
 algorithms~\cite{Kapoor1995,DBLP:journals/siamcomp/ShiouraTU97} to enumerate all its spanning trees by edits
 with amortized constant delay.
And because the MCS tree can be constructed in linear time
 from $\hyp$ which in turn can be recovered from $\lin(\hyp)$
 in linear time, the overall time complexity of enumerating all join trees of $\hyp$ is
 linear in the size of $\lin$ plus the number of join trees:

\begin{theorem}\label{thm:alpha_JTE_enum}
        \revB{Given the line graph $\lin$ of an $\alpha$-acyclic hypergraph $\hyp$,
        the join trees of $\hyp$ can be enumerated in time $\bigO(|\lin| + |\JTof{\hyp}|)$.}
\end{theorem}




\subsection{\texorpdfstring{$\gamma$-Acyclic Queries}{gamma-Acyclic Queries}} \label{subsection:JTGammaAcyc}

%
The run time of~\cref{algo:buildEG} depends on the size of
the line graph which can be quadratically larger than
the input hypergraph.
\revA{
If $\hyp$ is $\gamma$-acyclic,
 we can bring the total time complexity of enumeration down to
$\bigO(|\hyp| + |\JTof{\hyp}|)$.
Using an algorithm by Leitert~\cite{leitert2021ujg},
 we can construct $\lin(\hyp)$ from $\hyp$ in time
$\bigO(|\hyp| + |\lin(\hyp)|)$ when $\hyp$ is $\gamma$-acyclic;
 furthermore, because every edge in $\lin(\hyp)$ 
 of a $\gamma$-acyclic $\hyp$ is an MST edge~\cite{leitert2021ujg},
 we have $|\hyp| + |\JTof{\hyp}| \in \Omega(|\lin(\hyp)|)$,
 and so $\bigO(|\hyp| + |\lin(\hyp)| + |\JTof{\hyp}|) = \bigO(|\hyp| + |\JTof{\hyp}|)$,
 hiding the $|\lin(\hyp)|$ term in the overall complexity.}
\revB{
However, the $\lin(\hyp)$ constructed by Leitert is {\em unweighted},
 so we need to modify \cref{algo:buildEG} slightly to slide each non-tree edge
 based on the weights of its LCA edges alone.
We first modify MCS (\cref{algo:MCS}) to track the weight of each tree edge
 as in~\cref{algo:MCS4gamma}.
Then, we update the cases of~\cref{algo:buildEG}
 to those in~\cref{algo:buildEG4gamma}.
First, if $\LCAE(\e)$ returns a single edge,
 we directly slide $\e$ to be parallel to it,
 because we know $\e$ is an MST edge and therefore will not become a self-loop.
Otherwise, if $\LCAE(\e)$ returns two edges,
 we {\em compare the weights of those two edges}, 
 instead of comparing them with $\w(\e)$.
This is again because we know $\e$ is an MST edge,
 so it must have weight equal to at least one of its LCA edges.
If one of the LCA edges is lighter,
 $\e$ must have equal weight to that edge,
 and we slide $\e$ to be parallel to it.
Otherwise, if both LCA edges have the same weight,
 $\e$ must also have the same weight,
 and we slide $\e$ to be under both edges to form a triangle.
In all cases we only use the weights of the tree edges
 and do not need to know $\w(\e)$.}

\begin{figure}[tbp]
  \centering
  \begin{minipage}[t]{0.43\textwidth}
    \input{algos/algo_MCS4gamma.tex}
  \end{minipage}
  \hfill
  \begin{minipage}[t]{0.51\textwidth}
    \input{algos/algo_buildEG4gamma.tex}
  \end{minipage}
\end{figure}

\begin{theorem}\label{thm:gamma_JTE_enum}
    \revB{
    The join trees of a $\gamma$-acyclic $\hyp$
     can be enumerated in time
    $\bigO(|\hyp| + |\JTof{\hyp}|)$.}
\end{theorem}


\section{The Canonical Join Tree of a Berge-Acyclic Query}\label{section:cjt}

An acyclic query can have exponentially many join trees
 with respect to its size.
For example, the line graph of a clique query with $n$ relations 
 is an $n$-clique $K_n$ with $n^{n-2}$ join trees
 by Cayley's formula~\cite{cayley1889trees}.
Enumerating all join trees can be prohibitive for large queries.
The query optimizer does not need to consider all 
 possible join trees to achieve good performance.
For example, the implementation of Yannakakis' algorithm by 
Zhao et al.~\cite{zhao2025debunking} 
achieves similar performance on any join tree
rooted at the largest relation. 
An alternative to enumeration is therefore
 to simply construct one join tree for a given root.
This can be done in linear time by the Maximum Cardinality
 Search (MCS) algorithm from a chosen relation as shown in \cref{algo:MCS}.
In this section, we prove that for a Berge-acyclic query 
the MCS algorithm produces a {\em shallowest tree},
 where the depth of each tree node is minimized.
We prove that this shallowest tree is unique, and therefore call it 
the {\em canonical join tree}.

\begin{definition}\label{def:canonical_tree}
A join tree $\T_\rel$ rooted at $\rel$ is {\em canonical} if
$\depth(\T_{\rel}, \rel_i) \leq \depth(\T'_{\rel}, \rel_i)$
for any other join tree $\T'_\rel$ rooted at $\rel$
and any $\rel_i \in \R(\T_\rel) = \R(\T'_\rel)$.
\end{definition}
A shallow join tree has practical benefits.
For example, the depth of the join tree determines the number of
 sequential steps required in a parallel join algorithm.
A shallow join tree also tends to be wide and have more leaves,
 allowing better utilization of indices.

\revC{
Although Berge-acyclicity was thought to be too restrictive when it was
 first introduced to database theory~\cite{fagin1983degrees},
 we found it to be general enough to cover almost all acyclic queries
 encountered in the wild.
As shown in \cref{tab:experiments_benchmarks}, 
among \num{10454} queries from five popular benchmarks, \num{9285} are 
$\alpha$-acyclic, and only 8 of these are not Berge-acyclic.
In retrospect, this should not be surprising,
 as most joins in relational databases are over primary/foreign keys.
 Emerging workloads in graph databases usually involve simple graphs
 and seldom require composite key joins.
A query without composite key joins admits a {\em linear} hypergraph,
 where each pair of hyperedges shares at most one vertex.
The following result establishes an equivalence between
 $\alpha$-acyclicity with linearity and Berge-acyclicity.}

\begin{table}[tbp]
    \centering
\begin{tabular}{|c|c|c|c|c|}
\hline
Name       & \# Queries & \# $\alpha$-Acyclic & \# Composite-Key Joins & \# Berge-Acyclic \\ \hline
TPC-H\cite{tpch}      & 22            & 21      & 2      & 19                   \\ \hline
JOB\cite{leis2015good_JOB_benchmark}        & 113           & 113       &  0          & 113                   \\ \hline
STATS-CEB\cite{DBLP:journals/pvldb/HanWWZYTZCQPQZL21_stats_CEB_benchmark}  & 2603          & 2603       & 0        & 2603                   \\ \hline
CE\cite{chen2022accurate_CE_benchmark}         & 3004          & 1839       & 0        & 1839                   \\ \hline
Spider-NLP\cite{yu2018spider_nlp_benchmark} & 4712          & 4709      & 6       & 4703                   \\ \hline
\end{tabular}
\caption{\revC{Acyclic queries in the benchmarks (all $\alpha$-acyclic queries are also $\gamma$-acyclic.)}}
\label{tab:experiments_benchmarks}
\end{table}

\begin{proposition}\label{prop:berge_is_no_composite_alpha}
    An $\alpha$-acyclic hypergraph is Berge-acyclic if and only if it is linear.
\end{proposition}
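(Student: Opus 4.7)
The plan is to prove each direction separately, using the acyclicity hierarchy stated in the preliminaries for the easy direction and a contradiction argument based on the running intersection property for the harder direction.

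For the forward direction (Berge-acyclic implies $\alpha$-acyclic and linear), Berge-acyclicity already implies $\alpha$-acyclicity by the hierarchy cited after \cref{def:line-graph}'s discussion of GYO. Linearity follows because, with the standard definition of a Berge cycle as an alternating sequence $\x_1, \rel_1, \x_2, \rel_2, \ldots, \x_n, \rel_n, \x_1$ of distinct vertices and hyperedges with $n \geq 2$ and $\x_i,\x_{i+1} \in \rel_i$, any two hyperedges sharing two distinct vertices $\x_1, \x_2$ would immediately yield a Berge cycle $\x_1, \rel_1, \x_2, \rel_2, \x_1$ of length $2$. So if $\hyp$ is Berge-acyclic, no two hyperedges can share more than one vertex, i.e., $\hyp$ is linear.

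For the reverse direction, I would argue by contradiction: assume $\hyp$ is $\alpha$-acyclic and linear but contains a Berge cycle $\x_1, \rel_1, \x_2, \rel_2, \ldots, \x_n, \rel_n, \x_1$. Linearity rules out $n=2$, so $n \geq 3$. Since $\hyp$ is $\alpha$-acyclic, fix any join tree $\T$ of $\hyp$ and let $\T'$ be the minimal subtree of $\T$ whose node set contains $\{\rel_1,\ldots,\rel_n\}$. Because $\T'$ is minimal, every leaf of $\T'$ must lie in $\{\rel_1,\ldots,\rel_n\}$; pick such a leaf $\rel_k$, and let $\rel'$ be its unique neighbor inside $\T'$ (in particular, $\rel' \neq \rel_k$ is an actual node of $\T$).

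The key step is to apply the running intersection property twice. Since $\rel_k$ is a leaf of $\T'$, both the $\T$-path from $\rel_k$ to $\rel_{k-1}$ and the $\T$-path from $\rel_k$ to $\rel_{k+1}$ (indices mod $n$) pass through $\rel'$. The vertex $\x_k$ belongs to both $\rel_{k-1}$ and $\rel_k$, so by RIP $\x_k$ lies in every node along the first path, in particular in $\rel'$. Likewise $\x_{k+1} \in \rel_{k+1} \cap \rel_k$ forces $\x_{k+1} \in \rel'$ via the second path. Since the cycle's vertices are distinct, $\x_k \neq \x_{k+1}$, so $\rel'$ and $\rel_k$ share two vertices, contradicting linearity. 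The main obstacle here is purely bookkeeping: making sure the Berge-cycle definition being used guarantees distinct vertices and hyperedges (so $\rel' \neq \rel_k$ as hyperedges, and $\x_k \neq \x_{k+1}$ as vertices) and that the minimal subtree $\T'$ is well-defined and has its leaves in the marked set, both of which are standard tree facts.
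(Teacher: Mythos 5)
Your proof is correct, but it takes a different route from the paper's on both halves. For the forward direction the paper simply cites Fagin's result that Berge-acyclic hypergraphs are linear, whereas you give the one-line direct argument (two hyperedges sharing two vertices form a length-2 Berge cycle) --- same content, just self-contained. For the reverse direction the paper argues via a GYO-reduction order: it takes the first cycle hyperedge $\rel_1$ eliminated by GYO and observes that its GYO parent $\rel_p$ must contain $\rel_1 \cap \rel_0$ and $\rel_1 \cap \rel_2$, hence two distinct cycle vertices, contradicting linearity. You instead work directly with a join tree (whose existence is literally the paper's definition of $\alpha$-acyclicity), take the minimal Steiner subtree spanning the cycle hyperedges, pick a leaf $\rel_k$ of that subtree, and use the running intersection property along the two tree paths to $\rel_{k-1}$ and $\rel_{k+1}$ to force both $\x_k$ and $\x_{k+1}$ into $\rel_k$'s unique subtree neighbor $\rel'$. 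Both arguments produce the same kind of witness --- a hyperedge sharing two vertices with a cycle hyperedge --- but yours trades the GYO machinery for the join-tree/RIP machinery. The paper's version is shorter; yours is arguably closer to first principles given how the paper defines $\alpha$-acyclicity, at the cost of the (standard but worth stating) facts that the minimal spanning subtree exists and has all its leaves in the marked set. Your bookkeeping concerns at the end are all resolved by the paper's conventions: Berge cycles consist of distinct vertices and distinct hyperedges, and $n=2$ is excluded by linearity exactly as you say.
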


By \cref{prop:berge_is_no_composite_alpha}, 
 every edge in the line graph of a Berge-acyclic hypergraph
 has a weight of 1.
Every spanning tree is a maximum spanning tree, therefore a join tree.

\begin{corollary}\label{cor:spanning_tree_join_tree}
For a Berge-acyclic hypergraph $\hyp$, any spanning tree of $\lin(\hyp)$
is a join tree.
\end{corollary}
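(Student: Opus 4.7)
The plan is to chain together Proposition~\ref{prop:berge_is_no_composite_alpha} and Proposition~\ref{prop:mst_is_join_tree}; the corollary is a short direct consequence. First I would invoke Proposition~\ref{prop:berge_is_no_composite_alpha} to deduce that, since $\hyp$ is Berge-acyclic (and in particular $\alpha$-acyclic), it is linear: any two distinct hyperedges share at most one vertex. By Definition~\ref{def:line-graph}, for every edge $\e = (\rel_1,\rel_2) \in \E(\lin(\hyp))$ we have $\vars(\e) = \rel_1 \cap \rel_2 \neq \emptyset$ but $|\rel_1 \cap \rel_2| \le 1$, hence $\w(\e) = |\rel_1 \cap \rel_2| = 1$.

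Next I would observe that, with the weight function identically equal to $1$ on $\E(\lin(\hyp))$, any two spanning trees of $\lin(\hyp)$ have the same total weight, namely $|\R(\hyp)| - 1$. Consequently every spanning tree of $\lin(\hyp)$ is trivially a maximum spanning tree with respect to $\w$. Applying Proposition~\ref{prop:mst_is_join_tree}, which identifies the join trees of $\hyp$ with the maximum spanning trees of $\lin(\hyp)$ under $\w$, yields the desired conclusion that every spanning tree of $\lin(\hyp)$ is a join tree of $\hyp$.

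There is essentially no obstacle: the only substantive content is verifying that linearity forces the line-graph weights to be constant, after which Proposition~\ref{prop:mst_is_join_tree} does the rest. The only minor care is to note that edges of $\lin(\hyp)$ exist precisely when $\rel_1 \cap \rel_2 \neq \emptyset$, so ``at most one'' together with ``nonempty'' gives exactly weight $1$, ruling out degenerate cases.
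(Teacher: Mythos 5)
Your proposal is correct and follows essentially the same route as the paper: it invokes \cref{prop:berge_is_no_composite_alpha} to conclude that all line-graph edges have weight $1$, so every spanning tree is a maximum spanning tree, and then applies \cref{prop:mst_is_join_tree}. The extra care you take in noting that edges exist precisely when intersections are nonempty (so the weight is exactly $1$, not $0$) is a sensible detail the paper leaves implicit.
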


In the rest of this section, we prove the existence and uniqueness 
 of the canonical join tree rooted at any relation of a Berge-acyclic hypergraph,
 and show that it can be constructed by MCS as in \cref{algo:MCS}.

\revB{The key insight leading to the existence and uniqueness of the canonical
join tree is that the line graph $\lin$ of a Berge-acyclic hypergraph is 
{\em geodetic}~\cite{ore1962theory},
 meaning that there is a unique shortest path between any pair of vertices.
The canonical join tree can then be constructed
 by taking the union of all shortest paths from the root to each other vertex,
 as this will guarantee minimal depth for each vertex.
The key step to establish the geodetic property is to show that
 $\lin$ is a special class of {\em chordal graphs} called {\em block graphs}.
Chordal graphs are intimately related to acyclic hypergraphs~\cite{d1988hypergraph},
 and a block graph is a special chordal graph defined as follows:}
\begin{definition}[Chordal and Block]\label{def:block}
A simple graph $\G = (\R, \E)$ is 
\begin{itemize}
    \item {\em chordal}~\cite{DECARIA2016261} if every cycle of length at least 4 has a chord, 
    i.e., an edge that is not part of the cycle but connects two vertices 
    of the cycle;
    \item a {\em block graph}~\cite{harary1963characterization} if it is chordal
    and {\em diamond-free}, i.e., no subgraph induced by any $\R' \subseteq \R$ is a diamond (\cref{def:diamond}).
\end{itemize}

\end{definition}

\begin{lemma}\label{lemma:berge_block_graph}
The line graph $\lin$ of a Berge-acyclic hypergraph $\hyp$ is a block graph.
\end{lemma}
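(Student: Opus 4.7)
The plan is to invoke \cref{prop:berge_is_no_composite_alpha}: a Berge-acyclic $\hyp$ is exactly an $\alpha$-acyclic linear hypergraph, so any two hyperedges share at most one vertex. I will combine this with the standard characterization of Berge-acyclicity as saying that the incidence bipartite graph (vertices $\X\cup\R$, edges $\{(\x,\rel):\x\in\rel\}$) is a forest. These two facts are essentially all I need: the chordal and diamond-free properties both follow from the fact that short cycles in $\lin$ lift to cycles in the incidence graph, contradicting the forest structure.

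For chordality, I would suppose $\lin$ contains a chordless cycle $\rel_0,\rel_1,\ldots,\rel_{k-1}$ of length $k\ge 4$. Linearity gives $\rel_i\cap\rel_{i+1}=\{v_i\}$ for a unique vertex $v_i$ (indices mod $k$). I first argue that the labels $v_0,\ldots,v_{k-1}$ are pairwise distinct: if $v_i=v_j$ with $i\neq j$, then $\rel_i,\rel_{i+1},\rel_j,\rel_{j+1}$ all contain that common vertex and are therefore pairwise adjacent in $\lin$, which necessarily produces a chord once $k\ge 4$. Consequently the sequence $\rel_0,v_0,\rel_1,v_1,\ldots,\rel_{k-1},v_{k-1}$ is a cycle of length $2k\ge 8$ in the incidence graph with $2k$ distinct vertices, contradicting Berge-acyclicity.

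For diamond-freeness, I would suppose an induced diamond appears on $\rel_1,\rel_2,\rel_3,\rel_4$ with $\rel_1\not\sim\rel_3$ and every other pair adjacent, and write $u_{ab}$ for the unique element of $\rel_a\cap\rel_b$. I apply the following dichotomy to each of the triangles $\rel_2\rel_3\rel_4$ and $\rel_1\rel_2\rel_4$: either the three pairwise-shared vertices on the triangle are distinct---giving a 6-cycle in the incidence graph, contradiction---or two of them coincide, in which case linearity (each $\rel_a\cap\rel_b$ has at most one element) forces all three to equal a single vertex lying in all three hyperedges of the triangle. The distinct branch is immediately ruled out in both triangles, so the surviving case yields a vertex in $\rel_2\cap\rel_3\cap\rel_4$ and a vertex in $\rel_1\cap\rel_2\cap\rel_4$; by uniqueness of $\rel_2\cap\rel_4$ both equal $u_{24}$, whence $u_{24}\in\rel_1\cap\rel_3$, contradicting $\rel_1\not\sim\rel_3$. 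The main obstacle lies in this diamond step: the nested triangle dichotomy and the coupling of the two triangles through the shared edge $\rel_2\rel_4$ must be handled carefully, whereas the chordal step is essentially a direct transfer of a chordless $k$-cycle in $\lin$ into a $2k$-cycle in the incidence graph.
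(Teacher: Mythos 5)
Your proof is correct. The diamond-freeness half is essentially the argument the paper gives: split the diamond into two triangles sharing the edge $(\rel_2,\rel_4)$, observe that a triangle whose three pairwise intersections are distinct vertices yields a Berge cycle of length 3, conclude that each triangle has all three intersections equal to a single common vertex, and couple the two triangles through $\rel_2\cap\rel_4$ to force $\rel_1\cap\rel_3\neq\emptyset$. Where you genuinely diverge is chordality: the paper does not prove it at all but cites Zhu's result that the line graph of a $\gamma$-acyclic hypergraph is chordal (using Berge $\Rightarrow$ $\gamma$), whereas you give a short self-contained argument that a chordless cycle $\rel_0,\ldots,\rel_{k-1}$ ($k\geq 4$) in $\lin$ lifts to a Berge cycle $(\rel_0,v_0,\ldots,\rel_{k-1},v_{k-1})$, with the distinctness of the $v_i$ enforced by the absence of chords. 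This buys a proof that stays entirely within the paper's own \cref{def:acyclicity} (a closed cycle in the incidence graph alternating distinct hyperedges and distinct vertices is exactly a Berge cycle, so you do not even need the forest characterization as an external fact) at the cost of a slightly longer write-up; your reliance on linearity is also unproblematic, since you only use the direction Berge-acyclic $\Rightarrow$ linear of \cref{prop:berge_is_no_composite_alpha}, which the paper attributes to Fagin and which follows immediately from the length-2 Berge cycle, so there is no circularity.
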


%
Together with the fact that every block graph is {\em geodetic} 
(with a unique shortest path between any two vertices)~\cite{ore1962theory}, 
\cref{lemma:berge_block_graph} implies the following corollary.
\begin{corollary}\label{cor:unique_shortest_path}
Let $\lin$ be the line graph of a Berge-acyclic hypergraph $\hyp$.
There is a unique shortest path between any two vertices in $\lin$.
\end{corollary}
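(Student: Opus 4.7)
The plan is to combine Lemma~\ref{lemma:berge_block_graph} with the classical result of Ore~\cite{ore1962theory} that every block graph is geodetic. Since ``geodetic'' is synonymous with ``having a unique shortest path between any two vertices,'' the corollary will follow by a one-line deduction once the lemma has been invoked, so the whole argument fits comfortably into a short proof.

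More concretely, I would first cite Lemma~\ref{lemma:berge_block_graph} to establish that $\lin$ is a block graph. I would then apply Ore's theorem to extract the unique shortest path property and immediately conclude. If I preferred not to rely on Ore's result as a black box, a direct structural argument is also very short: in a block graph every biconnected component is a clique, and the blocks are organized into a tree, the block-cut tree. Consequently, any two vertices $\rel_i, \rel_j \in \R(\lin)$ either share a block---in which case they are adjacent and the unique shortest path has length one---or lie in distinct blocks, in which case the sequence of blocks and cut vertices that must be traversed is determined by the unique path in the block-cut tree, so the shortest path between $\rel_i$ and $\rel_j$ is uniquely determined as well.

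The only substantive work has already been absorbed into Lemma~\ref{lemma:berge_block_graph}, which translates the purely hypergraph-theoretic hypothesis of Berge-acyclicity into the graph-theoretic statement that $\lin$ is chordal and diamond-free. Given that lemma, the corollary is essentially a citation of a well-known fact, and there is no real obstacle beyond naming the appropriate classical result.
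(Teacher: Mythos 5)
Your proposal is correct and follows exactly the paper's argument: invoke Lemma~\ref{lemma:berge_block_graph} to get that $\lin$ is a block graph, then cite the classical fact that block graphs are geodetic~\cite{ore1962theory}. The optional self-contained argument via the block-cut tree is a sound additional justification but does not change the route.
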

We are now ready to prove the existence and uniqueness of the canonical join tree.
\begin{theorem}\label{thm:canonical_join_tree}
\revB{A Berge-acyclic hypergraph $\hyp$ has a unique canonical tree.}
\end{theorem}
\begin{proof}
Let $\lin$ be the line graph of $\hyp$, 
 and $\spath(\rel, \rel')$ be
 the shortest path in $\lin$ between $\rel, \rel' \in \R(\lin)$,
 we prove that
$\T_{\rel} = \bigcup_{\rel' \in \R(\lin)}\spath(\rel, \rel')$
 is the unique canonical join tree for $\hyp$ rooted at $\rel$.

By~\cref{cor:spanning_tree_join_tree} any spanning tree of $\lin$ is a join tree.
$\T_\rel$ is connected and spans all vertices in $\R(\lin)$,
 because it contains the shortest path from $\rel$ to every $\rel' \in \R(\lin)$.
It remains to show that $\T_\rel$ is acyclic,
 which we prove by induction on the distance $\dist(\rel, \rel')$ 
between $\rel$ and $\rel'$.
Let $\R_d = \setof{\rel' \in \R(\lin)}{\dist(\rel, \rel') \leq d}$.
$\R_0 = \{\rel\}$ contains only the root.
The subgraph $\T_\rel|_{\R_0}$ is trivially acyclic.
Assuming that $\T_\rel|_{\R_{d>0}}$ is acyclic, 
we consider a vertex $\rel' \in \R_{d+1} \setminus \R_d$.
\cref{cor:unique_shortest_path} guarantees a unique shortest path 
between each pair of vertices $\rel, \rel' \in \R(\lin)$.
Each $\rel'$ is connected to a unique neighbor 
$\rel'' \in \R_{d} \setminus \R_{d - 1}$ that is at distance $d$ from $\rel$.
 Otherwise, there are at least two distinct shortest paths from $\rel$ to $\rel'$.
Therefore $\T_\rel|_{\R_{d + 1}}$ is acyclic,
and $\T_\rel$ is a spanning tree thus a join tree of $\hyp$.

The join tree $\T_\rel$ is canonical, because the path from $\rel$ to 
each $\rel' \in \R(\lin)$ is the shortest and therefore minimizing the depth 
$\depth(\T_{\rel}, \rel')$.
The canonical tree is unique by~\cref{cor:unique_shortest_path}.
\end{proof}

The canonical join tree can be constructed by the MCS algorithm
 as shown in~\cref{algo:MCS}:
 
\begin{theorem}\label{thm:berge_cjt}
    Given a Berge-acyclic hypergraph $\hyp$,
    running \cref{algo:MCS} from $\rel \in \R(\hyp)$ constructs the 
    canonical tree $\T_{\rel}(\hyp)$.
\end{theorem}
%
%
\section{Converting a Binary Join Plan to a Join Tree}\label{section:frombinplan}

\begin{figure}
  \begin{minipage}[c]{0.5\textwidth}
    \input{algos/algo_binary2tree.tex}
  \end{minipage}
  \hfill
  \begin{minipage}[c]{0.49\textwidth}
    \vspace{1.1cm}
    \input{figures/b2jt_example.tex}
    \vspace{.2cm}
    \caption{\revA{\cref{algo:bin2tree}\cite{hu2024treetracker} on 
  JOB-3a\cite{leis2015good_JOB_benchmark}}}\label{fig:b2jt_example}
  \end{minipage}
\end{figure}

Recent approaches~\cite{hu2024treetracker,DBLP:journals/pvldb/BekkersNVW25}
that convert a binary join plan into a join tree have gained
popularity as they allow system builders to leverage existing query optimizers
designed for binary join plans.
In this section, we focus on an algorithm by Hu et al.~\cite{hu2024treetracker}
 to convert left-deep linear join plans into join trees as shown in \cref{fig:b2jt_example}.
We prove that the algorithm converts any connected 
left-deep linear join plan into a join tree if and only if the 
query is $\gamma$-acyclic.
This can be seen as a new characterization
 of $\gamma$-acyclic queries.
We formally define binary join plans and describe the algorithm
 by Hu et al.\ in \cref{algo:bin2tree}.

\begin{definition}
A {\em left-deep linear plan} is a sequence of \revB{hyperedges} $(\rel_1, \rel_2, \ldots, \rel_n)$.
It is {\em connected} if
 for each $\rel_{i \geq 2}$, 
 $\exists \rel_{j < i} : \rel_i \cap \rel_j \neq \emptyset$.
\end{definition}

Query optimizers strive to produce connected plans,
 to avoid expensive Cartesian products.
Many optimizers produce exclusively left-deep linear plans.
Plans that are not left-deep are called {\em bushy},
 and such plans may still be decomposed into left-deep fragments~\cite{wang2023free}.

Given a left-deep linear plan, Hu et al.~\cite{hu2024treetracker}
 generate a join tree with \cref{algo:bin2tree}.
\revA{The algorithm chooses the first relation $\rel_1$ as the root and
  iterates through the rest of the plan.
For each relation $\rel_{i \in [2, n]}$,
 it finds the first relation $\rel_j$ that contains all attributes
 shared by $\rel_i$ with all previous relations, $\text{key} \defeq \rel_i \cap \bigcup_{k < i} \rel_{k}$,
 and assigns $\rel_j$ as the parent of $\rel_i$.
The algorithm constructs a join tree
 if it finds a parent for each $\rel_i$.}

Hu et al.~\cite{hu2024treetracker} proved 
that the algorithm succeeds whenever the input plan
 is the reverse of a GYO-reduction order.
They also observed that every left-deep linear plan
 produced for queries in standard benchmarks
 is indeed the reverse of a GYO-reduction order.
This is not a coincidence,
 as we show that every connected left-deep linear join plan
 must be the reverse of a GYO-reduction order
 if and only if the query is $\gamma$-acyclic.
\begin{theorem}\label{thm:bin2tree}
A query is $\gamma$-acyclic
 if and only if every connected left-deep linear join plan
 for the query is the reverse of a GYO-reduction order.
\end{theorem}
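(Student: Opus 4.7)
I would prove both implications by contrapositive, reducing the equivalence to the following single statement: the query hypergraph $\hyp$ contains a $\gamma$-cycle if and only if some connected left-deep linear plan of $\hyp$ fails the parent-finding step of \cref{algo:bin2tree}. Recall that a $\gamma$-cycle is a sequence of distinct hyperedges $\rel'_1, \ldots, \rel'_k$ ($k \geq 3$) with associated vertices $x_1, \ldots, x_k$ such that $x_j \in \rel'_j \cap \rel'_{(j \bmod k)+1}$, each $x_j$ with $j < k$ appears only in $\rel'_j$ and $\rel'_{j+1}$ throughout $\hyp$, while $x_k$ is permitted to appear in other edges.

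For the backward direction ($\Leftarrow$), given the $\gamma$-cycle above I would construct the plan $\rel'_1, \rel'_k, \rel'_2, \rel'_3, \ldots, \rel'_{k-1}$, then append the remaining relations in any order preserving connectedness (possible since $\hyp$ is connected). The prefix is connected because $\rel'_k$ shares $x_k$ with $\rel'_1$, and for $2 \leq j \leq k-1$ the relation $\rel'_j$ shares $x_{j-1}$ with the just-placed $\rel'_{j-1}$. At position $k$, the relation $\rel'_{k-1}$ has an attachment set that contains both $x_{k-2}$ (shared with $\rel'_{k-2}$) and $x_{k-1}$ (shared with $\rel'_k$). By strictness, among earlier relations only $\rel'_{k-2}$ contains $x_{k-2}$ and only $\rel'_k$ contains $x_{k-1}$; neither contains the other vertex, so no earlier relation covers the attachment set and the algorithm fails.

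For the forward direction ($\Rightarrow$), suppose the connected plan $\rel_1, \ldots, \rel_n$ first fails at position $i$, meaning no $\rel_p$ with $p < i$ contains $V_i := \rel_i \cap \bigcup_{j<i} \rel_j$. Because $\gamma$-acyclicity is closed under taking sub-hypergraphs, it suffices to expose a $\gamma$-cycle within $\hyp_i := \{\rel_1, \ldots, \rel_i\}$. I would pick a minimal $V' \subseteq V_i$ not covered by any earlier relation, so $|V'| \geq 2$; minimality then provides, for each $v \in V'$, an earlier relation $\rel_{p(v)}$ containing $V' \setminus \{v\}$ but missing $v$. Combining $\rel_i$, these $\rel_{p(v)}$'s, and a chordless path between them in $\lin(\hyp_{i-1})$ (which exists by the connectedness of $\hyp_{i-1}$), I would assemble a cyclic sequence and verify that its vertices satisfy the $\gamma$-cycle strictness conditions in $\hyp_i$.

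The main obstacle lies in the forward direction: ensuring the assembled cycle is genuinely a $\gamma$-cycle, i.e., that each ``non-special'' vertex appears in exactly its two adjacent cycle edges within $\hyp_i$. My plan is to work with a \emph{minimum} failing instance --- smallest $i$, smallest $|V'|$, shortest connecting chordless path --- so that any unwanted incidence (which would supply a shortcut in the cycle or an extra edge containing a cycle vertex) would produce a strictly smaller failing instance, contradicting minimality. A short case analysis on where the flexible $\gamma$-cycle vertex $x_k$ lies --- within $V'$ or along the connecting path --- should then complete the argument.
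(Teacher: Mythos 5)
Your ``if'' direction (a $\gamma$-cycle yields a connected plan that is not a reverse GYO-reduction order) is correct and is essentially the paper's own construction: both of you list the cycle relations first with one interior cycle relation deferred to the very end, so that its attachment set contains two privately-shared vertices that no single earlier relation covers. One definitional caution: the paper's $\gamma$-cycle (\cref{def:acyclicity}) only requires each $\x_{i<k-1}$ to avoid the \emph{other edges of the cycle}, not every edge of $\hyp$; your stronger ``throughout $\hyp$'' reading is harmless in this direction (the prefix consists only of cycle relations) but it contradicts your later appeal to ``$\gamma$-acyclicity is closed under sub-hypergraphs,'' which is only true under the edge-local reading. You should fix the definition before relying on heredity.

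The genuine gap is in the forward direction, and it sits exactly where you flagged it. Your minimal uncovered set $V'\subseteq V_i$ is a reasonable (and genuinely different) entry point from the paper's, which instead takes a maximal element $\hat{\rel}$ of the preorder $\geq_{\orphan{\rel}}$ and an incomparable $\bar{\rel}$ (\cref{lem:maximal}); in fact your device is \emph{cleaner} when $|V'|\geq 3$, since any three of the covering relations $\rel_{p(v_1)},\rel_{p(v_2)},\rel_{p(v_3)}$ already form a length-3 $\gamma$-cycle outright ($v_3\in\rel_{p(v_1)}\cap\rel_{p(v_2)}\setminus\rel_{p(v_3)}$ and cyclically), with no path and no $\rel_i$ needed. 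But when $|V'|=2$ and the two covering relations are disjoint --- which is the only case that actually requires your ``chordless path'' --- you have not shown that the assembled sequence is a $\gamma$-cycle; you have only asserted that a minimality argument ``should'' do it. This is precisely the content of Case~2 of the paper's proof: one must choose the path in the prefix subject to an \emph{endpoint condition} (from a variable of $\hat{\rel}\cap\orphan{\rel}$ to a variable of $\orphan{\rel}\setminus\hat{\rel}$) and then verify that no interior relation of the path is adjacent to $\orphan{\rel}$ or to a non-neighboring path relation --- each such incidence must be shown to shorten the path, which yields a \emph{pure cycle} and hence a $\gamma$-cycle. Minimizing over $i$ and $|V'|$ does not obviously substitute for this endpoint condition (an interior path relation meeting $\orphan{\rel}$ does not by itself produce an earlier failure or a smaller uncovered set), so until that case analysis is written out the forward direction is not proved.
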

Immediately following \cref{thm:bin2tree}, we can conclude the following.
\begin{corollary}\label{cor:bin2tree}
  \revB{For any $\gamma$-acyclic query,
  \cref{algo:bin2tree} converts a given connected left-deep linear join plan
  to a join tree.}
\end{corollary}

\section{Conclusion and Future Work}\label{section:conclusion}

We proposed three approaches for constructing join trees. 
Our enumeration algorithm in~\cref{section:enum} generates join trees 
by edits with amortized constant delay;
in~\cref{section:cjt}, we showed that the Maximum Cardinality Search 
algorithm constructs the unique shallowest join tree for any 
Berge-acyclic query;
in~\cref{section:frombinplan},
 we characterize the class of binary join plans
 that can be converted to join trees. 
\revA{Practitioners can choose from the three approaches to integrate
 instance-optimal algorithms into their system:
 a cost-based optimizer can use our enumeration algorithm
 to generate candidate query plans;
 a system aiming to support very large queries can implement
 the MCS algorithm to generate shallow join trees
 to improve parallelism; 
 and a more conservative extension can derive join trees
 using existing optimization infrastructure,
 while our final result guarantees the validity of the output. }

Future work includes compact representations of join trees for dynamic 
programming, as in binary plan optimizers, and the challenging 
cost estimation for Yannakakis-style algorithms:
the random-walk approach~\cite{DBLP:conf/sigmod/0001WYZ16} models 
joint probabilities for binary joins, 
whereas an efficient and accurate solution for semijoins 
remains to be found.

\begin{figure}[tbp]
  \centering
  \begin{subfigure}[b]{0.2\textwidth} 
    \centering
    \includegraphics[height=2.6cm,,keepaspectratio]{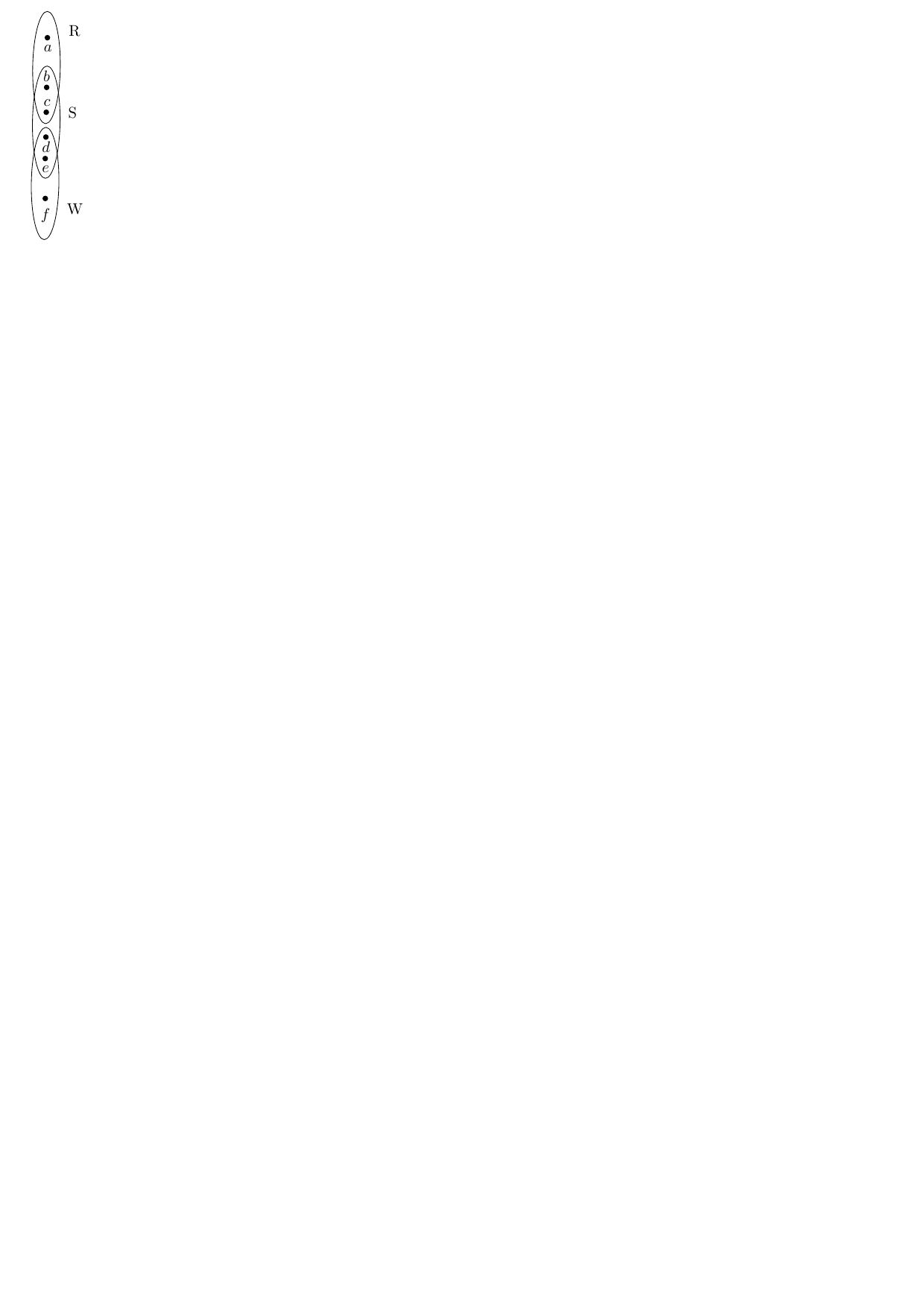}
    \caption{$\hyp_3$}
    \label{subfig:non_berge_unique_h}
  \end{subfigure}\hfill
  \begin{subfigure}[b]{0.2\textwidth} 
    \centering
    \includegraphics[height=2cm,,keepaspectratio]{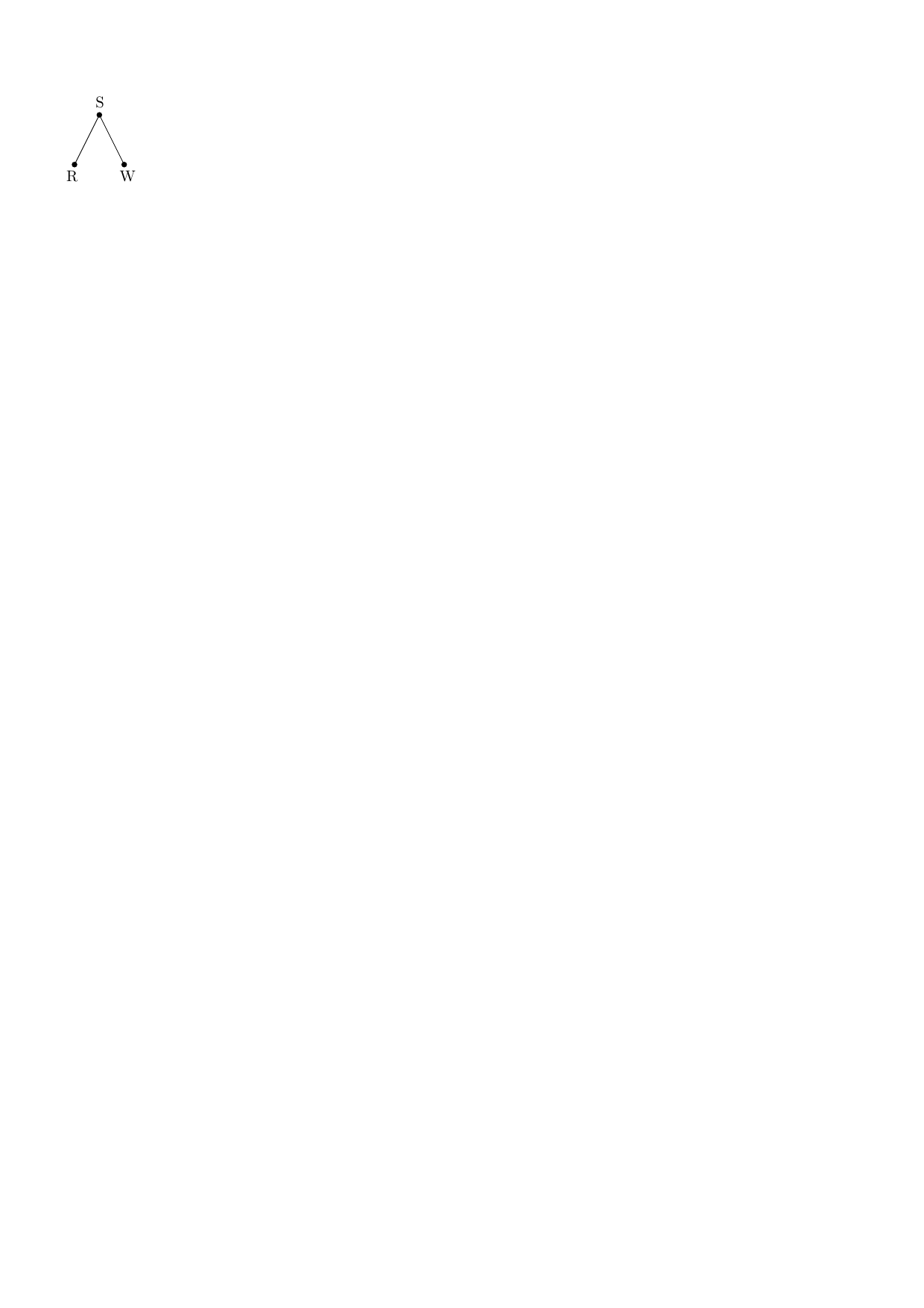}
    \caption{$\T_S(\hyp_3)$}
    \label{subfig:non_berge_unique_t}
  \end{subfigure}\hfill
  \begin{subfigure}[b]{0.2\textwidth} 
    \centering
    \includegraphics[height=2.6cm,,keepaspectratio]{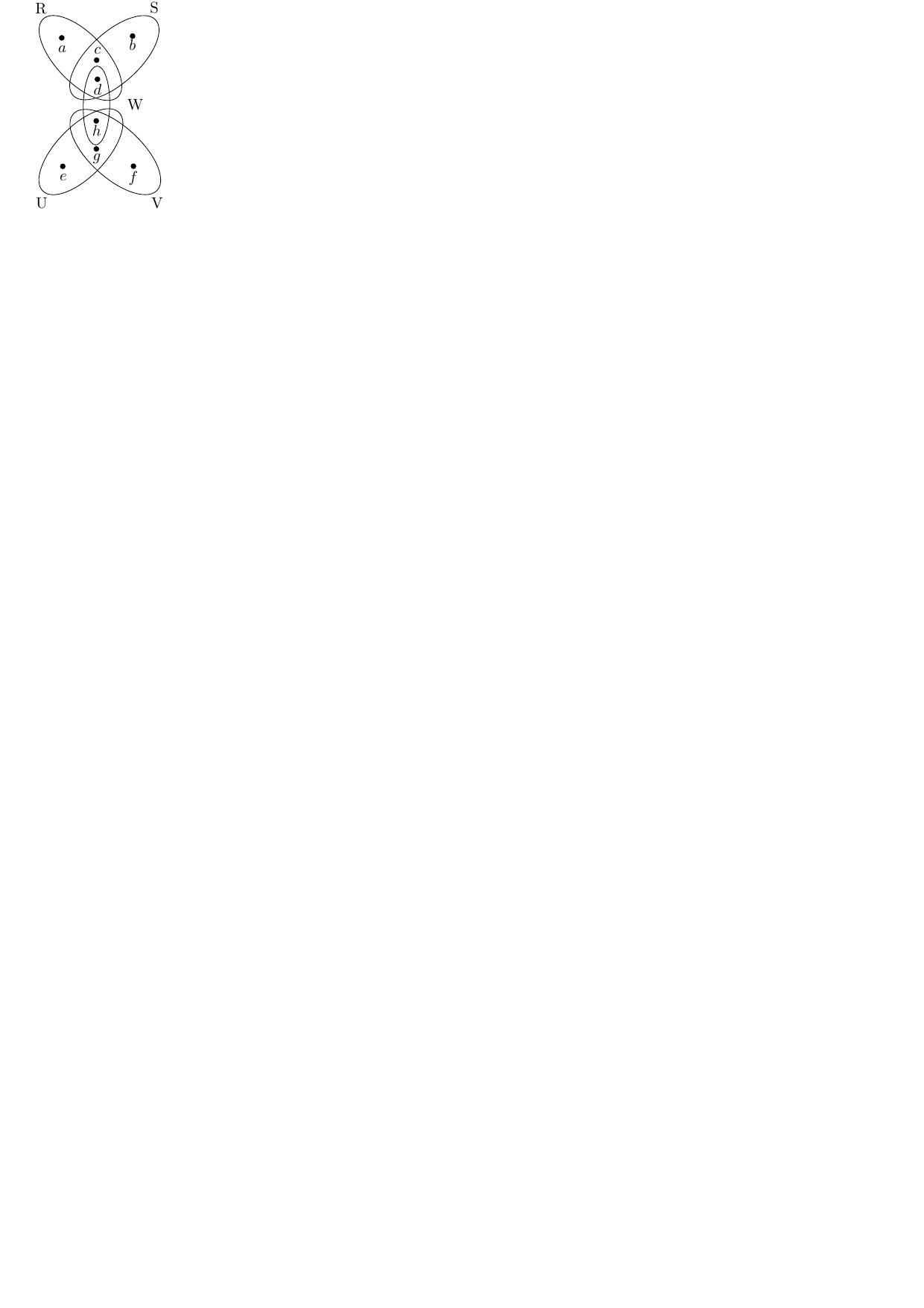}
    \caption{$\hyp_5$}
    \label{subfig:gamma_non_unique_h}
  \end{subfigure}\hfill
  \begin{subfigure}[b]{0.2\textwidth} 
    \centering
    \includegraphics[height=2.6cm,,keepaspectratio]{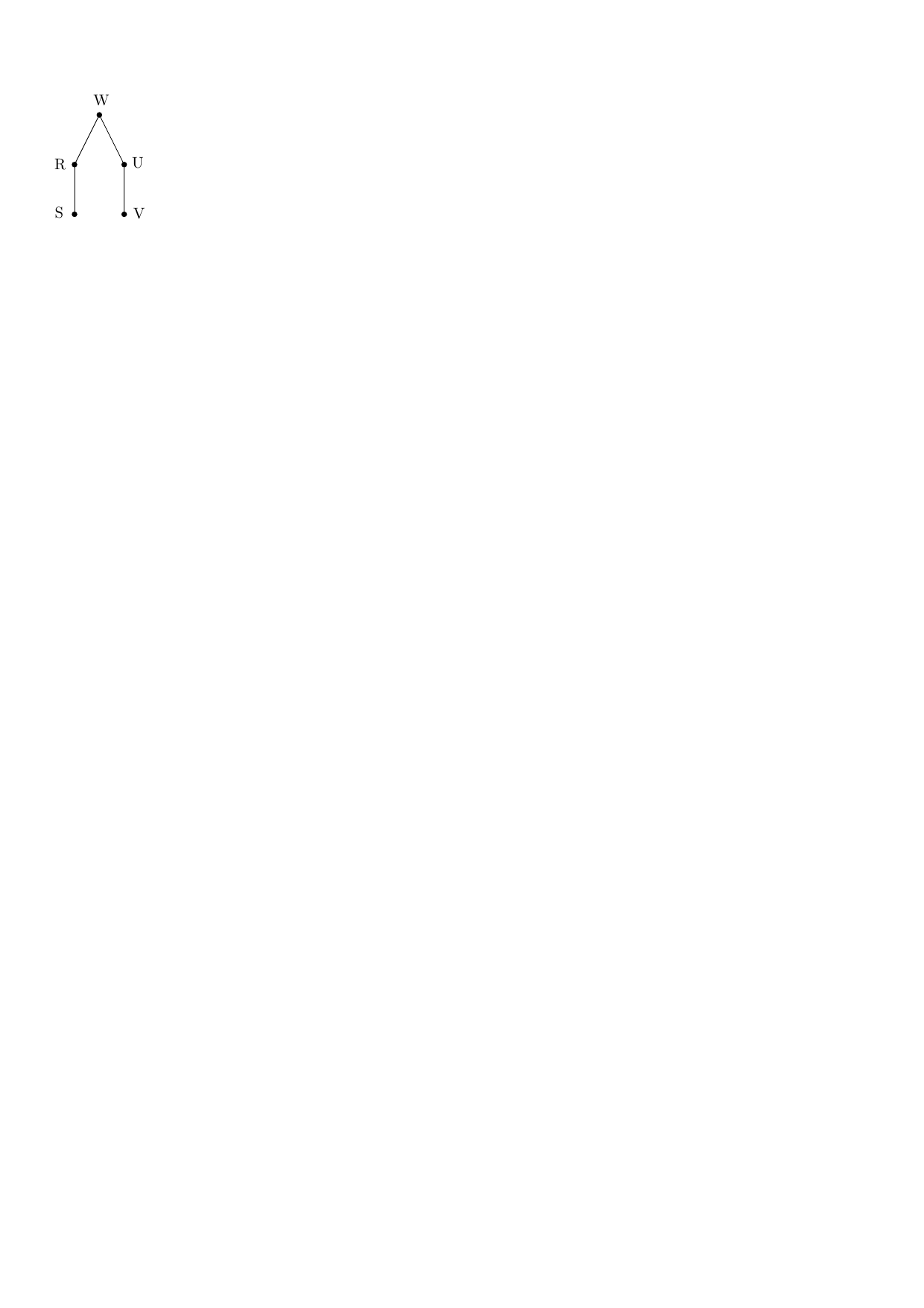}
    \caption{$\T_W(\hyp_5)$}
    \label{subfig:gamma_non_unique_t1}
  \end{subfigure}\hfill
  \begin{subfigure}[b]{0.2\textwidth} 
    \centering
    \includegraphics[height=2.6cm,,keepaspectratio]{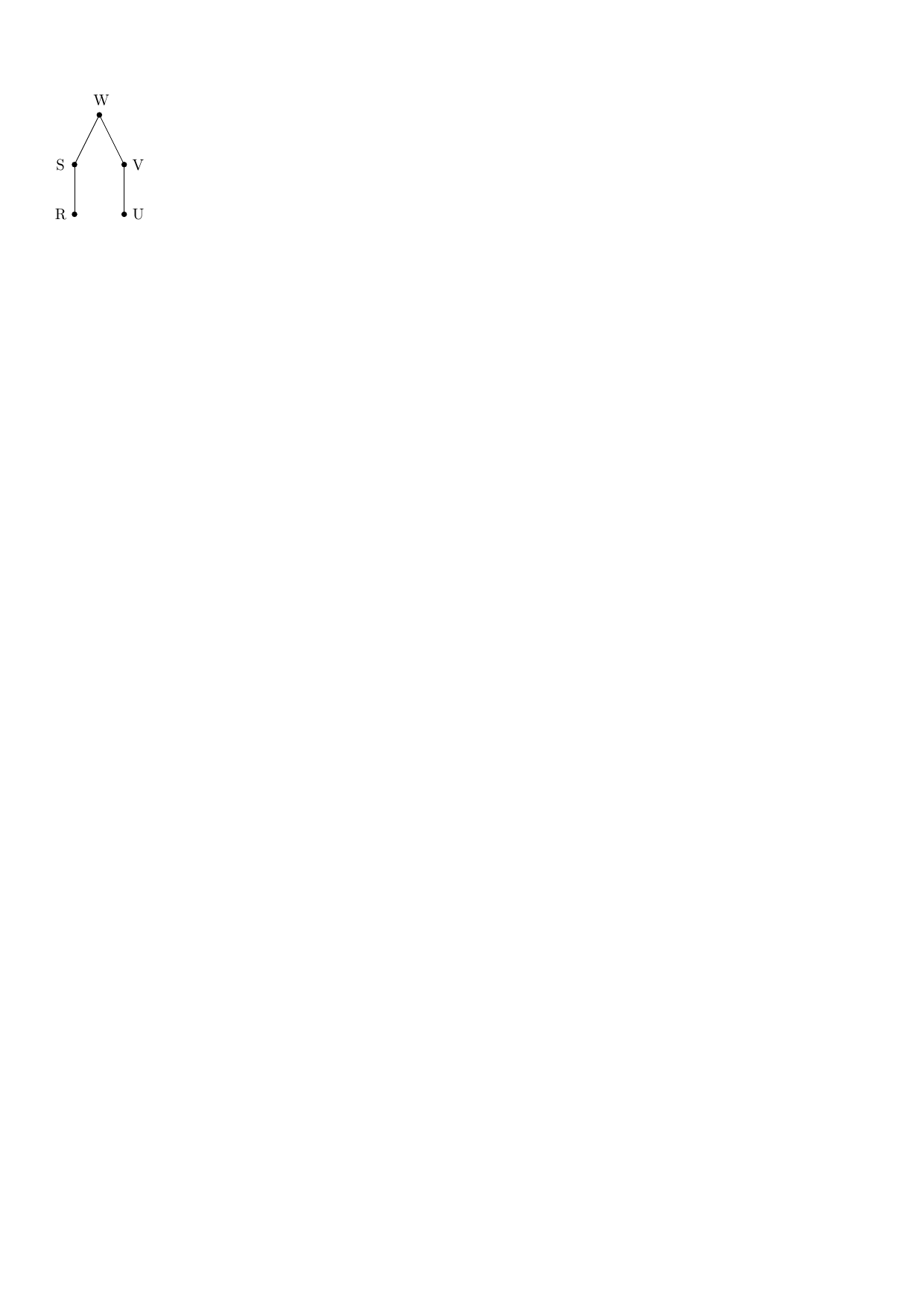}
    \caption{$\T'_W(\hyp_5)$}
    \label{subfig:gamma_non_unique_t2}
  \end{subfigure}\hfill
  \caption{
    $\hyp_3$ is not Berge-acyclic but admits a unique canonical join tree $\T_S(\hyp_3)$
    with any relation chosen as root, such as $\T_S(\hyp_3)$.
    $\hyp_5$ is $\gamma$-acyclic and does not admit unique canonical join trees
    at any relation chosen as root. For example,
    $\T_W(\hyp_5)$ and $\T'_W(\hyp_5)$ are MCS trees generated by \cref{algo:MCS}.
    Neither of them is a canonical join tree rooted at $W$.
  }
  \label{fig:non_berge_n_gamma}
  
\end{figure}

Our work also raises further theoretical questions. 
Can join tree enumeration achieve worst-case constant delay? 
We proved {\em Berge-acyclicity} sufficient for the existence and uniqueness 
of the canonical join tree, but it is not necessary, and 
{\em $\gamma$-acyclicity} is insufficient as shown in 
\cref{fig:non_berge_n_gamma}. 
What is the precise characterization of hypergraphs that admit a unique 
canonical join tree for any root, or for {\em some} root?
\revA{
How can our algorithms be extended to enumerate tree decompositions 
\`a la Carmeli et al.~\cite{DBLP:conf/pods/CarmeliKK17}?
One possible direction is to further develop connections between MSTs of the line graph
 and tree decompositions of the hypergraph.
For example, we are considering using the determinants of the line graph Laplacian
 as heuristics to guide the search for good tree decompositions. }

\bibliographystyle{plainurl}  
\bibliography{references}     

\appendix

\section{Data Structures for Constant Time Operations on Graphs}
\label{appendix:data_structures}


Given line graph $\lin$ and MCS tree $T$, \cref{algo:buildEG} iterates over non-tree edges of $\lin$. 
Moreover, the algorithm uses data structures $\depth(\cdot)$, $\LA(\cdot)$ and $\LCA(\cdot)$, 
in determining the LCA edges of a non-tree edge. 
In this section, we describe how to enumerate over non-tree edges 
in time linear in the number of vertices and edges of $\lin$ and 
how to determine the LCA edges of a non-tree edge in constant time. 

Given a graph $G = (V, E)$,
we assume the vertices are represented by integers from 1 to $|V|$.
Given a tree $\T$ of $G$ rooted at node $\rel$, 
data structure $\depth_\T(\cdot)$ stores for each node $\rel_i$ 
the length of the shortest path between $\rel$ and $\rel_i$ in $\T$. 
Data structure $\LCA_\T(\cdot)$ stores for each pair of nodes 
$\rel_i,\rel_j$ the lowest common ancestor of $\rel_i,\rel_j$ in $\T$. 
Finally, data structure $\LA_\T(\cdot)$ stores for each node-integer pair 
$\rel_i, j$, the ancestor of node $\rel_i$ at depth $j$. 
Each of these data structures can be established from a tree $\T$ in time linear 
in the size of the tree and support constant time look-ups. 

\begin{observation}\label{obs:datastruc}
    Given rooted tree $\T$, data structures 
    $\depth(\cdot)$, $\LCA(\cdot)$ and $\LA(\cdot)$ 
    can be established in time $\bigO(|\T|)$ and 
    support $\bigO(1)$ time look-ups~\cite{CLRS4, SedgewickWayne2011,bender2000lca, bender2004level}.
\end{observation}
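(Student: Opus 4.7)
The plan is to establish each of the three data structures in turn, since they are independent, and to show that each achieves $\bigO(|\T|)$ preprocessing and $\bigO(1)$ query time.

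For $\depth(\cdot)$, a single breadth-first traversal from the root suffices: store each node's depth in an array indexed by node identifier. Preprocessing visits every vertex and edge once, and each subsequent lookup is an array access. For $\LCA(\cdot)$, I would invoke the classical Bender--Farach-Colton reduction~\cite{bender2000lca}. First perform an Euler tour of $\T$, producing a sequence of length $2|\T|-1$ together with an associated depth sequence in which consecutive entries differ by exactly $\pm 1$. The LCA of any two nodes is then the node realizing the range minimum of the depth sequence between their first occurrences. This reduces LCA to $\pm 1$-RMQ, which admits an $\bigO(|\T|)$ preprocessing / $\bigO(1)$ query solution via block decomposition of size $\tfrac{1}{2}\log|\T|$, a Four-Russians table indexed by $\pm 1$-block type, and a sparse table over block minima.

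For $\LA(\cdot)$, I would cite the level-ancestor construction of Bender and Farach-Colton~\cite{bender2004level}. The key ingredients are long-path decomposition, ladder decomposition to extend each long path upward by its own length, jump-pointer tables for macro-nodes, and a Four-Russians-style lookup for micro-trees of size $\bigO(\log|\T|)$. Their composition yields the claimed preprocessing and query bounds.

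The main obstacle is not novelty --- all three constructions are textbook~\cite{CLRS4, SedgewickWayne2011} --- but pitching the exposition at the level appropriate for the rest of the appendix. I would sketch each construction in a few lines, defer the $\pm 1$-RMQ and level-ancestor technical details to the cited sources, and finally verify that the supported operations cover precisely those invoked in~\cref{algo:buildEG}: constant-time depth lookups, one LCA query per non-tree edge, and one level-ancestor query at the depth just below the corresponding LCA.
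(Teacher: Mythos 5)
Your proposal is correct and matches the paper's treatment: the paper states this as an observation justified purely by citation to the standard depth/LCA/level-ancestor constructions, with no written proof, and your sketch simply unfolds those same cited constructions (BFS for depth, Euler tour plus $\pm 1$-RMQ for LCA, ladders and jump pointers for level ancestors) at a level of detail the paper omits. Nothing in your argument diverges from or goes beyond what the references already establish.
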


Given the line graph $\lin$ of an $\alpha$-acyclic hypergraph $\hyp$ and an MCS tree $\T$ of $\hyp$, recall that the LCA edges $\lambda(e)$ of a non-tree edge $\e 
= (\rel_i,\rel_j)$  of $\T$ is defined as the set of edges incident to $\LCA(\rel_i,\rel_j)$ that lie on the unique path between $\rel_i$ and $\rel_j$ in $\T$. The following observation shows that the LCA edges of a non-tree edge can be found in constant time given data structures $\depth(\cdot)$, $\LCA(\cdot)$ and $\LA(\cdot)$. 

\begin{observation}\label{obs:findLCAedges}
    Given the line graph $\lin$ of an $\alpha$-acyclic hypergraph $\hyp$, an MCS tree $\T$ and data structures $\depth(\cdot)$, $\LCA(\cdot)$ and $\LA(\cdot)$, the LCA edges of any non-tree edge can be found in constant time.
\end{observation}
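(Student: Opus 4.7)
The plan is to give a direct, constructive argument: exhibit an $\bigO(1)$ procedure that, given a non-tree edge $\e=(\rel_i,\rel_j)$, outputs its LCA edges, and show each primitive it performs costs $\bigO(1)$ by appealing to Observation~\ref{obs:datastruc}. This mirrors lines~\ref{algoLine:LCAE_start}--\ref{algoLine:LCAE_end} of \cref{algo:buildEG}.

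First I would fix the non-tree edge $\e=(\rel_i,\rel_j)$ and set $l \defeq \LCA(\rel_i,\rel_j)$ and $d \defeq \depth(l)+1$, each of which is a single look-up and therefore constant time. By definition of the lowest common ancestor in a rooted tree, the unique path from $\rel_i$ to $\rel_j$ in $\T$ passes through $l$ and, on each side of $l$, descends into the subtree rooted at the unique child of $l$ that is an ancestor of the corresponding endpoint (if any). The edges incident to $l$ on this path are precisely the LCA edges of $\e$.

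Next I would distinguish the two cases that can arise. If $l\in\{\rel_i,\rel_j\}$, say $l=\rel_i$, then $\rel_j$ lies in the subtree rooted at $l$ and $\LCAE(\e)=\{(l,\LA(\rel_j,d))\}$, a singleton of size one obtained from one $\LA$ query. Otherwise $l$ is a strict ancestor of both $\rel_i$ and $\rel_j$, and $\LCAE(\e)=\{(l,\LA(\rel_i,d)),\,(l,\LA(\rel_j,d))\}$, obtained from two $\LA$ queries. Which case applies can be tested in constant time by comparing the identifiers $l$, $\rel_i$, $\rel_j$. In every branch we perform at most three $\LCA$/$\depth$/$\LA$ queries and $\bigO(1)$ arithmetic/identifier operations, and output a set of at most two edges; therefore the overall work is $\bigO(1)$.

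There is essentially no obstacle here, since all the heavy lifting is packed into Observation~\ref{obs:datastruc}: once $\depth(\cdot)$, $\LCA(\cdot)$, and $\LA(\cdot)$ support constant-time queries, the LCA edges are determined by a fixed-length recipe. The only subtlety worth flagging explicitly in the write-up is the degenerate case in which one endpoint is an ancestor of the other, so that $|\LCAE(\e)|=1$ rather than $2$; handling this case correctly is the reason for the case split described above.
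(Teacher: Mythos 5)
Your proposal is correct and follows essentially the same route as the paper's proof: compute $l=\LCA(\rel_i,\rel_j)$ and $d=\depth(l)+1$, then recover the endpoints of the LCA edges via $\LA$ queries, all in $\bigO(1)$ by Observation~\ref{obs:datastruc}. Your explicit case split for the degenerate situation where one endpoint is an ancestor of the other (so that $|\LCAE(\e)|=1$) is a small refinement the paper leaves implicit, but it does not change the argument.
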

\begin{proof}
    Given a non-tree edge $\e = (\rel_i,\rel_j)$ we find $\LCA(\rel_i,\rel_j) = l$ in constant time by \cref{obs:datastruc}. The depth $d = \depth(l) + 1$ is likewise found in constant time. The edge $\e_1 = (l,\rel_1)$, where node $\rel_1 = \LA(\rel_i,d)$ is found by a constant time look-up, is adjacent to $l$ and lies on the unique path between $\rel_i$ and $\rel_j$. The same holds for edge $\e_2 = (l,\rel_2)$ where node $\rel_2 = \LA(\rel_j,d)$.
\end{proof}

In the remainder of this section,
 we describe how \cref{algo:buildEG} iterates over and performs sliding transformations on non-tree edges.

We use an array $\lin$ to represent the adjacency list of a line graph where entry $\lin[i]$ consists of a linked list of tuples  $(j, \w)$
 where $j$ is a neighbor of $i$ and $\w$ is the weight of the edge $(i,j)$.
A rooted tree $\T_\rel$ is likewise implemented as an array, 
where the index $i$ represents the vertex $i$,
 and the $i$-th entry $\T_\rel[i]$ stores $(\parent, \w)$ 
 where $\parent$ is the parent of $i$
 and $\w$ is the weight of the edge $(i, \parent)$.
The depth table of the tree is also implemented as an array to enable constant time
lookup for the depth of a vertex $\depth(\T_\rel, i)$.

As we iterate over each edge $(j, \w)$ of $\lin[i]$, we check 
 if the current edge  $(i, j)$ is a tree edge by performing constant-time query of $\T_\rel[i]$
 and $\T_\rel[j]$ to see if one of them is the parent of the other.
From \cref{obs:findLCAedges} we know that we can find the LCA edges $\LCAE(i,j) =\{\e_1,\e_2\}$ in constant time.
There's no general solution to lookup an entry in a linked list in constant time.
 Instead of iterating over the linked-list representation of $\lin$
 to find the edge weights of $\e_1$ and $\e_2$, 
 we can lookup their
 weights from the array representation of $\T_\rel$ in constant time.
Let $\e_1 = \{\rel_1, \parent(\rel_1)\}$ and
 $\e_2 = \{\rel_2, \parent(\rel_2)\}$ with weights $\w_1$ and $\w_2$.
Because both $\e_1$ and $\e_2$ are tree edges, we can find their weights
  by querying $\T_\rel[\rel_1] = (\parent(\rel_1), \w_1)$ 
  and $\T_\rel[\rel_2] = (\parent(\rel_2), \w_2)$. 
Each of these operations takes constant time.

\section{\texorpdfstring{Missing Proofs in \cref{section:preliminaries}}{Missing Proofs in Preliminaries}}
\label{appendix:missing_proofs_prelim}

We prove \cref{lemma:MCS_rip} of section \cref{section:preliminaries} below.
\par\medskip
\noindent{\bf \cref{lemma:MCS_rip}}
{\it 
Let $\T$ be an MCS tree. For an edge $\e \in \T$ that has a parent, 
then
\begin{enumerate}
    \item $\e \not \subseteq \parent(\e)$
    \item For another edge
    $
    \e' \in \T:
        (\e \setminus \parent(\e))\
        \cap\
        (\e' \setminus \parent(\e'))\
        \neq\ \emptyset 
        \quad \implies \quad 
        \parent(\e) = \parent(\e').
    $
\end{enumerate}
}

\begin{proof}
    In~\cref{algo:MCS}, a child $\rel_c$ is only connected to 
    a parent $\rel$ if they share some previously unmarked
    vertex $\x$.
    Because every vertex is marked only once,
    and the parent $\rel_p$ of $\rel$ was labeled
    before $\rel$,
    $\x$ cannot be marked by $\rel$'s parent $\rel_p$.
    Let $\e$ be edge between $\rel$ and $\rel_c$,
    then the edge between $\rel$ and $\rel_p$ is $\parent(\e)$,
    and $\x \in \e$ but $\x \notin \parent(\e)$, 
    therefore $\e \not \subseteq \parent(\e)$.

    We prove the second statement by contrapositive, assuming $\parent(\e) \neq \parent(\e')$.
    We also treat each edge as a set of vertices below, as we can identify each tree
    edge with its set of incident vertices.
    \begin{description}
        \item[Case 1:] $\parent(\e') = \e$, then
        $
        (\e \setminus \parent(\e))
        \cap
        (\e' \setminus \e)
        \subseteq
        (\e \cap \e')
        \setminus
        \e = \emptyset
        $.
        \item[Case 2:]$\parent(\e') \neq \e$. Without loss of generality,
        we suppose $\e$ is no further from the root than $\e'$.
        Let $\parent(\e') = \e''$, then $\e''$ is on the path between 
        $\e$ and $\e'$ in $\T$.
        If 
        $
        (\e \setminus \parent(\e))
        \cap
        (\e' \setminus \e'')
        \neq \emptyset
        $,
        then $\exists\ \x \in (\e' \cap \e) \setminus \e''$,
        which violates the running intersection property of join trees.
    \end{description}
\end{proof}

\section{\texorpdfstring{Missing Proofs in \cref{section:enum}}{Missing Proofs in Section Enumeration}}
\label{appendix:missing_proofs_enum}
In this section we present the missing proofs of \cref{section:enum}. In particular, in \cref{app:proofs_section4.2} we show that there exists an equivalent hypergraph $\hyp^*$ of a given hypergraph $\hyp$ that admits a monotonic weight join tree and preserves the set of join trees of $\hyp$. This is proven in \cref{thm:MWJT}. Thereafter, we prove \cref{thm:algo_buildEG_on_H*} showing that \cref{algo:buildEG} produces an equivalent graph when given input $\lin(\hyp^*)$ and a monotonic weight join tree of $\hyp^*$. Next we show that \cref{algo:buildEG} behaves identically on the input of line graph $\lin(\hyp)$ and an MCS tree of $\hyp$ as it does on $\lin(\hyp^*)$ and a monotonic weight join tree of $\hyp^*$. \cref{algo:buildEG} therefore produces an equivalent graph of $\lin(\hyp)$. This proven in \cref{thm:algo_buildEG_on_H}. Finally, we prove \cref{thm:alpha_JTE_enum}, showing that the join trees of an $\alpha$-acyclic hypergraph can be enumerated in time $\bigO(|\lin| + |\JTof{\hyp}|)$ given its line graph $\lin$.

In \cref{app:proofs_section4.3} we prove \cref{thm:gamma_JTE_enum} showing that the join trees of a $\gamma$-acyclic $\hyp$
     can be enumerated in time
    $\bigO(|\hyp| + |\JTof{\hyp}|)$.

\subsection{\texorpdfstring{Missing Proofs in \cref{subsection:JTAplhaAcyc}}{Missing Proofs in Section Enumeration}}\label{app:proofs_section4.2}

We define the notion of an equivalent hypergraph below.

\begin{definition}
    Given an $\alpha$-acyclic hypergraph $\hyp = (\X, \R, \vars)$, 
    hypergraph $\hyp^* = (\X^*, \R, \vars^*)$, over the 
    same hyperedge set $\R$, is an {\em equivalent hypergraph} of $\hyp$, if $\hyp^*$,
    \begin{itemize}
        \item admits a monotonic weight join tree $\MWJT \in \JTof{\hyp^*}$, and
        \item $\JTof{\hyp^*} = \JTof{\hyp}$.
    \end{itemize}
\end{definition}



We formalize the relationship
 between an $\alpha$-acyclic hypergraph $\hyp$ and 
 its equivalent hypergraph $\hyp^*$ using the concept of {\em hypergraph homomorphisms}.
Note the definition of homomorphisms is not standardized 
 in the literature,\footnote{\revC{A textbook on hypergraphs~\cite{10.5555/2500991} contains two
 different definitions of hypergraph homomorphism;
 one is the same as ours, while the other is also used
 by Scheidt and Schweikardt~\cite{DBLP:conf/mfcs/ScheidtS23}.}} and we choose ours carefully to simplify
 the proofs.




\begin{definition}[Hypergraph Homomorphism]
Let $\hyp_1$ and $\hyp_2$ be hypergraphs.
A \emph{homomorphism} from $\hyp_1$ to $\hyp_2$ is a pair of functions $(\func_{\X}: \X(\hyp_1) \rightarrow \X(\hyp_2),
 \func_{\R}: \R(\hyp_1) \rightarrow \R(\hyp_2))$,
\revC{such that for every hyperedge $\rel_1 \in \R(\hyp_1)$ and every vertex
$\x_1 \in \rel_1$, the image $\func_{\X}(\x_1)$ belongs to the hyperedge
$\func_{\R}(\rel_1)$ in $\hyp_2$.}

We write $\hyp_1 \rightarrow \hyp_2$ to denote the existence of such a homomorphism.
\end{definition}

When there is no ambiguity from context, we simply write $\func$ for
$\func_{\X}$ or $\func_{\R}$.
Intuitively, homomorphisms preserve relations.
A {\em strong} homomorphism also {\em reflects} relations:
%



\begin{definition}[Strong Homomorphism]\label{def:strong_homomorphism}
Let $\func : \hyp_1 \rightarrow \hyp_2$ be a hypergraph homomorphism.
\revC{
We call $\func$ \emph{strong} if it also reflects non-incidence, in the following
sense: for every hyperedge $\rel_1 \in \R(\hyp_1)$ and every vertex
$\x_1 \in \X(\hyp_1)$ that does not belong to $\rel_1$, the vertex
$\func_{\X}(\x_1)$ does not belong to the hyperedge $\func_{\R}(\rel_1)$
in $\hyp_2$.}

We write $\hyp_1 \twoheadrightarrow \hyp_2$ to denote the existence of a strong
homomorphism from $\hyp_1$ to $\hyp_2$.
\end{definition}

Equivalently, under a strong homomorphism, a vertex belongs to a hyperedge in
$\hyp_1$ if and only if its image belongs to the image of that hyperedge in
$\hyp_2$.

We only consider hypergraphs over the same set of
edges,\footnote{%
 The hypergraphs share the same hyperedge set $\R$ 
 but differ in incidence function $\vars$, 
 i.e., the same edge may be incident to different vertices in 
 different hypergraphs.} 
 and from now on
 we assume $\func_\R$ is the identity function.

We show that if there is a strong homomorphism $\func: \hyp' \twoheadrightarrow
\hyp$, then every join tree of $\hyp$ is also a join tree of $\hyp'$.

\begin{lemma}\label{lemma:any_tree_in_H_prime}
    If $\hyp' \twoheadrightarrow \hyp$ then $\JTof{\hyp} \subseteq \JTof{\hyp'}$.
\end{lemma}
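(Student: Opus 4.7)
The plan is to unpack both sides of what it means to be a join tree and verify them for $T \in \JTof{\hyp}$ regarded as a subgraph on the common vertex set $\R$. Concretely, I would need to check that (i) $T$ is a spanning tree of $\lin(\hyp')$, and (ii) for every $\x' \in \X(\hyp')$, the restriction $T|_{\x'}$ is a connected subtree. Since $\lin(\hyp)$ and $\lin(\hyp')$ share the same vertex set $\R$, and $T$ is already a spanning tree of $\lin(\hyp)$, property (i) reduces to showing that every tree edge of $T$ also appears in $\lin(\hyp')$.

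The central observation I would establish first is a ``neighborhood transfer'' lemma: under a strong homomorphism $\func : \hyp' \twoheadrightarrow \hyp$ with $\func_\R$ the identity, the two characterizing inclusions of \cref{def:strong_homomorphism} say precisely that for every $\x' \in \X(\hyp')$ and every $\rel \in \R$, we have $\x' \in \rel$ in $\hyp'$ iff $\func_\X(\x') \in \rel$ in $\hyp$. Equivalently, $\hyp'|_{\x'} = \hyp|_{\func_\X(\x')}$ as sets of hyperedges. This gives property (ii) essentially for free: for any $\x' \in \X(\hyp')$, the induced subgraph $T|_{\x'}$ is literally the same vertex-induced subgraph as $T|_{\func_\X(\x')}$, which is connected because $T$ is a join tree of $\hyp$.

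The main obstacle is the edge-preservation half of (i). Given a tree edge $(\rel_1, \rel_2) \in \E(T)$, I know $\rel_1 \cap \rel_2 \neq \emptyset$ in $\hyp$ and must produce some $\x' \in \rel_1 \cap \rel_2$ in $\hyp'$. The plan is to pick any witness $\x \in \rel_1 \cap \rel_2$ in $\hyp$, choose a preimage $\x' \in \func_\X^{-1}(\x)$, and apply the strong homomorphism in the ``iff'' direction just derived to conclude $\x' \in \rel_1$ and $\x' \in \rel_2$ in $\hyp'$, so that $(\rel_1, \rel_2) \in \E(\lin(\hyp'))$. The delicate point is the existence of a preimage of $\x$; this relies on $\func_\X$ hitting every vertex of $\hyp$ that participates in some non-empty intersection $\rel_1 \cap \rel_2$ realized as a tree edge. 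For the intended applications (the equivalent hypergraph $\hyp^*$ obtained by splitting vertices of $\hyp$), the map $\func_\X$ is surjective by construction, so this step is unproblematic, but I would flag it explicitly as the assumption being used.

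Putting the pieces together, $T$ has $|\R|-1$ edges, is acyclic, and lies inside $\lin(\hyp')$, so it is a spanning tree of $\lin(\hyp')$; combined with the neighborhood transfer argument, $T$ satisfies the running intersection property with respect to $\X(\hyp')$, and hence $T \in \JTof{\hyp'}$. Since $T$ was an arbitrary element of $\JTof{\hyp}$, the inclusion $\JTof{\hyp} \subseteq \JTof{\hyp'}$ follows.
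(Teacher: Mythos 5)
Your core argument is the same as the paper's: the paper's entire proof is your ``neighborhood transfer'' step, showing $\hyp'|_{\x'} = \hyp|_{\func_\X(\x')}$ from the two defining conditions of a strong homomorphism, and concluding that every such neighborhood is connected in $\T$. Where you differ is that you also verify the other half of the definition of a join tree --- that $\T$ remains a spanning tree of $\lin(\hyp')$, i.e., that every tree edge $(\rel_1,\rel_2)$ of $\T$ still has $\vars'(\rel_1)\cap\vars'(\rel_2)\neq\emptyset$ --- which the paper's proof silently skips. This is not a vacuous check: since a strong homomorphism with $\func_\R = \id$ forces $\vars'(\rel) = \func_\X^{-1}(\vars(\rel))$, an intersection realized in $\hyp$ can become empty in $\hyp'$ if no element of it has a preimage (e.g.\ $\rel_1=\{a,b\}$, $\rel_2=\{a,c\}$ with $\X(\hyp')=\{b',c'\}$ mapping to $b,c$ is a strong homomorphism under which the edge $(\rel_1,\rel_2)$ of $\lin(\hyp)$ disappears). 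So the lemma as stated really does lean on something beyond the definition of $\twoheadrightarrow$, and your explicit flag --- that surjectivity of $\func_\X$ (or at least that every tree-edge intersection meets the image) is the assumption being used, and that it holds for the vertex-duplication construction where the lemma is applied --- is exactly the right thing to say. In short: same key idea, but your version closes a small gap in the paper's argument rather than introducing one.
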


\begin{proof}
Let $\T$ be a join tree of $\hyp$, and $\x'$ be any vertex in $\hyp'$. Consider
the neighborhood $\hyp'|_{\x'}$ of $\x'$ in $\hyp'$ comprising the set of hyperedges 
$\setof{\rel' \in \R(\hyp')}{\x' \in \rel'}$.
Since $\hyp' \twoheadrightarrow \hyp$, we have that
 $\x' \in \rel'$
if and only if $\func(\x') \in \func(\rel') = \rel'$.
The neighborhood $\hyp|_{\func(\x')}$ of $\func(\x')$ in $\hyp$ 
comprises of the set of hyperedges $\setof{\rel \in \R(\hyp)}{\func(\x') \in \rel}$.
Therefore, 
$\rel' \in \R(\hyp'|_{\x'}) \iff \func(\rel') =\rel' \in \R(\hyp|_{\func(\x')})$.
Because $\T$ is a join tree of $\hyp$,
 $\hyp|_{\func(\x')}$ is connected in $\T$,
 therefore $\hyp'|_{\x'}$ is also connected in $\T$
 and $\T$ is also a join tree of $\hyp'$.
\end{proof}

Our goal will be to modify an input hypergraph $\hyp$ such that its line graph admits a monotonic weight join tree. The lemma above ensures that if this is done in such a way that a strong homomorphism between $\hyp$ and its modification is preserved then then the set of join trees of $\hyp$ is preserved.

\begin{definition}[Vertex Duplication]
Given a hypergraph $\hyp$ and a vertex $\x \in \X(\hyp)$, 
 {\em duplicating} $\x$ produces a new hypergraph $\hyp'$
 that differs from $\hyp$ by only one vertex $\{\x'\} = \X(\hyp') \setminus \X(\hyp)$
 which is added to each $\rel \in \hyp|_{\x}$.
\end{definition}

We will refer to such a hypergraph $\hyp'$
as a \emph{vertex duplication} of $\hyp$ or simply a \emph{duplicated} hypergraph of
$\hyp$. 
We generalize duplication to a set of vertices in the natural way.
 Importantly, vertex duplication preserves the set of join trees of the
original hypergraph.

\begin{lemma}\label{lemma:duplicate}
    Let $\hyp'$ be a duplicated hypergraph of $\hyp$. Then,
    \begin{enumerate}
        \item $\hyp' \twoheadrightarrow \hyp$ and $\hyp' \twoheadleftarrow \hyp$
        \item $\JTof{\hyp'} =  \JTof{\hyp}$
        \item $\lin(\hyp') = \lin(\hyp)$
    \end{enumerate}
\end{lemma}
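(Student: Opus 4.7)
My plan is to prove the three parts in the order (3), (1), (2). Part (3) is needed first because it guarantees that the join trees of $\hyp$ and $\hyp'$ live as subgraphs of the same graph, making the equality in (2) well-posed. Part (1) is a direct verification of the strong homomorphism definitions with tailored vertex maps, and part (2) then follows immediately by applying \cref{lemma:any_tree_in_H_prime} to each direction of (1).

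For (3), the hyperedge sets $\R(\hyp)$ and $\R(\hyp')$ coincide by definition of duplication, so the two line graphs share the same vertex set. For the edges, I would show that $\rel_1 \cap \rel_2 \neq \emptyset$ holds in $\hyp'$ iff it holds in $\hyp$. One direction is immediate because each hyperedge in $\hyp'$ contains all of its $\hyp$-vertices; conversely, the only vertex of $\hyp'$ not in $\hyp$ is $\x'$, and $\x' \in \rel_1 \cap \rel_2$ forces $\rel_1, \rel_2 \in \hyp|_{\x}$, so they already share $\x$ in $\hyp$. For (1), to realize $\hyp' \twoheadrightarrow \hyp$ I would take $\func_\R$ as the identity, $\func_\X$ as the identity on $\X(\hyp)$, and $\func_\X(\x') = \x$; both the forward and the strong (reflecting) conditions then reduce, modulo the trivial cases for vertices other than $\x'$, to the chain of equivalences $\x' \in \rel \iff \rel \in \hyp|_{\x} \iff \x \in \rel$. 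For $\hyp \twoheadrightarrow \hyp'$ I would use the inclusion $\X(\hyp) \hookrightarrow \X(\hyp')$, noting that duplication does not affect membership among vertices of $\X(\hyp)$. Part (2) is then two applications of \cref{lemma:any_tree_in_H_prime}, one per direction of (1).

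The only real subtlety is that, although the two line graphs agree as unweighted graphs, the weight function $\w$ does differ by exactly $+1$ on each edge $(\rel_1, \rel_2)$ with $\x \in \rel_1 \cap \rel_2$; so (2) cannot be obtained directly from \cref{prop:mst_is_join_tree} by a weight-invariance argument. The strong-homomorphism route via \cref{lemma:any_tree_in_H_prime} is essential here, because it transports the running intersection property rather than spanning-tree weight. Finally, the version of the statement in which one duplicates a \emph{set} of vertices reduces to the single-vertex case by a short induction, since duplications of distinct vertices commute with each other.
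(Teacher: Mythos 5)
Your proposal is correct and follows essentially the same route as the paper: the same explicit strong homomorphisms (identity on hyperedges, $\x' \mapsto \x$ one way and the inclusion the other), two applications of \cref{lemma:any_tree_in_H_prime} for part (2), the observation that duplication adds no new line-graph edges for part (3), and the reduction of set-duplication to the single-vertex case. The reordering of the parts and your remark that the edge weights do change (so \cref{prop:mst_is_join_tree} cannot be used directly) are fine but do not alter the substance of the argument.
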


\begin{proof}

    \textbf{1.} We only need to show that 
    $\hyp' \twoheadrightarrow \hyp$ ($\hyp' \twoheadleftarrow \hyp$) 
    holds for $\hyp'$ derived from $\hyp$ by a single vertex
    duplication. 
    $\hyp' \twoheadrightarrow \hyp$ ($\hyp' \twoheadleftarrow \hyp$) 
    will then hold for any duplicate $\hyp'$ derived by a series of
    vertex duplications since strong homomorphisms are closed under composition. 
    
    Suppose $\hyp'$ is derived from $\hyp$ by 
    duplicating $\x \in \X(\hyp)$ to $\x'$.

    Consider a pair of functions $(\func_{\X}: \X(\hyp') \mapsto \X(\hyp),
    \R(\hyp)_{\id})$ where $\func_{\X}$ maps $\x'$ to $\x$.
    Otherwise, it is an identity map $\X(\hyp)_{\id}$.
    This function pair constitutes a
    strong hypergraph homomorphism $\hyp' \twoheadrightarrow \hyp$. 
    Likewise, the pair of functions $(\X(\hyp)_{\id},
    \R(\hyp)_{\id})$  constitutes a homomorphism $\hyp
    \twoheadrightarrow \hyp'$.

    \noindent\textbf{2.}
    It follows from
    \cref{lemma:any_tree_in_H_prime} and the facts above. 

    \noindent\textbf{3.} 
    By the strong homomorphisms, the hyperedge sets have the same cardinality
     $|\R(\hyp)| = |\R(\hyp')|$. 
    So do the vertex sets of the line graphs,
    $\R(\lin(\hyp)) = \R(\hyp)$ and $\R(\lin(\hyp')) = \R(\hyp')$.
    By construction, the vertex duplication adds $\x'$ to each hyperedge
    containing $\x$, therefore does not create any additional edges in
    the line graph, $\E(\lin(\hyp')) = \E(\lin(\hyp))$.
\end{proof}

\begin{figure}[tbp]
  \centering
  \input{algos/algo_buildDG.tex}
\end{figure}

Vertex duplication allows us to modify a hypergraph while preserving its set of
join trees.
In particular, duplicating vertices can selectively increase the weights of
chosen hyperedge intersections without affecting the line graph or the join
tree structure (cf.~\cref{lemma:duplicate}).

We use this operation to enforce monotonicity of edge weights along a given join
tree.
Given an $\alpha$-acyclic hypergraph $\hyp$ together with a rooted MCS tree
$\T$, we can construct an equivalent hypergraph $\hyp^*$ by duplicating vertices
along the edges of $\T$.
The construction proceeds in a breadth-first manner from the root of $\T$ and
duplicates vertices only when necessary to ensure that each tree edge is
strictly heavier than its parent.

Obtaining an equivalent graph in this manner always succeeds.

\par\medskip
\noindent{\bf \cref{thm:MWJT}}
{\it 
 Given an $\alpha$-acyclic hypergraph $\hyp = (\X, \R, \vars)$, 
    there exists an {\em equivalent hypergraph} $\hyp^* = (\X^*, \R, \vars^*)$ over the 
    same hyperedge set $\R$ that
    \begin{itemize}
        \item admits a monotonic weight join tree $\MWJT \in \JTof{\hyp^*}$, and
        \item $\JTof{\hyp^*} = \JTof{\hyp}$.
    \end{itemize}
}

\begin{proof}
    To prove existence, we construct hypergraph $\hyp^*$ by
    vertex duplication (\cref{algo:buildDG}).
    \cref{lemma:duplicate} will guarantee that $\JTof{\hyp^*} = \JTof{\hyp}$.
    Therefore, we only need to show the existence of monotonic weight join tree $\MWJT \in \JTof{\hyp^*}$.
    \cref{algo:buildDG} takes as input hypergraph $\hyp$ along with an MCS tree $\T$ of $\hyp$. 
    \revB{
    The algorithm constructs $\hyp^*$ by vertex duplication in a
    breadth-first manner from the root of $\T$ to make it a monotonic weight
    join tree in $\hyp^*$.}

    \revB{\cref{algo:buildDG} starts with enqueuing each edge of $\T$ incident to the root.
    Once a tree edge $\e \in \E(\T_\rel)$ is dequeued,
    we check for the number of duplications needed to make it
    heavier than its parent edge $\parent(\e)$, namely
    $\Delta = \w(\parent(\e)) - \w(\e) + 1$. The duplication is 
    performed only if $\Delta > 0$.
    \cref{lemma:MCS_rip} guarantees that
    $\vars(\e) \setminus \vars(\parent(\e)) \neq \emptyset$.
    Therefore, we can always duplicate a vertex 
    $\x \in \vars(\e) \setminus \vars(\parent(\e))$
    for $\Delta$ times such that $\w(\e) > \w(\parent(\e))$.}

    We prove that this produces a monotonic weight join tree by induction on a breadth-first ordering (with respect to root $\rel$) of the edges in $\T$. 

    {\bf Base case:} The base case is an edge $\e$ incident to the root. Since $\e$ has no parent $\w(e) > \w(\parent(\e))$ holds vacuously.

    {\bf Ind. step:} Let $(\e_1, \ldots, \e_m)$ be a breadth-first ordering of the edges in $\T$ and assume $\w(e_i) > \w(\parent(\e_i))$ for all $i < k-1$. 
    
    Let $\e_k = (\rel_u, \rel_d)$ where $\depth(\T, \rel_u) < \depth(\T, \rel_d)$ and let $\x \in \vars(\e_k) \setminus \vars(\parent(\e_k))$. 
    Vertex $\x$ only occurs in the subtree rooted at $\rel_u$. Otherwise, it violates
    the running intersection property.
    Therefore, the duplication of $\x$ does not affect
    the weight of any edge $\e_i$, $i \in [k-1]$ outside of subtree rooted at $\rel_u$ and therefore $\w(e_i) > \w(\parent(\e_i))$ holds under any number of duplications of $\x$. For any $\e_i$ lying within the subtree rooted at $\rel_u$, $\e_i$ is necessarily a sibling of $\e_k$ since edges of $\T$ are breadth-first ordered. Hence $\parent(\e_i)$ lies outside this subtree and any number of vertex duplications of $\x$ cannot decrease the weight of $\e_i$ with respect to its parent. Therefore, $\w(e_i) > \w(\parent(\e_i))$ for all $i \in [k]$ after performing $\Delta$ many vertex duplications of $\x$. 
    
    Consequently, tree $\T$ has monotonic weights when~\cref{algo:buildDG} terminates.
\end{proof}

The above theorem guarantees that for any $\alpha$-acyclic hypergraph we are able to obtain an equivalent hypergraph $\hyp^*$.
Next we prove \cref{thm:algo_buildEG_on_H*} showing that \cref{algo:buildEG} produces the equivalent graph of line graph $\lin(\hyp^*)$.
\par\medskip
\noindent{\bf \cref{thm:algo_buildEG_on_H*}}
{\it 
Given a hypergraph $\hyp^* = (\X^*,\R,\vars^*)$ with
    a monotonic weight join tree $\MWJT$, \cref{algo:buildEG}
    produces an Equivalent Graph $\EG(\lin(\hyp^*))$.
}

\begin{proof}
    \revB{
    We denote the line graph of $\hyp^*$ as $\lin^*$.
    The proof is by contradiction. We assume that the resulting graph 
    $\G$ is reached from a series of valid sliding transformations with respect to $\MWJT$ but is not an equivalent graph.
    By \cref{thm:eg_max_under_sliding}, an equivalent graph $\EG(\hyp^*)$
    exists such that $\G \slide{\MWJT} \EG(\hyp^*)$.
    By \cref{def:sliding_transformation}, an edge $\e^* \in \E(\G)$ can be
    further slid towards the root of $\MWJT$. This edge has to be a
    non-tree edge $\e^* \in \E(\lin^*) \setminus \E(\MWJT)$, otherwise $\MWJT$ is not monotonic weight, violating
    \cref{def:mono_weight_tree}. Thus there must be a non-tree edge that can be slid. \cref{algo:buildEG} iterates over all non-tree edges. If an edge $\e$ has the same weight as exactly one (tree) edge $\e'$ in $\LCAE(\e)$ the algorithm slides $\e$ parallel to this edge. If edge $\e$ could slide further in $\G$, then so could tree edge $\e'$, again contradicting monotonicity of $\MWJT$. If instead $|\LCAE(\e)| = 2$ and $\e$ has the same weight as both edges $\e', \e'' \in \LCAE(\e)$ then the algorithm slides $\e$ such that it becomes incident to each of the highest depth endpoints of $\e'$ and $\e''$. By the definition of a sliding transformation $\e$ can therefore not slide.} 
    
    \revB{Of note is the case where a non-tree edge $\e$ weighs less than all edges in $\LCAE(\e)$. The algorithm then slides $\e$ to a self loop at $l = \LCA(\e)$. Such an edge may still be able to slide if edge $\{l, \parent(l)\}$ weighs more than edge $\e$. In this sense graph $\G$ returned by \cref{algo:buildEG} may not be maximal under partial order $\slide{\MWJT}$. Therefore it remains to show  that these self loops do not prevent $\G$ from being an equivalent graph. To see this, notice that no self loop can take part in a spanning tree. Therefore any spanning tree of graph obtained upon sliding self loops of $\G$ to a fixpoint ($\G^\equiv$) is also a spanning tree of $\G$. Hence, $\G$ and $\G^\equiv$ have the same set of spanning trees and $\G$ is an equivalent graph.  }
\end{proof}

The next result follows from the construction of the 
equivalent hypergraph. 
\begin{lemma}\label{lemma:MWJT-variables}
    Let $\hyp^* = (\X^*,\R,\vars^*)$ be an equivalent hypergraph derived from
    $\hyp = (\X,\R,\vars)$ by vertex duplication.
    For any non-tree edge $\e$ of an MCS tree $\T$ of $\hyp$,
    \begin{itemize}
        \item If $\vars(\e) = \vars(\e')$ for some $\e' \in \LCAE(\e)$, then $\vars^*(\e) = \vars^*(\e')$
        \item If $\vars(\e) \subset \vars(\e')$ for some $\e' \in \LCAE(\e)$, then $\vars^*(\e) \subset \vars^*(\e')$
    \end{itemize}
\end{lemma}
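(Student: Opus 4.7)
The plan is to reduce both claims to a clean description of how vertex duplication reshapes the edge-variable function on line-graph edges. First I would set up notation: for every original vertex $\x \in \X$, write $\x^* \subseteq \X^*$ for the set consisting of $\x$ itself together with every duplicate of $\x$ produced by \cref{algo:buildDG} (including all iterations of the main loop). Since each invocation of vertex duplication introduces a \emph{fresh} vertex that was not previously in the hypergraph, a straightforward induction on the number of duplications shows that the sets $\{\x^* : \x \in \X\}$ are pairwise disjoint and that $\X^* = \bigcup_{\x \in \X} \x^*$.

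From the very definition of vertex duplication I would next argue that, for every relation $\rel \in \R$, $\vars^*(\rel) = \bigcup_{\x \in \vars(\rel)} \x^*$: the only new vertices added to $\rel$ are exactly the duplicates of originals already in $\rel$. By \cref{lemma:duplicate}(3) the line graph is unchanged, so each line-graph edge $\e = (\rel_1, \rel_2)$ has the same endpoints in $\hyp$ and $\hyp^*$, and
\[
\vars^*(\e) \;=\; \vars^*(\rel_1) \cap \vars^*(\rel_2) \;=\; \Bigl(\bigcup_{\x \in \vars(\rel_1)} \x^*\Bigr) \cap \Bigl(\bigcup_{\y \in \vars(\rel_2)} \y^*\Bigr) \;=\; \bigcup_{\x \in \vars(\e)} \x^*,
\]
where the last equality uses pairwise disjointness of the $\x^*$'s to collapse the intersection of unions down to the indices shared by both unions.

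With this formula in hand, both bullets are immediate. For the equality case, $\vars(\e) = \vars(\e')$ gives $\vars^*(\e) = \bigcup_{\x \in \vars(\e)} \x^* = \bigcup_{\x \in \vars(\e')} \x^* = \vars^*(\e')$. For the strict-inclusion case, $\vars(\e) \subset \vars(\e')$ immediately yields $\vars^*(\e) \subseteq \vars^*(\e')$; to see that the inclusion is strict, pick any witness $\y \in \vars(\e') \setminus \vars(\e)$. Then $\y \in \y^* \subseteq \vars^*(\e')$, but $\y \notin \vars^*(\e) = \bigcup_{\x \in \vars(\e)} \x^*$, since by disjointness $\y$ cannot lie in any $\x^*$ with $\x \ne \y$, and $\y \notin \vars(\e)$ by assumption.

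The only potential obstacle is making the bookkeeping honest: one must verify that the duplicated vertex chosen inside the loop of \cref{algo:buildDG} is always genuinely new, that duplicates inherit membership only in the relations containing their original, and that the $\LCAE(\e)$ edges retain their identities across the passage from $\hyp$ to $\hyp^*$. All three facts follow directly from the definition of vertex duplication together with \cref{lemma:duplicate}, so no genuine difficulty remains beyond stating the set-theoretic identity above.
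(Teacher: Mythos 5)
Your proposal is correct and takes essentially the same approach as the paper: the paper's proof is a two-line version of the same observation, namely that a duplicate of $\x$ is added to a line-graph edge's variable set exactly when $\x$ itself lies in that set, so equality and strict containment of $\vars(\e)$ in $\vars(\e')$ are preserved. Your explicit formula $\vars^*(\e) = \bigcup_{\x \in \vars(\e)} \x^*$ together with the disjointness of the classes $\x^*$ is just a more carefully bookkept rendering of the argument the paper states tersely.
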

\begin{proof}
    The duplication of a vertex $\x \in \vars(\e)$ adds $\x'$ to both $\vars(\e)$ and $\vars(\e')$.
    If $\vars(\e) = \vars(\e')$,
    $
    \vars^*(\e) = \vars(\e) \cup \{\x'\} = \vars(\e') \cup \{\x'\} = \vars^*(\e')
    $.
    A similar argument applies if $\vars(\e) \subset \vars(\e')$.
\end{proof}

As \cref{section:enum} shows,
running \cref{algo:buildEG} on the MCS tree $\T(\hyp_6)$
in \cref{subfig:hyp_lin_jt_tmcs} produces $\EG(\hyp_6)$ identical to
 $\EG(\hyp^*_6)$ in \cref{subfig:buildEG_eg}. \cref{algo:buildEG} does this in time linear in the size of the line graph. We give the formal proof below.
\par\medskip
\noindent{\bf \cref{thm:algo_buildEG_on_H}}
{\it 
Given the line graph $\lin$ and an MCS tree $\T_\rel$ of an
    $\alpha$-acyclic hypergraph $\hyp$,
    \cref{algo:buildEG} returns an equivalent graph $\EG(\lin)$
    in time $\bigO(|\lin|)$.
}
\begin{proof}
 \revB{
 Let $\hyp^* = (\X^*,\R,\vars^*)$ be an equivalent hypergraph derived from
    $\hyp = (\X,\R,\vars)$ by vertex duplication.
    One can readily check that for any non-tree edge $\e$ of $\T$, 
    the conditions of each if statement in \cref{algo:buildEG} 
    hold under weight function $\w$ of $\lin(\hyp)$ if and only if they hold under weight function $\w^*$ of $\lin(\hyp^*)$ by 
    \cref{lemma:MWJT-variables}. Therefore \cref{algo:buildEG} returns an equivalent graph $\EG(\lin)$. }

    \revB{
    The construction of each data structure $\depth(\cdot)$, $\LA(\cdot)$ and $\LCA(\cdot)$ can be done in $\bigO(m + n)$~\cite{bender2000lca, bender2004level}. Execution of the rest of the algorithm also takes time $\bigO(m + n)$, 
    as it requires examining each non-tree edge to
    slide it directly to its LCA edges in the MCS tree.
    For each non-tree edge, the lookup of at most two LCA edges takes time $\bigO(1)$,
    as does the comparison of weights and the direct sliding transformation. The overall runtime is thus $\bigO(|\lin|)$.}
\end{proof}

Given \cref{algo:buildEG}, we now have a systematic way of enumerating join trees of an $\alpha$-acyclic hypergraph.

Starting from the line graph $\lin(\hyp)$ of an $\alpha$-acyclic hypergraph
$\hyp$, we first reconstruct $\hyp$ and apply \cref{algo:MCS} to obtain an MCS
tree $\T$.
Using $\lin(\hyp)$ and $\T$ as input, \cref{algo:buildEG} constructs an
equivalent graph of $\lin(\hyp)$.
Finally, a standard spanning tree enumeration algorithm can be applied to this
equivalent graph to enumerate all join trees of $\hyp$.
The overall running time of this process is
$\bigO(|\lin| + |\JTof{\hyp}|)$.

\par\medskip
\noindent{\bf \cref{thm:alpha_JTE_enum}}
{\it 
Given the line graph $\lin$ of an $\alpha$-acyclic hypergraph $\hyp$,
        the join trees of $\hyp$ can be enumerated in time $\bigO(|\lin| + |\JTof{\hyp}|)$.
}

\begin{proof}
\revB{
From $\lin(\hyp)$ we first retrieve query hypergraph $\hyp$. This is done in time linear in the size of the line graph. 
\cref{algo:MCS} constructs an MCS tree $\T$ in 
$\bigO(|\hyp|)$.  
With the inputs $\lin(\hyp)$ and $\T$, we can run \cref{algo:buildEG} to
construct $\EG(\hyp)$ in $\bigO(|\lin|)$ by \cref{thm:algo_buildEG_on_H}. Finally, letting $n$ be the number of vertices and $m$ the number of edges in 
$\lin(\hyp)$, the spanning 
tree enumeration algorithm by Kapoor and Ramesh~\cite{Kapoor1995} on $\EG(\hyp)$ enumerates all join trees $|\JTof{\hyp}|$ in 
$\bigO(m + n + |\JTof{\hyp}|) 
= \bigO(|\lin(\hyp)| + |\JTof{\hyp}|)$.}



\end{proof}

\subsection{\texorpdfstring{Missing Proofs in \cref{subsection:JTGammaAcyc}}{Missing Proofs in Section Enumeration}}\label{app:proofs_section4.3}
 We prove \cref{thm:gamma_JTE_enum} below.

 \par\medskip
\noindent{\bf \cref{thm:gamma_JTE_enum}}
{\it 
The join trees of a $\gamma$-acyclic $\hyp$
     can be enumerated in time
    $\bigO(|\hyp| + |\JTof{\hyp}|)$.
}

\begin{proof}
By modifying the \cref{algo:MCS} to \cref{algo:MCS4gamma}, where changes
are highlighted in red, 
we can construct
a weighted MCS tree $\T$ and an unweighted line graph $\lin(\hyp)$
 in $\bigO(|\hyp| + |\E(\lin(\hyp))|)$.
We simplify \cref{algo:buildEG} to \cref{algo:buildEG4gamma} where the changes
are highlighted in red.
As $\gamma$-acyclicity guarantees that every non-tree edge $(\rel_i, \rel_j)$ is
an MST edge, we check its LCA $l$ and the weights of LCA edges $\w_1, \w_2$
 to determine how to slide it.
If the shallower endpoint is the LCA of the deeper $\rel_i = l$, we slide
it to be parallel with the only LCA edge as in Line~\ref{algoLine:gamma_case1}.
Otherwise, if the LCA edges have different weights, 
we slide $\{\rel_i, \rel_j\}$ to be parallel to the heavier one as in 
Line~\ref{algoLine:gamma_case2} and Line~\ref{algoLine:gamma_case3}.
Finally if they have the same weight, we slide $\{\rel_i, \rel_j\}$ 
to form a triangle with them as in Line~\ref{algoLine:gamma_case4}.
Those four cases are exhaustive.

The number of join trees is lower bounded by 
the number of non-tree edges in its line graph, 
namely $|\JTof{\hyp}| = \Omega(|\E(\lin(\hyp))| - |\E(\T)|)$.
The total time complexity of enumerating all join trees is 
$\bigO(|\hyp| + |\E(\lin(\hyp))| + |\JTof{\hyp}|) = \bigO(|\hyp| + |\JTof{\hyp}|)$.
\end{proof}

\section{\texorpdfstring{Missing Proofs in \cref{section:cjt}}{Missing Proofs in Section Canonical Join Tree}}
\label{appendix:missing_proofs_cjt}

\noindent{\bf \cref{prop:berge_is_no_composite_alpha}}
{\it 
An $\alpha$-acyclic hypergraph is Berge-acyclic if and only if it is linear.
}

\begin{proof}
    Fagin~\cite{fagin1983degrees} showed that every Berge-acyclic hypergraph is linear.
    We prove the other direction by contradiction,
    assuming that $\hyp$ is $\alpha$-acyclic and linear. 
    We also assume that there is a Berge cycle 
    $(\rel_0, \x_0, \dots, \rel_{k-1}, \x_{k-1})$ of length $k \geq 2$. 
    Without loss of generality, we suppose $\rel_1$ is the first 
    hyperedge on the cycle to be removed by GYO reduction.
    Then $\rel_1$'s parent $\rel_p$ must satisfy 
    $\rel_p \cap \rel_1 
    \supseteq (\rel_1 \cap \bigcup_{i \in [k] \wedge i \neq 1} \rel_i)
    = \{\x_0, \x_1 \}$,
    contradicting $\hyp$'s linearity.
\end{proof}

\noindent{\bf \cref{lemma:berge_block_graph}}
{\it 
The line graph $\lin$ of a Berge-acyclic hypergraph $\hyp$ is a block graph.
}

\begin{proof}
A Berge-acyclic hypergraph is $\gamma$-acyclic~\cite{fagin1983degrees}.
Zhu proves that the line graph of a $\gamma$-acyclic hypergraph is 
chordal~\cite{zhu1984line}.
We now prove that the line graph is diamond-free by showing that
every diamond must be part of a $K_4$.
Assuming a diamond $(\rel_1, \rel_2, \rel_3, \rel_4)$ in $\lin$ without
the edge $(\rel_1, \rel_3)$,
it can be treated as two triangles $(\rel_1, \rel_2, \rel_4)$ 
and $(\rel_2, \rel_4, \rel_3)$ sharing an edge $(\rel_2, \rel_4)$.
We now consider the possible edge labels, $\rel_i \cap \rel_j$, in each triangle:
\begin{description}
    \item[Case 1:] If all edges are labeled with distinct variables, 
    each triangle forms a Berge-cycle, contradicting the Berge-acyclicity.
    \item[Case 2:] If two edges are labeled with the same variable, 
    all three vertices share that variable. The third edge must be 
    labeled with the same variable.
\end{description}
Since the two triangles share the edge $(\rel_2, \rel_4)$,
 all their edges are labeled with the same variable $\x$,
namely $\x \in \rel_1 \cap \rel_2 \cap \rel_3 \cap \rel_4$.
There has to be an edge $(\rel_1, \rel_3) \in \E(\lin)$,
 contradicting the assumption that $(\rel_1, \rel_2, \rel_3, \rel_4)$ is
 a diamond.
\end{proof}

\begin{lemma}\label{lemma:distinct_variables}
Let $\T_{\rel}$ be the canonical join tree rooted at $\rel$
 of a Berge-acyclic hypergraph.
Along its root-to-leaf path $\rel, \rel_1 , \ldots, \rel_{k}$,
 each pair of adjacent vertices shares a distinct variable.
\end{lemma}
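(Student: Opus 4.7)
The plan is to argue by contradiction, using the combination of linearity (Proposition~\ref{prop:berge_is_no_composite_alpha}), the running intersection property, and the shortest-path characterization of the canonical tree (Theorem~\ref{thm:canonical_join_tree}).

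First, I would set up notation: since $\hyp$ is Berge-acyclic and therefore linear, every edge of the line graph $\lin(\hyp)$ is labeled by a single variable, so each consecutive pair $(\rel_{i-1}, \rel_i)$ along the root-to-leaf path (with $\rel_0 := \rel$) shares exactly one variable $\x_i \in \vars(\rel_{i-1}) \cap \vars(\rel_i)$. The goal becomes showing $\x_i \neq \x_j$ whenever $1 \leq i < j \leq k$.

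Next, I would suppose for contradiction that $\x_i = \x_j =: \x$ for some $i < j$. Then $\x$ appears in each of $\rel_{i-1}, \rel_i, \rel_{j-1}, \rel_j$. Because $\T_\rel$ is a join tree, the running intersection property forces $\T_\rel|_\x$ to be connected, and since the unique path in the tree between $\rel_{i-1}$ and $\rel_j$ is exactly $\rel_{i-1}, \rel_i, \ldots, \rel_j$, this entails $\x \in \vars(\rel_m)$ for all $i-1 \leq m \leq j$. In particular $\x \in \vars(\rel_{i-1}) \cap \vars(\rel_{i+1})$, so $(\rel_{i-1}, \rel_{i+1})$ is an edge of $\lin(\hyp)$.

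Finally, I would use this shortcut edge to contradict canonicity. The edge $(\rel_{i-1}, \rel_{i+1})$ yields a walk $\rel, \rel_1, \ldots, \rel_{i-1}, \rel_{i+1}$ in $\lin(\hyp)$ of length $i$, so $\dist_{\lin}(\rel, \rel_{i+1}) \leq i$. By Theorem~\ref{thm:canonical_join_tree}, however, $\T_\rel$ consists of the shortest paths from $\rel$ to every other vertex, and so $\depth(\T_\rel, \rel_{i+1}) = \dist_{\lin}(\rel, \rel_{i+1})$; since $\rel_{i+1}$ sits at position $i+1$ along the root-to-leaf path, this depth equals $i+1$, a contradiction.

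The main (minor) obstacle is being careful that the path on which we apply the running intersection property is genuinely $\rel_{i-1}, \rel_i, \ldots, \rel_j$—but this is immediate from uniqueness of paths in a tree. Everything else follows directly from the results already established in Section~\ref{section:cjt}.
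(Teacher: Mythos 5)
Your proposal is correct and uses the same ingredients as the paper's proof: linearity from \cref{prop:berge_is_no_composite_alpha}, the running intersection property, and the shortest-path characterization from \cref{thm:canonical_join_tree}. The only difference is organizational — the paper splits into consecutive versus non-consecutive repeated labels, whereas you use the running intersection property to propagate the repeated variable to three consecutive nodes and derive the shortcut edge uniformly, which is a slightly cleaner packaging of the same argument.
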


\begin{proof}
    Let $\rel_0 = \rel$,
\cref{prop:berge_is_no_composite_alpha} requires that 
$|\rel_{i-1} \cap \rel_i| = 1$ for all $i\in [k]$.
By~\cref{thm:canonical_join_tree}, the path $\rel_0, \ldots, \rel_k$ 
 is the shortest path from $\rel_0$ to $\rel_k$.
 The variable shared by two adjacent vertices $\rel_{i-1}, \rel_i$
 is the edge label $\rel_{i-1} \cap \rel_i$.
 The edge labels along the path are all distinct.
 Otherwise, two consecutive edges with the same label allow for
 shortening the path by removing either,
 contradicting the shortest path property.
 Two non-consecutive edges with the same label violate the 
 running intersection property.
\end{proof}

\noindent{\bf \cref{thm:berge_cjt}}
{\it 
Given a Berge-acyclic hypergraph $\hyp$,
    \cref{algo:MCS} from $\rel \in \R(\hyp)$ constructs the 
    canonical tree $\T_{\rel}(\hyp)$.
}

\begin{proof}
Denoting the edges labeled by \cref{algo:MCS} as
$\rel = \rel_1, \dots, \rel_{n = |\R(\hyp)|}$, 
when \cref{algo:MCS} has labeled the first $i \in [n]$ hyperedges, 
we consider a graph $\G_i$ whose vertices are the labeled hyperedges 
$\{\rel_1, \ldots, \rel_i\}$
and edges are the parent-child relationships 
$\{\{\rel_2, \parent(\rel_2)\}, \ldots, \{\rel_i, \parent(\rel_i)\}\}$.

To prove the claim by induction, we show a loop invariant that
 $\G_i$ is a subtree of $\T_\rel$ for all $i \in [n]$.
An empty graph is trivially a subtree of any graph.
When the root $\rel$ is labeled as $\rel_1$, 
this graph of a single vertex is also trivially a subtree of $\T_\rel$.
Assuming $\G_k$ is a subtree of $\T_\rel$,
 \cref{algo:MCS} proceeds to label $\rel_{k + 1}$ and assigns its parent
 $\parent(\rel_{k + 1}) = \rel_{p \leq k}$.
Suppose for the sake of contradiction that
 $\rel_{k+1}$ has a different parent $\rel_q$
 in $\T_\rel$.
Let us consider the possible relationships between
 $\rel_p$ and $\rel_q$.
First, $\rel_q$ cannot be a descendant of $\rel_p$,
 otherwise there would be a shorter path from $\rel_{k+1}$ to the root $\rel$
 going through $\rel_p$ instead of $\rel_q$,
 violating~\cref{thm:canonical_join_tree}.
Suppose $\rel_q$ is an ancestor of $\rel_p$.
By the running intersection property,
 $\rel_q \cap \rel_{k+1}$
 must contain $\rel_p \cap \rel_{k+1}$,
 but that would imply every vertex in $\rel_p \cap \rel_{k+1}$
 is already marked by $\rel_q$ and the algorithm would not have
 assigned $\rel_p$ as the parent of $\rel_{k+1}$.
Therefore, $\rel_p$ and $\rel_q$ are not descendants
 of each other.
Denote their lowest common ancestor in $\T_\rel$ as 
$\rel_a = \LCA(\rel_p, \rel_q)$.
Let $\LCAE(\rel_p, \rel_q) = \{\e_l, \e_r\}$ as shown by the solid lines
in \cref{fig:mcs_to_cjt_cases},
$\spath_p$ be the path from $\rel_a$ to $\rel_p$ excluding $\e_l$ and
$\spath_q$ be the path from $\rel_a$ to $\rel_q$ excluding $\e_r$ as
shown by the dashed lines in \cref{fig:mcs_to_cjt_cases}.

\begin{figure}[tb]
    \centering
    \includegraphics[height=4cm,keepaspectratio]{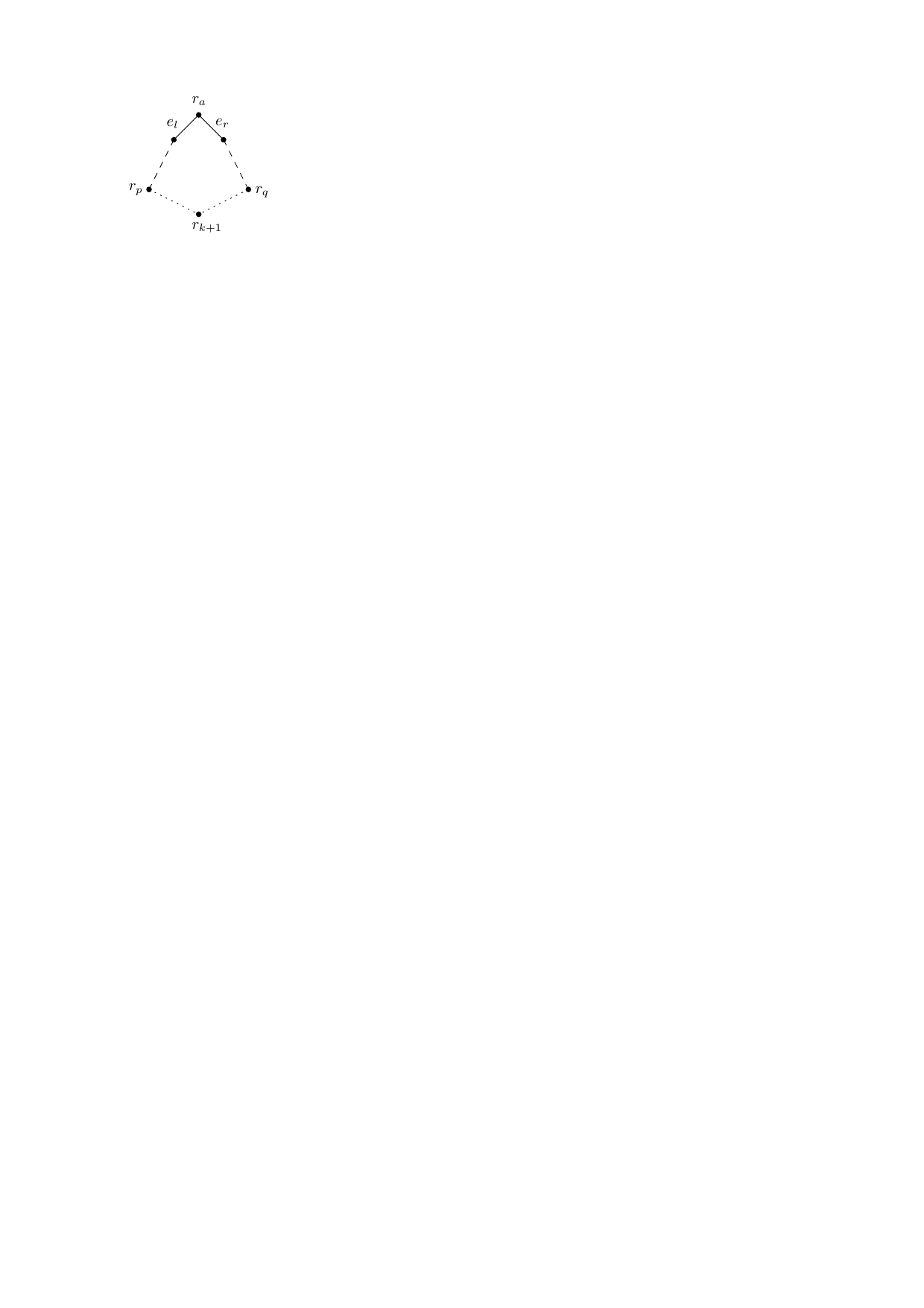}
    \caption{%
        Possible cases for how the relation $\rel_{k+1}$ attaches to the partial join tree $\G_k$
        during the construction of the canonical join tree by Maximum Cardinality Search.
    }
    \label{fig:mcs_to_cjt_cases}
\end{figure}

By~\cref{lemma:distinct_variables}, the variables on $\spath_p$ and $\spath_q$ are all distinct.
Consider the following possible cases where 
$\vars^l = \bigcap_{\rel \in \nodes(\e_l)}{\rel}$ 
and $\vars^r = \bigcap_{\rel \in \nodes(\e_r)}{\rel}$:
\begin{description}
    \item[Case 1:] If $\rel_{k+1} \cap \rel_p = \rel_{k+1} \cap \rel_q$
    and $\vars^l = \vars^r$,
    then $\spath_p, \spath_q$
    form a Berge-cycle.
    \item[Case 2:] If $\rel_{k+1} \cap \rel_p = \rel_{k+1} \cap \rel_q$
    and $\vars^l \neq \vars^r$,
    then $\spath_p, \rel_a, \spath_q$ form a Berge-cycle.
    \item[Case 3:] If $\rel_{k+1} \cap \rel_p \neq \rel_{k+1} \cap \rel_q$
    and $\vars^l = \vars^r$,
    then $\spath_p, \spath_q, \rel_{k+1}$ form a Berge-cycle.
    \item[Case 4:] If $\rel_{k+1} \cap \rel_p \neq \rel_{k+1} \cap \rel_q$
    and $\vars^l \neq \vars^r$,
    then $\spath_p, \rel_a, \spath_q, \rel_{k+1}$ form a Berge-cycle.
    
\end{description}
%
%
All possible cases above contradict Berge-acyclicity,
 therefore, \cref{algo:MCS} must correctly assign $\rel_p$ as the
 parent of $\rel_{k+1}$.
\end{proof}

\section{\texorpdfstring{Missing Algorithm and Proofs in \cref{section:frombinplan}}{Missing Proofs in Section Binary Plan conversion}}
\label{appendix:missing_proofs_frombinary}

\noindent{\bf \cref{thm:bin2tree}}
{\it 
A query is $\gamma$-acyclic
 if and only if every connected left-deep linear join plan
 for the query is the reverse of a GYO-reduction order.
}

We first prove ``only if'' direction of \cref{thm:bin2tree} by contradiction,
assuming a query plan with
 a $\gamma$-cycle that is not the reverse of a 
 GYO-reduction order.
%
\begin{proposition}\label{prop:orphan}
If a left-deep linear plan is not the reverse of a GYO-reduction order,
 then there exists a relation $\rel_i$ that has no parent among $\{\rel_{j<i}\}$,
 formally $\neg \exists \rel_{j<i} : (\rel_i \cap \bigcup \rel_{k < i}) \subseteq \rel_j$.
We call $\rel_i$ an {\em orphan}, and denote it as $\orphan{\rel_i}$.
\end{proposition}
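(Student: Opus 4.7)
The plan is to prove the contrapositive by directly unfolding the definition of a GYO-reduction order and translating it under reversal. The proposition is essentially a definitional unpacking, so the only real work is aligning indices carefully between a sequence and its reversal.

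First, I would spell out what it means for the plan $\rel_1, \rel_2, \ldots, \rel_n$ to be the reverse of a GYO-reduction order. Let $s_i \defeq \rel_{n+1-i}$, so that $s_1, s_2, \ldots, s_n$ is the candidate GYO-reduction order. By the definition given earlier, this sequence is a GYO-reduction order iff for every $i < n$ there exists $p > i$ such that $s_i \cap s_j \subseteq s_p$ for all $j > i$. Substituting $s_i = \rel_{n+1-i}$, set $k \defeq n+1-i$, $l \defeq n+1-p$, and $m \defeq n+1-j$. The correspondence $i < n \leftrightarrow k > 1$, $p > i \leftrightarrow l < k$, and $j > i \leftrightarrow m < k$ is a bijection, so the condition becomes: for every $k > 1$ there exists $l < k$ such that $\rel_k \cap \rel_m \subseteq \rel_l$ for all $m < k$. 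The universal quantifier over $m < k$ on the right-hand side is equivalent to the single containment $\bigl(\rel_k \cap \bigcup_{m<k} \rel_m\bigr) \subseteq \rel_l$, since a union is contained in a set iff each summand is.

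Second, I would take the contrapositive of this reformulated characterization. If the plan is not the reverse of a GYO-reduction order, then the universal statement over $k > 1$ fails: there exists some index $i > 1$ (the witnessing $k$) for which no $j < i$ satisfies $\bigl(\rel_i \cap \bigcup_{k<i} \rel_k\bigr) \subseteq \rel_j$. This is exactly the orphan condition $\orphan{\rel_i}$ as stated.

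There is no genuine obstacle; the entire argument is bookkeeping. The only point worth being careful about is that the index bijection $i \mapsto n+1-i$ converts ``some later relation'' in the forward GYO order into ``some earlier relation'' in the reversed plan, and that a collection of containments $\rel_i \cap \rel_m \subseteq \rel_j$ indexed by all $m < i$ collapses to the single containment involving $\bigcup_{m<i}\rel_m$. Once these two small observations are in place, the proposition follows immediately by negating an ``$\forall \exists \forall$'' statement in the standard way.
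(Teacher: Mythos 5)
Your proof is correct and matches the paper's treatment: the paper offers no explicit proof, stating only that the proposition ``directly follows from the definition of GYO-reduction order,'' and your index reversal plus the collapse of the family of containments into the single containment with $\bigcup_{m<i}\rel_m$ is precisely the bookkeeping being elided. Your restriction to $i>1$ also correctly handles the degenerate first position, which corresponds to the last hyperedge in the GYO order having no parent requirement.
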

We introduce an ordering on relations to compare them in a plan.
\begin{definition}
Given a relation $\rel$, we write $\rel_1 \geq_{\rel} \rel_2$ 
if $\rel_1 \cap \rel \supseteq \rel_2 \cap \rel$.
\end{definition}
We first observe that $\geq_{\rel}$ is a preorder (reflexive and transitive), 
but not a partial order (antisymmetry may fail).
Moreover, if there is a greatest element $\rel_p$
 with respect to $\geq_{\rel}$,
 i.e., $\forall r' : \rel_p \geq_{\rel} r'$,
 then $\rel_p$ is a parent of $\rel$.
Those observations lead to the following.
\begin{lemma}\label{lem:maximal}
Let $\hat{\rel}$ be a {\em maximal} relation with respect to $\geq_{\orphan{\rel}}$
 for an orphan $\orphan{\rel}$,
 meaning $\neg \exists r : r >_{\orphan{\rel}} \hat{\rel}$,
 then there is a relation $\bar{\rel}$ incomparable with $\hat{\rel}$,
 i.e., $\hat{\rel} \not\geq_{\orphan{\rel}} \bar{\rel}$ and $\bar{\rel} \not\geq_{\orphan{\rel}} \hat{\rel}$.
\end{lemma}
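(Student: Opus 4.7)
The plan is to exploit the gap between \emph{maximal} and \emph{greatest} in the preorder $\geq_{\orphan{\rel}}$: a greatest element would witness that $\orphan{\rel}$ has a parent, contradicting that it is an orphan; so any maximal $\hat{\rel}$ must fail to dominate some relation, and that relation will be the witness $\bar{\rel}$.

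First, I would argue that $\hat{\rel}$ cannot be a greatest element with respect to $\geq_{\orphan{\rel}}$. Indeed, if $\hat{\rel} \geq_{\orphan{\rel}} \rel_k$ for every relation $\rel_k$ preceding $\orphan{\rel}$ in the plan, then taking the union of the inclusions $\hat{\rel} \cap \orphan{\rel} \supseteq \rel_k \cap \orphan{\rel}$ would yield
\[
\hat{\rel} \cap \orphan{\rel} \;\supseteq\; \bigcup_{k} (\rel_k \cap \orphan{\rel}) \;=\; \orphan{\rel} \cap \bigcup_k \rel_k,
\]
which is exactly the statement that $\hat{\rel}$ is a parent of $\orphan{\rel}$, contradicting \cref{prop:orphan}.

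Second, from the negation of ``greatest'' I extract a relation $\rel'$ with $\hat{\rel} \not\geq_{\orphan{\rel}} \rel'$, i.e., $\rel' \cap \orphan{\rel} \not\subseteq \hat{\rel} \cap \orphan{\rel}$. To show $\rel'$ is also not below $\hat{\rel}$, I would argue by contradiction: if $\rel' \geq_{\orphan{\rel}} \hat{\rel}$ held, then $\rel' \cap \orphan{\rel} \supseteq \hat{\rel} \cap \orphan{\rel}$, and combined with the strict non-containment we would obtain $\rel' \cap \orphan{\rel} \supsetneq \hat{\rel} \cap \orphan{\rel}$, i.e.\ $\rel' >_{\orphan{\rel}} \hat{\rel}$, contradicting the maximality of $\hat{\rel}$. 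Hence $\bar{\rel} \defeq \rel'$ is incomparable with $\hat{\rel}$.

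The only subtlety, rather than an obstacle, is keeping the distinction between ``maximal'' (no strict majorant) and ``greatest'' (dominates everything) straight in a preorder: the two notions coincide only for totally preordered sets, and it is precisely the orphan property that forces this preorder to be non-total at $\hat{\rel}$. Everything else is a direct set-theoretic manipulation of intersections with $\orphan{\rel}$.
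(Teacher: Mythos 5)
Your proof is correct and follows essentially the same route as the paper's: the paper argues contrapositively that if every relation were comparable with $\hat{\rel}$, maximality would make it a greatest element and hence a parent of $\orphan{\rel}$, contradicting the orphan property. You merely unfold this into a forward argument and additionally spell out the "greatest element implies parent" step (via the union of intersections), which the paper states as an unproved observation just before the lemma.
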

\begin{proof}
If every relation is comparable with $\hat{\rel}$,
 the maximal relation $\hat{\rel}$ is the greatest element with respect to $\geq_{\orphan{\rel}}$,
 thus a parent of $\orphan{\rel}$.
 It contradicts the fact that $\orphan{\rel}$ is an orphan. 
\end{proof}
Given a pair of incomparable relations $\hat{\rel}$ and $\bar{\rel}$ with 
respect to $\orphan{\rel}$, we can find at least one variable in each relation
 that does not appear in the other relation but appears in $\orphan{\rel}$,
 namely $\exists \bar{\x} \in \bar{\rel} \cap \orphan{\rel} : \bar{\x} \notin \hat{\rel}$
 and $\exists \hat{\x} \in \hat{\rel} \cap \orphan{\rel} : \hat{\x} \notin \bar{\rel}$.
 We say that $\hat{\x}$ is a {\em dangling variable} of $\hat{\rel}$ and 
 $\bar{\x}$ is a {\em dangling variable} of $\bar{\rel}$.
We are now ready to prove \cref{thm:bin2tree}.
Our strategy is to derive a $\gamma$-cycle from
 any orphan relation, leading to a contradiction.

\begin{proof}[Proof of \cref{thm:bin2tree}]
To prove the ``only if'' direction by contradiction,
 we assume that a connected left-deep linear plan for a $\gamma$-acyclic query
 is not the reverse of a GYO-reduction order.
By~\cref{prop:orphan}, there is an orphan $\orphan{\rel}$ in the plan.
Assuming that $\orphan{\rel}$ is the $t$-th relation in the plan,
we now consider the prefix of the plan up to $\orphan{\rel}$, namely
$\prefix{t} = \rel_1, \ldots, \rel_{t-1}, \orphan{\rel}$.
Let $\hat{\rel} \in \prefix{t-1}$ be a maximal relation with respect to 
$\geq_{\orphan{\rel}}$,
 \cref{lem:maximal} guarantees an $\bar{\rel} \in \prefix{t-1}$ 
 incomparable with $\hat{\rel}$. 
 There are at least two dangling variables
 $\bar{\x} \in \bar{\rel} \cap \orphan{\rel} \setminus \hat{\rel}$ and
 $\hat{\x} \in \hat{\rel} \cap \orphan{\rel} \setminus \bar{\rel}$.

Because the shorter prefix $\prefix{t-1} = \rel_1, \ldots, \rel_{t-1}$ 
is connected, there is a path in $\prefix{t-1}$ between $\hat{\rel}$ and 
$\bar{\rel}$.
The incomparability guarantees that
$\bar{\rel} \cap \orphan{\rel} \not \subseteq \hat{\rel} \cap \orphan{\rel}$.
We can choose a shortest path in $\prefix{t-1}$ that connects a variable in 
$\hat{\rel} \cap \orphan{\rel}$ 
to a variable in $\orphan{\rel} \setminus \hat{\rel}$.
We now consider two exhaustive cases where each yields 
a $\gamma$-cycle:
\begin{description}
  \item[Case 1:] If the shortest path consists of a single relation $\tilde{\rel}$,
  $\tilde{\rel} \not >_{\orphan{\rel}} \hat{\rel}$ (because $\hat{\rel}$ is maximal)
  and $\tilde{\rel} \not \leq_{\orphan{\rel}} \hat{\rel}$ 
  (because $\tilde{\rel}$ contains some variable in $\orphan{\rel} \setminus \hat{\rel}$).
  Therefore, $\tilde{\rel}$ is incomparable with $\hat{\rel}$.
  Since $\tilde{\rel} \cap \hat{\rel} \neq \emptyset$,
  $\hat{\rel}, \hat{\x}, \orphan{\rel}, \tilde{\x}, \tilde{\rel}, \breve{\x}$ 
  form a $\gamma$-cycle
  where $\hat{\x} \in \hat{\rel} \cap \orphan{\rel}$,
  $\tilde{\x} \in \tilde{\rel} \cap \orphan{\rel}$, and $\breve{\x} \in \tilde{\rel} \cap \hat{\rel}$.
  \item[Case 2:] If the shortest path consists of at least two relations,
  let $\rel_0 = \orphan{\rel}$,
  and the path be $\rel_1, \ldots, \rel_{k-1}$.
  We now show that $\rel_0, \rel_1, \ldots, \rel_{k-1}$ form a {\em pure cycle} 
  where each $\rel_i$ is only adjacent
  to $\rel_{(i-1) \bmod k}$ and $\rel_{(i+1) \bmod k}$. 
  First, if a $\rel_{i \in [2, k-2]}$ on the path is adjacent to a 
  $\rel_{j \in [1, k-1]}$ 
  other than its neighbors $\rel_{i-1}$ or $\rel_{i+1}$,
  the path can be shortened, contradicting the shortest path assumption.
  Second, if the relation $\rel_{i \in [2, k-2]}$ is adjacent to $\rel_0$, 
  then $\rel_i$ must contain variables in 
  either $\rel_0 \cap \hat{\rel}$ or $\rel_0 \setminus \hat{\rel}$.
  We can shrink the path to $\rel_1, \ldots, \rel_i$ or
  $\rel_i, \rel_{k-1}$, which also contradicts the shortest path assumption.
  Therefore, each $\rel_{i \in [0, k-1]}$ on the cycle is only adjacent
  to $\rel_{(i-1) \bmod k}$ and $\rel_{(i+1) \bmod k}$.
  The cycle $\rel_0, \rel_1, \ldots, \rel_{k-1}$ is a pure cycle
  of length $k \geq 3$.
  A pure cycle is always a $\gamma$-cycle.
\end{description}
This concludes the proof for the ``only if'' direction.

Now we prove the ``if'' direction by contrapositive,
 namely that if a query contains a $\gamma$-cycle,
 then there exists a connected left-deep linear plan
 that is not the reverse of a GYO-reduction order.
Let the $\gamma$-cycle of length $k \geq 3$ be
$\rel_0, \ldots, \rel_i, \ldots, \rel_{k-1}$ where $\rel_{i \in [1, k-2]}$.
By definition, there is an $x_{i-1}$ appearing exclusively in
 $\rel_{i-1}$ and $\rel_i$ 
 and an $x_{i}$ appearing exclusively in $\rel_i$ and $\rel_{i+1}$.
We modify the original plan by moving $\rel_i$ after $\rel_{k-1}$
such that 
$\rel_0, \ldots, \rel_{i-1}, \rel_{i+1}, \ldots, \rel_{k-1}, \rel_i$
becomes part of the new query plan.

The prefix $\prefix{k - 1} = \rel_0, \ldots, \rel_{i-1}, \rel_{i+1}, \ldots, \rel_{k-1}$
is connected. Therefore, the new query plan is connected.
However, $\rel_i$ has no parent,
because no relation in $\prefix{k - 1}$ contains both $x_{i-1}$ and $x_{i}$. This implies the plan is not the reverse of a GYO-reduction order.
\end{proof}

\section{Notation Table}\label{appendix:additional_fig_tab_algo}

\begin{table}[t]
\centering
\small
\setlength{\tabcolsep}{5pt}
\renewcommand{\arraystretch}{1.2}
\begin{tabular}{p{0.24\linewidth} p{0.71\linewidth}}
\hline
\textbf{Variable} & \textbf{Definition}\\
\hline
$\hyp, \X, \R$ & hypergraph, vertex set, hyperedge set \\
$\vars:\R\to 2^{\X}$ & Incidence function mapping each hyperedge to its vertices.\\
$\X(\hyp), \R(\hyp)$ & Vertex set and edge set of $\hyp$.\\
$|\hyp|$ & Size of hypergraph: $|\hyp|=\sum_{\rel\in\R}|\rel|$.\\
$\hyp|_{\x}$ & Neighborhood of $\x$ in $\hyp$: set of hyperedges containing $\x$.\\
$\hyp^*, \X^*, \vars^*$ & Equivalent hypergraph, its vertices, and incidence function \\
$\func$ & A hypergraph homomorphism (pair $(\func_{\X},\func_{\R})$).\\
$\hyp_1 \rightarrow \hyp_2, \hyp_1 \twoheadrightarrow \hyp_2$ & Homomorphism from $\hyp_1$ to $\hyp_2$, strong homomorphism from $\hyp_1$ to $\hyp_2$.\\
\hline
$\G, \R, \E$ & (Multi-)graph, nodes (corresponding to hyperedges), edges\\
$\nodes:\E\to 2^{\X}, \w: \E\to \mathbb{N}$ & Incidence function mapping each edge to its endpoints, weight function\\
$\R(\G), \E(\G)$ & Vertex set and edge set of $\G$.\\
$\EG, \nodesEG, \T^\equiv$ & Equivalent graph (Def.~\ref{def:equivalent_graph}), its incident function, and a spanning tree\\
$\G \slide{\T} \G'$ & $\G$ can be slid into $\G'$ (Def.~\ref{def:sliding_transformation}).\\
\hline
$\lin(\hyp), \lin, \vars(\e), \w(\e)$ & Line graph of $\hyp$, edge label $\vars(\{\rel_1, \rel_2\}) = \rel_1\cap\rel_2$, weight $\w(\e) = |\vars(\e)|$.\\
$|\lin|$ & Size of line graph: $|\lin|=\sum_{\e\in\E(\lin)}\w(\e)$.\\
$\G|_{\x}$ & Subgraph of $\G \subseteq \lin(\hyp)$  induced by $\hyp|_{\x}$.\\
\hline
$\T^G, \T^M$  & Join tree generated by GYO and MCS, resp.\\
$\T, \T_{\rel}$ & Unrooted and rooted (at $\rel$) join tree.\\
$\MWJT$, $\MWJT_{\rel}$ & A monotonic weight join tree (Def.~\ref{def:mono_weight_tree}), one rooted at $\rel$.\\
$\JTof{\lin(\hyp)}$, $\JTof{\hyp}$ & Set of (unrooted) join trees of $\lin(\hyp)$ / $\hyp$.\\
$\depth(\T_{\rel},\rel_i)$ & Depth of node $\rel_i$ in rooted tree $\T_{\rel}$.\\
$\rel_p$ & The parent of $\rel_i$ in a GYO reduction order.\\
$\children(\rel), \parent(\e), \siblings(\e)$ & Children nodes of $\rel$, parent edge of $\e$, sibling edges of $\e$\\

$\LCA(\rel_i,\rel_j), \LA(\cdot)$ & Lowest common ancestor and level-ancestor.\\
$\LCAE(\e)$ & LCA edges of a non-tree edge $\e=(\rel_i,\rel_j)$.\\
$\Delta$ & Number of duplications needed to enforce monotonicity.\\


\hline
\end{tabular}
\caption{Notation introduced in sections \ref{section:preliminaries}--\ref{section:enum}.}
\label{tab:notation-prelims-enum}
\end{table}









\end{document}